\newcommand{\perm}{\operatorname{perm}}
\theoremstyle{plain}
\theoremstyle{definition}
\DeclareFontFamily{U}{bigshuffle}{}
\DeclareFontShape{U}{bigshuffle}{m}{n}{
  <5-8> s*[1.7] shuffle7
  <8->  s*[1.7] shuffle10
}{}
\DeclareSymbolFont{BigShuffle}{U}{bigshuffle}{m}{n}
\DeclareMathSymbol\bigshuffle{\mathop}{BigShuffle}{"001}
\DeclareMathSymbol\bigcshuffle{\mathop}{BigShuffle}{"002}
\newcommand{\suff}{\operatorname{Suff}}
\newcommand{\factor}{\operatorname{Fact}}
\newcommand{\pref}{\operatorname{Pref}}
\newcommand{\NL}{\textsf{NL}}
\newcommand{\NP}{\textsf{NP}}
\newcommand{\PSPACE}{\textsf{PSPACE}}
\newcommand{\PTIME}{\textsf{P}}
\newcounter{problemcounter}
\newcommand{\problemtitle}[1]{\gdef\@problemtitle{#1}}
\newcommand{\probleminput}[1]{\gdef\@probleminput{#1}}
\newcommand{\problemquestion}[1]{\gdef\@problemquestion{#1}}
  \par\addvspace{.5\baselineskip}
  \par\addvspace{.5\baselineskip}
 \newenvironment{myclaiminproof}[1]{\medskip\par\noindent\underline{Claim:}\space#1}{}
 \newenvironment{myclaimproof}[1]{\begin{quote}\par\noindent\emph{Proof of the Claim:}\space#1}{[\emph{End, Proof of the Claim}]\end{quote}}
\begin{document}

\title{Computational Complexity of Synchronization under Sparse Regular Constraints}
\titlerunning{Synchronization under Sparse Regular Constraints}

%
%
\author{Stefan Hoffmann\orcidID{0000-0002-7866-075X}}
\authorrunning{S. Hoffmann}
%
\institute{Informatikwissenschaften, FB IV, 
  Universit\"at Trier, Germany, 
  \email{hoffmanns@informatik.uni-trier.de}}
\maketitle              
\begin{abstract}
 The constrained synchronization problem (CSP) asks
 for a synchronizing word of a given input automaton
 contained in a regular set of constraints. It could be viewed
 as a special case of synchronization of a discrete event system
 under supervisory control.
 Here, we study the computational complexity of this 
 problem for the class of sparse regular constraint languages.
 We give a new characterization of sparse regular sets, which
 equal the bounded regular sets,
 and derive a full classification of the computational complexity
 of CSP for letter-bounded regular constraint
 languages, which properly
 contain the strictly bounded regular languages.
 Then, we introduce strongly self-synchronizing codes
 and investigate CSP for  bounded languages induced by these codes.
 With our previous result, we deduce a full classification
 for these languages as well.
 In both cases, depending on the constraint language, our problem
 becomes $\NP$-complete or polynomial time solvable.
 
 %
\keywords{automata theory \and constrained synchronization \and computational complexity \and sparse languages \and bounded languages \and strongly self-synchronizing codes} 
\end{abstract}
%
%
%




\section{Introduction} 
\label{sec:introduction}

A deterministic semi-automaton is \emph{synchronizing} if it admits a reset word, i.e., a word which leads to a definite
state, regardless of the starting state. This notion has a wide range of applications, from software testing, circuit synthesis, communication engineering and the like, see~\cite{DBLP:journals/et/ChoJSP93,San2005,Vol2008}.  
The famous \v{C}ern\'y conjecture \cite{Cer64}
states that a minimal synchronizing word, for an $n$ state automaton, has length
at most $(n-1)^2$. 
We refer to the mentioned survey articles for details~\cite{San2005,Vol2008}.

Due to its importance, the notion of synchronization has undergone a range of generalizations and variations
for other automata models.
The paper~\cite{DBLP:conf/mfcs/FernauGHHVW19} introduced the constrained synchronization problem (CSP\footnote{In computer science the
acronym CSP is usually used for the constraint satisfaction problem~\cite{Lecoutre09}. However, as here we are not concerned with
constrained satisfaction problems at all, no confusion should arise.}). 
In this problem, we search for a synchronizing word coming from a specific subset of allowed
input sequences. 
To sketch a few applications:

%


\begin{description}

\item[Reset State.] 

In~\cite{DBLP:conf/mfcs/FernauGHHVW19} one motivating example was the demand that a system, or automaton thereof, to synchronize has to first enter a ``directing'' mode, perform a sequence of
operations, and then has to leave this operating mode and enter the ``normal
operating mode'' again. In the most simple case, this constraint
could be modelled by $ab^*a$, which, as it turns out~\cite{DBLP:conf/mfcs/FernauGHHVW19},
yields an \NP-complete CSP.
Even more generally, it might be possible that a system -- a remotely controlled
rover on a distant planet, a satellite in orbit, or a lost autonomous vehicle
-- is not allowed to execute all commands in every possible
order, but  certain commands are only allowed in certain order or after
other commands have been executed. All of this imposes constraints
on the possible reset sequences.

\item[Part Orienters.] Suppose parts arrive at a manufacturing site and they
need to be sorted and oriented before assembly. Practical considerations
favor methods which require little or no sensing, employ simple devices,
and are as robust as possible. This could be achieved as follows.
We put parts to be oriented on a conveyor belt which takes them to the assembly
point and let the stream of the parts encounter a series of passive obstacles placed along
the belt. Much research on synchronizing automata was motivated
by this application~\cite{DBLP:journals/algorithmica/ChenI95,DBLP:journals/siamcomp/Eppstein90,DBLP:journals/trob/ErdmannM88,DBLP:journals/algorithmica/Goldberg93,DBLP:conf/focs/Natarajan86,DBLP:journals/ijrr/Natarajan89,Vol2008}
and I refer to~\cite{Vol2008} for
an illustrative example. 
Now, furthermore, assume the passive components could not be placed at random along
the belt, but have to obey some restrictions, or restrictions
in what order they are allowed to happen. These could be due
to the availability of components, requirements how to lay things out
or physical restrictions. 

\item[Supervisory Control.] The CSP
 could also be viewed of as 
 supervisory control
 of a discrete event system (DES) that is given by an automaton
 and whose event sequence is modelled by a formal language~\cite{DBLP:books/daglib/0034521,RamadgeWonham87,wonham2019}.
 In this framework, a DES has a set of controllable
 and uncontrollable events.
 Dependent
 on the event sequence that occurred so far,
 the supervisor is able to restrict the set of events
 that are possible in the next step, where, however,
 he can only limit the use of controllable events.
 So, if we want to (globally) reset a finite state DES~\cite{Alves2020} under supervisory control, 
 this is equivalent to CSP.

\item[Biocomputing.] In~\cite{Benenson2003,Benenson2001} DNA molecules have been used as both
hardware and software for finite automata of nanoscaling size, see also~\cite{Vol2008}. 
For instance, Benenson et al~\cite{Benenson2003} produced ``a `soup of automata', that is, a solution
containing $3 \times 10^{12}$ identical automata per $\mu1$. All these molecular
automata can work in parallel on different inputs, thus ending up in different
and unpredictable states. In contrast to an electronic computer, one cannot
reset such a system by just pressing a button; instead, in order to synchronously bring
each automaton to its start state, one should spice the soup with (sufficiently
many copies of) a DNA molecule whose nucleotide sequences encodes a reset word''~\cite{Vol2008}.
Now, it might be possible that certain sequences, or subsequences,
are not possible as they might have unwanted biological side-effects, 
or might destroy the molecules at all.

\item[Reduction Procedure.] 
 This example is more formal and comes from attempts to solve the \v{C}ern\'y conjecture~\cite{Vol2008}.
 In~\cite{Gusev2012} a special rank factorization~\cite{piziak99}
 for automata was introduced from which
 smaller automata could be derived.
 Then, it was shown that the original automaton
 is synchronizing if and only if the reduced automaton
 admits a synchronizing word in a certain regular constraint language,
 and the reset threshold, i.e, the lengths of the shortest
 synchronizing word, of the original automaton
 could be bounded by that of the shortest one in the constraint
 language for the reduced automaton.
 
\end{description}


%
%
%

In~\cite{DBLP:conf/mfcs/FernauGHHVW19}, a complete analysis of the complexity landscape when the constraint language is given by small partial automata was done. It is natural to extend this result to other language classes.

In general there exist constraint languages yielding 
\PSPACE-complete constrained problems~\cite{DBLP:conf/mfcs/FernauGHHVW19}. 
A language is polycyclic~\cite{DBLP:conf/ictcs/Hoffmann20}, if it is recognizable by an automaton
such that every strongly connected component forms a single cycle,
and a language is sparse~\cite{DBLP:reference/hfl/Yu97} if only polynomially many words of a specific length are in the language.
As shown in~\cite{DBLP:conf/ictcs/Hoffmann20}
for polycyclic languages, which, as we show, equal the sparse regular
languages, the problem is always in \NP. This motivates investigating this class further. 
Also, as written in more detail in Remark~\ref{rem:motivation_strictly_bounded},
a subclass of these languages has a close relation to the commutative languages, and as for commutative constraint
languages a trichotomy result has been established~\cite{DBLP:conf/cocoon/Hoffmann20},
tackling the sparse languages seems to be the next logical step.
In fact, we show a dichotomy result for a subclass
that contains the class corresponding to the commutative languages.
Additionally, as has been noted in~\cite{DBLP:conf/mfcs/FernauGHHVW19},
the constraint language $ab^*a$ is the smallest language, in terms of a recognizing
automaton, giving an \NP-complete CSP. The class of languages
for which our dichotomy holds true contains this language.



Let us mention that restricting the solution space by a regular language
has also been applied in other areas, for example to topological sorting~\cite{DBLP:conf/icalp/AmarilliP18},
solving word equations~\cite{Diekert98TR,DBLP:journals/iandc/DiekertGH05}, constraint programming~\cite{DBLP:conf/cp/Pesant04}, or
shortest path problems~\cite{DBLP:journals/ipl/Romeuf88}.
The famous road coloring theorem~\cite{adler1970similarity,Trahtman09} states
that every finite strongly connected and directed aperiodic graph of uniform out-degree admits a labelling of its edges 
such that a synchronizing
automaton results. A related problem to our problem of constrained synchronization is to restrict the possible labelling(s), and
this problem was investigated in~\cite{DBLP:journals/jcss/VorelR19}.

\paragraph{Outline and Contribution.} 
Here, we look at the complexity landscape for sparse regular constraint
languages.
In Section~\ref{sec:sparse}
 we introduce the sparse languages and show that the regular sparse
 languages are characterized by
 polycyclic automata introduced in~\cite{DBLP:conf/ictcs/Hoffmann20}.
 A similar characterization in terms of non-deterministic
 automata was already given in~\cite[Lemma 2]{DBLP:journals/ijfcs/GawrychowskiKRS10}.
 In this sense, we extend this characterization to the deterministic
 case. As for polycyclic constraint automata the constrained
 problem is always in \NP, see~\cite[Theorem 2]{DBLP:conf/ictcs/Hoffmann20},
 we can deduce the same for sparse regular constraint languages, which
 equal the bounded regular languages~\cite{DBLP:journals/eik/LatteuxT84}.

In Section~\ref{sec:strictly_bounded_case}
we introduce the letter-bounded languages, a proper subset of the sparse languages,
and show that for letter-bounded  constraint languages, the constrained
synchronization problem is either in \PTIME\ or \NP-complete.

 The difficulty why we cannot handle the general case yet lies in the fact that
 in the reductions, 
 in the general case, we need auxiliary states and it is not clear
 how to handle them properly, i.e, 
 how to synchronize them properly while staying inside the constraint language.

 In Section~\ref{sec:strongly_self_sync}
 we introduce the class of strongly self-synchronizing
 codes. 
 The strongly self-synchronizing codes allow us to handle these auxiliary states mentioned before.
 We show that for homomorphisms given by such codes,
 the constrained problem for the homomorphic image of a language has the same computational
 complexity as for the original language.
 This result holds in general, and hence is of independent interest.
 Here we apply it to the special case
 of bounded, or sparse, regular languages given by such codes.

 Lastly, we present a bounded language giving an \NP-complete constrained
 problem that could not be handled by our methods so far.

\section{Preliminaries and Definitions}
\label{sec:preliminaries}



We assume the reader to have some basic knowledge in computational complexity theory and formal language theory, as contained, e.g., in~\cite{HopUll79}. For instance, we make use of  regular expressions to describe languages.
By $\Sigma$ we denote the \emph{alphabet}, a finite set.
For a word $w \in \Sigma^*$ we denote by $|w|$ its \emph{length},
and, for a symbol $x \in \Sigma$, we write $|w|_x$ to denote the \emph{number of occurrences of $x$}
in the word. We denote the empty word, i.e., the word of length zero, by $\varepsilon$.
For $L \subseteq \Sigma^*$, we set $\pref(L) = \{ u \in \Sigma^* \mid \exists v \in \Sigma^* : uv \in L \}$.
A word $u \in \Sigma^*$ is a \emph{factor} (or \emph{infix}) of $w \in \Sigma^*$ if there exist words $x,y \in \Sigma^*$ such that $w = xuy$. 
For $U, V \subseteq \Sigma^*$, we set $U\cdot V = UV = \{ uv \mid u \in U, v \in V \}$
and 
$U^0 = \{ \varepsilon \}$, $U^{i+1} = U^i U$, 
and $U^* = \bigcup_{i \ge 0} U^i$ and $U^+ = \bigcup_{i > 0} U^i$.
We also make use of complexity classes like $\PTIME$, $\NP$, or $\PSPACE$.
With  $\leq^{\log}_m$ we denote a logspace many-one reduction.
If for two problems $L_1, L_2$ it holds that $L_1 \leq^{\log}_m L_2$ and $L_2 \leq^{\log}_m L_1$, then we write $L_1 \equiv^{\log}_m L_2$.

A \emph{partial deterministic finite automaton (PDFA)} is a tuple $\mathcal A = (\Sigma, Q, \delta, q_0, F)$,
where $\Sigma$ is a finite set of \emph{input symbols},~$Q$ is the finite \emph{state set}, $q_0 \in Q$ the \emph{start state}, $F \subseteq Q$ the \emph{final state set} and $\delta \colon Q\times \Sigma \rightharpoonup Q$ the \emph{partial transition function}.
The \emph{partial transition function} $\delta \colon Q\times \Sigma \rightharpoonup Q$ extends to words from $\Sigma^*$ in the usual way. 
Furthermore, for $S \subseteq Q$ and $w \in \Sigma^*$, we set $\delta(S, w) = \{\,\delta(q, w) \mid \mbox{$\delta(q,w)$ is defined and } q \in S\,\}$.
We call $\mathcal A$ \emph{complete} if~$\delta$ is defined for every $(q,a)\in Q \times \Sigma$.
If $|\Sigma| = 1$, we call $\mathcal A$ a \emph{unary automaton} and
 $L \subseteq \Sigma^*$ is also called a \emph{unary language}.
The set $L(\mathcal A) = \{\, w \in \Sigma^* \mid \delta(q_0, w) \in F\,\}$ denotes the language
\emph{recognized} 
by~$\mathcal A$.

A \emph{deterministic and complete semi-automaton (DCSA)} $\mathcal A = (\Sigma, Q, \delta)$
is a deterministic and complete finite automaton without a specified start state
and with no specified set of final states.
When the context is clear, we call both deterministic finite automata and semi-automata simply \emph{automata}.


A complete automaton $\mathcal A$ is called \emph{synchronizing} if there exists a word $w \in \Sigma^*$ with $|\delta(Q, w)| = 1$. In this case, we call $w$ a \emph{synchronizing word} for $\mathcal A$.
We call a state $q\in Q$ with $\delta(Q, w)=\{q\}$ for some $w\in \Sigma^*$ a \emph{synchronizing state}.
For a semi-automaton (or PDFA) with state set $Q$ and transition function $\delta : Q \times \Sigma \rightharpoonup Q$,
a state $q$ is called a \emph{sink state}, if for all $x \in \Sigma$ we have $\delta(q,x) = q$.
Note that, if a synchronizing automaton has a sink state, then the
synchronizing state is unique and must equal the sink state.

In~\cite{DBLP:conf/mfcs/FernauGHHVW19} the \emph{constrained synchronization problem (CSP)} 
was defined for a fixed PDFA
$\mathcal B = (\Sigma, P, \mu, p_0, F)$. 

\begin{decproblem}\label{def:problem_L-constr_Sync}
  \problemtitle{\cite{DBLP:conf/mfcs/FernauGHHVW19}~\textsc{$L(\mathcal B)$-Constr-Sync}}
  \probleminput{DCSA $\mathcal A = (\Sigma, Q, \delta)$.}
  \problemquestion{Is there a synchronizing word $w \in \Sigma^*$ for $\mathcal A$ with  $w \in L(\mathcal B)$?}
\end{decproblem}

The automaton $\mathcal B$ will be called the \emph{constraint automaton}.
If an automaton~$\mathcal A$ is a yes-instance of \textsc{$L(\mathcal B)$-Constr-Sync} we call $\mathcal A$ \emph{synchronizing with respect to~$\mathcal{B}$}. 
Occasionally,
we do not specify $\mathcal{B}$ and rather talk about \textsc{$L$-Constr-Sync}.
For example, for the unconstrained case, 
we have $\Sigma^*\textsc{-Constr-Sync}\in \PTIME$~\cite{Cer64,Vol2008}.

\begin{toappendix}
The following obvious remark, stating that the set of synchronizing words
is a two-sided ideal, will be used frequently without further mentioning.

\begin{lemma}
	\label{lem:append_sync} 
	Let $\mathcal A = (\Sigma, Q, \delta)$ be a deterministic and complete semi-automaton and $w\in \Sigma^*$ be a synchronizing word for $\mathcal A$. Then for every $u, v \in \Sigma^*$, the word $uwv$ is also synchronizing. 
\end{lemma}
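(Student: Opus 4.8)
The plan is to exploit two elementary properties of the set-valued transition map $S \mapsto \delta(S, x)$: first, it is monotone with respect to inclusion, i.e.\ $S \subseteq T$ implies $\delta(S, x) \subseteq \delta(T, x)$ for every $x \in \Sigma^*$ (this follows directly from the definition of $\delta(S, \cdot)$ as the union of pointwise images); and second, since $\mathcal A$ is complete, $\delta(q, x)$ is defined for every state $q$ and every word $x$, so $\delta(S, x)$ is nonempty whenever $S$ is nonempty. I would also use the compositional identity $\delta(S, xy) = \delta(\delta(S, x), y)$, which again is immediate from the extension of $\delta$ to words.

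With these in hand, the argument is a short chain. Writing $\delta(Q, w) = \{q\}$ for the (unique) synchronizing state reached by $w$, I would first consider the prefix $uw$. Since $\delta(Q, u) \subseteq Q$, monotonicity gives $\delta(Q, uw) = \delta(\delta(Q, u), w) \subseteq \delta(Q, w) = \{q\}$; and because $Q \neq \emptyset$ together with completeness forces $\delta(Q, u)$, and hence $\delta(\delta(Q, u), w)$, to be nonempty, we conclude $\delta(Q, uw) = \{q\}$. Appending $v$ then yields $\delta(Q, uwv) = \delta(\{q\}, v) = \{\delta(q, v)\}$, a singleton, so $uwv$ is synchronizing.

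There is no genuine obstacle in this lemma; the only point requiring care is not to overlook completeness, which is precisely what prevents $\delta(Q, u)$ from collapsing to the empty set and thereby guarantees that the inclusion $\delta(Q, uw) \subseteq \{q\}$ is in fact an equality rather than a possibly vacuous containment. Everything else is bookkeeping on the monotone, composable set-transition function.
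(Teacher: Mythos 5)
Your proof is correct. The paper itself states this lemma as an ``obvious remark'' and gives no proof at all, and your argument---monotonicity and compositionality of the set-valued transition map $S \mapsto \delta(S,x)$, together with nonemptiness guaranteed by completeness---is precisely the standard reasoning the author leaves implicit, so there is nothing to add or compare.
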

\end{toappendix}

In our $\NP$-hardness reduction, we will need the following problem
from~\cite{DBLP:conf/ictcs/Hoffmann20}.

\begin{decproblem}\label{def:unary_set_transpoer}
  \problemtitle{\textsc{DisjointSetTransporter}}
  \probleminput{DCSA $\mathcal A = (\Sigma, Q, \delta)$ and disjoint $S, T \subseteq Q$.}
  \problemquestion{Is there a word $w \in \Sigma^*$ such that $\delta(S, w) \subseteq T$?}
\end{decproblem}


\begin{theorem}
\label{prop:set_transporter_np_complete}
 For unary deterministic and complete input semi-automata the problem \textsc{DisjointSetTransporter}
 is $\NP$-complete.
\end{theorem}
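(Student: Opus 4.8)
The plan is to establish the two directions separately, membership in $\NP$ and $\NP$-hardness, exploiting that a unary DCSA $\mathcal A = (\{a\}, Q, \delta)$ is nothing but a function $f \colon Q \to Q$, $f(q) = \delta(q,a)$, whose functional graph splits each connected component into a single cycle with in-trees attached. For a state $q$ write $t_q$ for its tail length (steps to first reach its cycle) and $\ell_q$ for the cycle length. The instance asks whether some $k \ge 0$ satisfies $f^k(S) \subseteq T$, equivalently whether $\bigcap_{s\in S} K_s \ne \emptyset$, where $K_s = \{\, k \ge 0 : f^k(s) \in T \,\}$.

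For membership I first note that each $K_s$ is \emph{eventually periodic}: once $k \ge t_s$ the state $f^k(s)$ runs along the cycle of $s$, so on $[t_s,\infty)$ the set $K_s$ coincides with a periodic set of period $\ell_s$. Hence $\bigcap_{s\in S} K_s$ is eventually periodic with pre-period at most $\max_{s} t_s \le |Q|$ and period dividing $P = \lcm\{\ell_s : s \in S\}$. If nonempty it contains an element below $|Q| + P$, and since $P \le \lcm(1,2,\ldots,|Q|) = 2^{O(|Q|)}$ by the Chebyshev estimate, such a witness $k$ has only $O(|Q|)$ bits. The certificate is this $k$ in binary; to check $f^k(s)\in T$ for all $s$ in polynomial time I precompute the powers $f^{2^0}, f^{2^1}, \ldots$ as transition tables by repeated squaring and compose those selected by the bits of $k$. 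This places the problem in $\NP$. (Alternatively one may observe that these automata are polycyclic and invoke the $\NP$ upper bound for polycyclic CSP, but the bound above is self-contained.)

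For hardness I reduce from $3$-$\mathrm{SAT}$ through the Chinese Remainder Theorem; the idea is that one independent cycle per clause lets me realise each $K_s$ as an arbitrary union of arithmetic progressions to a chosen modulus. Given a formula over $x_1,\ldots,x_n$, I assign to $x_i$ the $i$-th prime $p_i$ (so $p_i = O(n\log n)$) and read off from any $k$ the assignment $A_k(x_i) = [\,k \bmod p_i \ne 0\,]$; by CRT every assignment arises from some $k \ge 1$, and conversely every $k$ induces one. For a clause $C_j$ on the primes $p_{i_1},p_{i_2},p_{i_3}$ I create one fresh start state $s_j \in S$ whose single $a$-transition enters a disjoint cycle of length $M_j = p_{i_1}p_{i_2}p_{i_3}$, and I put into $T$ exactly those cycle positions $r$ for which $C_j$ is satisfied under the assignment read from the residues of $r+1$ modulo the three primes. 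Since each $p_{i_\bullet}$ divides $M_j$, one checks that for $k \ge 1$ the membership $f^k(s_j) \in T$ holds precisely when $C_j$ is true under $A_k$, and consistency of the induced assignment across clauses is automatic because a single $k$ drives all cycles at once. Thus $\bigcap_j K_{s_j} \ne \emptyset$ iff the formula is satisfiable; the moduli $M_j$ are polynomially bounded, so the automaton has polynomially many states and is computable in logarithmic space.

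The main care is bookkeeping rather than a deep obstacle: routing each start state through a length-one tail guarantees $S \cap T = \emptyset$ and the off-by-one shift aligning cycle positions with the residues defining $A_k$; checking that the first $n$ primes and all $T$-memberships stay within the resource bound; and, on the upper-bound side, the estimate $\lcm(1,\ldots,|Q|) = 2^{O(|Q|)}$ that keeps the witness short. I expect the genuinely delicate point to be arguing membership in $\NP$ in the face of exponentially large minimal solutions — the hardness direction is then a direct CRT encoding.
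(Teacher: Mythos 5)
Your proof is correct, but a direct comparison is complicated by the fact that the paper contains no proof of this theorem at all: it is imported as a known result from \cite{DBLP:conf/ictcs/Hoffmann20}, and the route taken there (mirrored by this paper's own hardness arguments, e.g.\ the proof of Lemma~\ref{lem:np_hardness}) is a black-box reduction from \textsc{DFA-Intersection} for unary automata, \NP-complete by Stockmeyer and Meyer: take the disjoint union of the input DFAs, let $S$ be the set of start states and $T$ the union of the final-state sets, and observe that a common accepted word $a^k$ is exactly a transporting word. Your argument instead inlines the number theory that underlies that classical result: you encode 3-SAT directly, with one clause gadget consisting of a length-one tail feeding a cycle of length $p_{i_1}p_{i_2}p_{i_3}$, and the Chinese Remainder Theorem guaranteeing both that every assignment is realized by some exponent $k\ge 1$ and that a single $k$ induces a consistent assignment across all gadgets; the delicate points (the off-by-one shift $r+1\equiv k$, disjointness of $S$ and $T$ via the tails, polynomial bounds on the moduli) are all handled correctly. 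Your \NP\ upper bound is likewise sound and self-contained: eventual periodicity of each $K_s$ gives a witness below $|Q|+\lcm\{\ell_s\}\le |Q|+2^{O(|Q|)}$, hence of $O(|Q|)$ bits, verifiable by repeated squaring of the transition table. What each approach buys: the reduction from unary DFA intersection is essentially three lines but outsources the number theory; yours is longer but self-contained and makes the short-witness bound explicit. One minor caveat: your parenthetical alternative for membership (invoking the \NP\ bound for polycyclic constrained synchronization) does not literally apply, since \textsc{DisjointSetTransporter} is not an instance of the constrained synchronization problem; fortunately your main argument never uses it.
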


A PDFA $\mathcal A=(\Sigma, Q, \delta, q_0, F)$
is called \emph{polycyclic}, if for each $q \in Q$
there exists $u \in \Sigma^*$ such that
$\{ w \in \Sigma^* \mid \delta(q, w) = q \} \subseteq u^*$. 
A PDFA is polycyclic if and only if every strongly connected
component consists of a single cycle~\cite[Proposition 3]{DBLP:conf/ictcs/Hoffmann20}, where each transition in the cycle is labelled by precisely one
letter. 
Formally, 
for each strongly connected component $S \subseteq Q$ 
and $q \in S$, we
have\footnote{In~\cite{DBLP:conf/ictcs/Hoffmann20}, I made an error in my formalization
by writing
$|\{ \delta(q, x) : x \in \Sigma, \delta(q, x) \mbox{ is defined } \} \cap S| \le 1$.} 
$|\{ x : x \in \Sigma \mbox{ and } \delta(q, x) \mbox{ is defined and in $S$} \}| \le 1$ (note that in the special case $|S| = 1$, the aforementioned set might be empty if the single state in $S$ has no self-loops).
A precursor of this characterization of polycyclic automata in
a special case was given in~\cite{DBLP:conf/stacs/GanardiHKLM18}
under the term \emph{linear cycle automata}.

\begin{toappendix}

\begin{example} A few examples and non-examples of polycyclic automata 
(start and final states not indicated, as they are irrelevant for these examples).
 
  \begin{tikzpicture}[node distance=15mm, auto]
  \tikzstyle{vertex}=[circle,fill=black!25,minimum size=10pt,inner sep=0pt]
   
   \node at (0.5,0) {};
  
   \foreach \name/\x in {s/1.5, 2/2.25}
    \node[vertex] (G-\name) at (\x,0) {};
    
   \foreach \name/\angle/\text in {P-1/180/5, P-2/108/6, 
                                   P-3/36/7, P-4/-36/8, P-5/-108/9}
    \node[vertex,xshift=4cm] (\name) at (\angle:0.9cm) {};

  \foreach \name/\angle/\text in {Q-1/90/5, Q-2/180/6, 
                                  Q-3/0/7}
    \node[vertex,xshift=2cm,yshift=-1.9cm] (\name) at (\angle:0.6cm) {};
    
  \node[vertex] (S1) at (5.7,0.5) {};
  \node[vertex] (S2) at (6.5,0.5) {};
  
  \foreach \from/\to/\label in {G-s/G-2/a,G-2/P-1/b,P-1/P-2/a,P-2/P-3/c,P-3/P-4/c,P-4/P-5/c,P-5/P-1/a,P-5/Q-1/b,Q-1/Q-3/a,Q-3/Q-2/b,Q-2/Q-1/b}
     \path[->] (\from) edge node {$\label$} (\to);
     
  \path[->] (P-3) edge node {$a$} (S1);
  \path[->] (S1) edge node {$b$} (S2);
  \path[->] (S2) edge [loop below] node {$a$} (S2);
   \node at (5,-1.7) {polycyclic PDFA};
  
   \node[vertex] (S1) at (9,0.5) {};
   \path[->] (S1) edge [loop above] node {\emph{$a,b$}} (S1);
   \node at (9,0) {not polycyclic};

    \foreach \name/\angle/\text in {T-1/90/5, T-2/180/6, 
                                  T-3/0/7}
    \node[vertex,xshift=9cm,yshift=-1.6cm] (\name) at (\angle:0.85cm) {};
    
    \path[->] (T-1) edge node {\emph{$a,b$}} (T-3)
              (T-3) edge node [above]  {$a$} (T-2)
              (T-2) edge node  {$a$} (T-1);
    
      \node at (9,-2) {not polycyclic};
 \end{tikzpicture} 
 \end{example}    

Note that a complete DFA over a non-unary alphabet is never polycyclic.
This could be seen with Theorem~\ref{thm:bounded_characterization},
as the complement of a language with polynomial growth cannot have polynomial growth
itself, or by using the property that every strongly connected
component is a single cycle. 
Then, for two distinct letters and a strongly connected component, if one maps a state in the cycle into the cycle, the other one must leave the cycle. However, if we topologically sort
the strongly connected components, the component at the end must be closed
under any letter.
\end{toappendix}

\begin{toappendix}

We will need the following result from~\cite{DBLP:conf/cocoon/Hoffmann20}.

\begin{lemma}[\cite{DBLP:conf/cocoon/Hoffmann20}]
	\label{lem:union}
	Let $\mathcal X$ denote any of the complexity classes
	$\PTIME$, $\NP$ or $\PSPACE$.
	If $L(\mathcal B)$ is a finite union of languages $L(\mathcal B_1),
	L(\mathcal B_2), \dots, L(\mathcal B_n)$ such that for each $1\leq i\leq n$
the problem $L(\mathcal B_i)\textsc{-Constr-Sync}\in \mathcal X$, 
	then we have $L(\mathcal B)\textsc{-Constr-Sync } \in \mathcal X$.
\end{lemma}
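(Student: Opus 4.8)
The plan is to exploit the union structure of $L(\mathcal B)$ and then invoke the closure of $\PTIME$, $\NP$ and $\PSPACE$ under unions of a constant number of sets. The crucial observation is that the set of synchronizing words of an input automaton $\mathcal A = (\Sigma, Q, \delta)$ does not depend on the constraint language at all; only the membership test $w \in L(\mathcal B)$ differs between the problems. Writing $\mathrm{Sync}(\mathcal A)$ for the (constraint-free) set of synchronizing words of $\mathcal A$, an input $\mathcal A$ is a yes-instance of $L(\mathcal B)\textsc{-Constr-Sync}$ iff $\mathrm{Sync}(\mathcal A) \cap L(\mathcal B) \neq \emptyset$. Since $L(\mathcal B) = \bigcup_{i=1}^n L(\mathcal B_i)$, this holds iff $\mathrm{Sync}(\mathcal A) \cap L(\mathcal B_i) \neq \emptyset$ for some $i \in \{1, \ldots, n\}$, i.e., iff $\mathcal A$ is a yes-instance of $L(\mathcal B_i)\textsc{-Constr-Sync}$ for at least one index $i$.

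First I would record this equivalence formally and stress that $n$ is a constant: the language $L(\mathcal B)$ together with its decomposition into $L(\mathcal B_1), \ldots, L(\mathcal B_n)$ is fixed and not part of the input. Consequently $L(\mathcal B)\textsc{-Constr-Sync}$ is precisely the constant-size union of the component problems, each of which lies in $\mathcal X$ by assumption.

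It then remains to verify that each of the three classes is closed under union of constantly many members. For $\mathcal X = \PTIME$, I would run the $n$ polynomial-time deciders for the component problems on $\mathcal A$ one after another and accept iff one of them accepts; as $n$ is constant, the total running time is a constant multiple of a polynomial and hence polynomial. For $\mathcal X = \NP$, I would nondeterministically guess an index $i \in \{1, \ldots, n\}$ (using $O(\log n) = O(1)$ bits) and then run the $\NP$-verifier for $L(\mathcal B_i)\textsc{-Constr-Sync}$; an accepting computation exists iff some component problem accepts. For $\mathcal X = \PSPACE$, I would simulate the $n$ polynomial-space deciders sequentially, reusing the same work tape and maintaining only a counter for the current index together with a single bit recording whether any simulation has accepted so far, which keeps the space polynomial.

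I do not expect a genuine obstacle, since the statement is a closure property. The only point that requires care, and the one I would make explicit, is that the number $n$ of languages in the union is a fixed constant; this is what allows the running time to stay polynomial in the $\PTIME$ case and, more importantly, permits the sequential reuse of space in the $\PSPACE$ case, so that neither the index counter nor the number of simulations inflates the space budget.
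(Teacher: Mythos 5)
Your proof is correct. Note that the paper itself does not prove this lemma: it is imported verbatim as a known result from the cited reference \cite{DBLP:conf/cocoon/Hoffmann20}, and your argument --- observing that $\mathcal A$ is a yes-instance iff $\mathrm{Sync}(\mathcal A) \cap L(\mathcal B_i) \neq \emptyset$ for some $i$, so the problem is a union of the $n$ fixed component problems, and then using closure of $\PTIME$, $\NP$ and $\PSPACE$ under unions of constantly many languages --- is exactly the standard argument given there, including the essential remark that $n$ is a constant of the problem rather than part of the input.
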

\end{toappendix}

The following slight generalization  
of~\cite[Theorem 27]{DBLP:conf/mfcs/FernauGHHVW19} will be needed.

\begin{propositionrep}
\label{prop:hom_lower_bound_complexity}
 Let $\varphi \colon \Sigma^* \to \Gamma^*$ be a homomorphism.
 Then, for each regular $L \subseteq \Sigma^*$, we have 
 $\varphi(L)\textsc{-Constr-Sync} \le_m^{\log} L\textsc{-Constr-Sync}$.
\end{propositionrep}
\begin{proof}
 
 

 Let $\mathcal A = (\Gamma, Q, \delta)$ be a DCSA.
	We want to know if it is synchronizing with respect to~$\varphi(L)$.
	Build the automaton $\mathcal A' = (\Sigma, Q, \delta')$ according to the rule
	$$
	\delta'(p, x) = q\quad\mbox{if and only if}\quad
        \delta(p, \varphi(x)) = q,$$ for $x\in\Sigma^*$.
	As~$\varphi$ is a mapping,~$\mathcal A'$ is indeed deterministic and
        complete, as~$\mathcal A$ is a DCSA. As the homomorphism~$\varphi$ is
        independent of $\mathcal A$, automaton~$\mathcal A'$ can be constructed from~$\mathcal A$
        in logarithmic space.  Next we prove that the 
        translation is indeed a reduction.

        If $u \in \varphi(L)$ is some synchronizing word for~$\mathcal A$, then there is
        some~$s\in Q$ such that $\delta(r,u)=s$, for all $r\in Q$.  By
        choice of $u$, we find $w \in L$ such that $u = \varphi(w)$. As with
        $\delta(r,\varphi(w))=s$, it follows $\delta'(r,w)=s$, hence~$w$
        is a synchronizing word for~$\mathcal A$.
        Conversely, if $w \in L$ is a synchronizing word
        for~$\mathcal A'$, then there is some $s\in Q$ such that
        $\delta'(r,w)=s$, for all $r\in Q$.  Further, $\varphi(w)$ is
        a synchronizing word for $\mathcal A$, as by definition for all $r\in
        Q$, we have $\delta(r,\varphi(w))=s$. 
\end{proof}

\begin{toappendix}

In the proofs of this appendix, we will need the following results
and constructions.

If $|L(\mathcal B)| = 1$, then $L(\mathcal B)\textsc{-Constr-Sync}$
is obviously in $\PTIME$. Simply feed this single word into the input
semi-automaton for every state and check if a unique state results.
Hence by Lemma \ref{lem:union} the next is implied.

\begin{lemma}\label{lem:finite} 
 Let $\mathcal B = (\Sigma, P, \mu, p_0, F)$ be a constraint automaton
 such that $L(\mathcal B)$ is finite, then
 $L(\mathcal B)\textsc{-Constr-Sync} \in \PTIME$.
\end{lemma}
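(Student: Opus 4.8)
The plan is to reduce the finite case to the already-handled single-word case and then invoke the union lemma. Since the constraint automaton $\mathcal B$ is fixed and $L(\mathcal B)$ is finite, I would write $L(\mathcal B) = \{w_1, \dots, w_n\}$, where crucially both the number $n$ of words and the maximal length $\max_{i} |w_i|$ are \emph{constants} independent of the input semi-automaton. Each singleton $\{w_i\}$ is itself recognized by some fixed PDFA $\mathcal B_i$, so that $L(\mathcal B) = \bigcup_{i=1}^n L(\mathcal B_i)$ is a finite union in the precise sense required by Lemma~\ref{lem:union}.

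Next I would dispatch the single-word case, which is essentially the observation stated just before the lemma. Given a DCSA $\mathcal A = (\Sigma, Q, \delta)$ and a fixed word $w = w_i$, deciding whether $w$ is a synchronizing word for $\mathcal A$ lying in $\{w_i\}$ amounts to computing the image set $\delta(Q, w)$ by feeding $w$ through the transition function from all states simultaneously, and checking whether $|\delta(Q, w)| = 1$. This computation runs in time $O(|Q| \cdot |w_i|)$, which is polynomial in the size of $\mathcal A$ because $|w_i|$ is a fixed constant. Hence $L(\mathcal B_i)\textsc{-Constr-Sync} \in \PTIME$ for each $i$.

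Finally, I would apply Lemma~\ref{lem:union} with the complexity class $\mathcal X = \PTIME$: since $L(\mathcal B)$ is the finite union of the languages $L(\mathcal B_1), \dots, L(\mathcal B_n)$ and each associated problem lies in $\PTIME$, the problem $L(\mathcal B)\textsc{-Constr-Sync}$ lies in $\PTIME$ as well, as claimed.

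There is no genuine obstacle here; the only point deserving care is the role of finiteness, which is precisely what guarantees that $n$ and the word lengths are constants, so that the total work (a constant number of polynomial checks) stays polynomial. I would flag that if the bound on the number or length of words were allowed to depend on the input, this naive enumeration would break, but for a fixed finite $L(\mathcal B)$ this cannot happen.
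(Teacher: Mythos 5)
Your proof is correct and follows exactly the paper's own argument: handle the singleton case by feeding the word through the input semi-automaton from every state (polynomial since the word length is a constant of the fixed constraint), then conclude via the finite-union Lemma~\ref{lem:union} with $\mathcal X = \PTIME$. The additional remarks about the number and lengths of words being constants independent of the input are sound and make explicit what the paper leaves implicit.
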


\begin{lemma}\label{lem:union_single_final_state}
 Every regular language could be written as a finite union of regular languages
 recognizable by automata with a single final state.
\end{lemma}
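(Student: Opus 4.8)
The plan is to start from any recognizing automaton for $L$ and split it according to which accepting state a word lands in. Concretely, since $L$ is regular, fix a PDFA $\mathcal{A} = (\Sigma, Q, \delta, q_0, F)$ with $L(\mathcal{A}) = L$, and write $F = \{f_1, \ldots, f_n\}$. For each index $i \in \{1, \ldots, n\}$ I would define the automaton
\[
  \mathcal{A}_i = (\Sigma, Q, \delta, q_0, \{f_i\}),
\]
obtained from $\mathcal{A}$ by keeping the same states, transitions, and initial state but declaring only $f_i$ to be final. By construction each $\mathcal{A}_i$ has exactly one final state, so $L(\mathcal{A}_i)$ is of the required form.

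The key step is then to verify the set equality $L = \bigcup_{i=1}^{n} L(\mathcal{A}_i)$. This is immediate from the definition of acceptance: a word $w \in \Sigma^*$ lies in $L(\mathcal{A})$ if and only if $\delta(q_0, w)$ is defined and belongs to $F$, which holds if and only if $\delta(q_0, w) = f_i$ for some $i$, i.e. if and only if $w \in L(\mathcal{A}_i)$ for some $i$. Since all $\mathcal{A}_i$ share the transition structure of $\mathcal{A}$, there is no interaction between the pieces and the union is exactly $L$.

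The remaining detail is the degenerate case $F = \emptyset$, where $L = \emptyset$. This is covered either by reading the empty union (over an empty index set) as $\emptyset$, or, if one prefers a genuine single-final-state witness, by taking an automaton with one final state that is unreachable from $q_0$. Either way $\emptyset$ is expressible in the desired form.

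Honestly, there is no real obstacle here: the statement is a routine normal-form observation, and the only thing to be careful about is stating the acceptance condition for PDFAs correctly (so that undefined runs are handled) and not forgetting the empty-language edge case. The whole argument is the single idea of partitioning acceptance by target final state.
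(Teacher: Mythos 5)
Your proof is correct and is essentially identical to the paper's: both decompose $L(\mathcal{A})$ as $\bigcup_{q \in F} L((\Sigma, Q, \delta, q_0, \{q\}))$, partitioning acceptance by the final state reached. Your extra remark on the $F = \emptyset$ edge case is a harmless refinement the paper leaves implicit.
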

\begin{proof}
 Let $\mathcal A = (\Sigma, Q, \delta, q_0, F)$ be a DFA.
 Then, $L(\mathcal A) = \bigcup_{q \in F} \{ w \in \Sigma^* \mid \delta(q_0, w) = q\}$
 and $\{ w \in \Sigma^* \mid \delta(q_0, w) = q\} = L((\Sigma, Q, \delta, q_0, \{q\}))$.
\end{proof}

Let $\Sigma = \{a\}$ be a unary alphabet. Suppose $L \subseteq \Sigma^{\ast}$ is regular
with a recognizing complete deterministic automaton $\mathcal A = (\Sigma, Q, \delta, q_0, F)$. Then, by considering
the sequence of states $\delta(q_0, a^1), \delta(q_0, a^2), \delta(q_0, a^3), \ldots$, we find numbers\footnote{Note that these numbers are independent of the language recognized by the automaton.}
$i \ge 0, p > 0$ with $i+p$ minimal such that $\delta(q_0, a^i) = \delta(q_0, a^{i+p})$.
We call these numbers the \emph{index} $i$ and the \emph{period} $p$ of the automaton $\mathcal A$.
If $Q = \{\delta(q_0, a^m) \mid m \ge 0 \}$, i.e., every state is reachable from the start state, then  $i + p = |Q|$.
 In our discussion, unary languages that are recognized by
 automata with a single final state appear.
 
 \begin{lemma}[\cite{DBLP:conf/cai/Hoffmann19}]
\label{lem::unary_single_final}
  Let $L \subseteq \{a\}^{\ast}$ be a unary language that is recognized
  by an automaton with a single final state, index $i$ and period $p$.
  Then either $L = \{u\}$ with $|u| < i$ (and if the automaton is minimal we would have $p = 1$),
  or $L$ is infinite with $L = a^{i+m}(a^p)^{\ast}$ and $0 \le m < p$. Hence
  two words $u,v$ with $\min\{|u|, |v|\} \ge i$ are both in $L$ or not if and only
  if $|u| \equiv |v| \pmod{p}$.
 \end{lemma}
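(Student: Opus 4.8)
The plan is to exploit the standard ``lasso'' shape of the state orbit of a unary automaton under the single letter $a$. First I would record the structural fact underlying the very definition of $i$ and $p$: writing $s_n := \delta(q_0, a^n)$ for $n \ge 0$, the minimality of $i+p$ forces the states $s_0, s_1, \ldots, s_{i+p-1}$ to be pairwise distinct, while $s_{n+p} = s_n$ holds for every $n \ge i$. Indeed, if $s_a = s_b$ for some $a < b < i+p$, then $(a, b-a)$ witnesses a repeat whose parameter sum $a + (b-a) = b$ is smaller than $i+p$, contradicting minimality; and $s_i = s_{i+p}$ propagates by one application of $\delta(\,\cdot\,, a)$ at a time to $s_{n+p} = s_n$ for all $n \ge i$. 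Reducing indices into the window $[i, i+p)$ then yields the clean description: for $n, n' \ge 0$ we have $s_n = s_{n'}$ if and only if either $n = n'$, or both $n, n' \ge i$ and $n \equiv n' \pmod p$.

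Next I would locate the unique final state $f$ in this picture. If $f$ is unreachable from $q_0$ then $L = \emptyset$, a degenerate finite case; otherwise $f = s_m$ for some $m \ge 0$, and I split according to whether $m$ lies on the tail or on the cycle. If $m < i$, the distinctness above shows $s_m$ occurs exactly once in the whole orbit, so $\delta(q_0, a^n) = f$ has the single solution $n = m$ and $L = \{a^m\}$ with $|a^m| = m < i$. If $i \le m < i+p$, write $m = i + m'$ with $0 \le m' < p$; then $\delta(q_0, a^n) = f$ holds precisely when $n \ge i$ and $n \equiv m \pmod p$, and since $i + m'$ is the least integer $\ge i$ in that residue class, the solution set is exactly $\{\, i + m' + kp \mid k \ge 0 \,\}$, i.e.\ $L = a^{i+m'}(a^p)^*$, which is infinite. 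This establishes the stated dichotomy with $0 \le m' < p$.

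For the minimality remark in the finite case I would invoke Myhill--Nerode: if $L = \{a^m\}$ and the recognizing automaton is minimal, then all words $a^n$ with $n > m$ are equivalent (none extends into $L$) and rejected, hence are absorbed into a single non-accepting sink whose self-loop is a cycle of period $p = 1$, while the states $s_0, \ldots, s_m$ are pairwise inequivalent, so $i = m+1$ and indeed $m < i$. Finally, the closing statement follows directly from the orbit description: for words $u, v$ with $|u|, |v| \ge i$ and $|u| \equiv |v| \pmod p$ we have $\delta(q_0, a^{|u|}) = \delta(q_0, a^{|v|})$, so membership of such words in $L$ is governed solely by their length modulo $p$. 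The only delicacy I anticipate is the bookkeeping at the tail/cycle boundary -- in particular respecting the constraint $n \ge i$ so that no spurious short word is added to $L$ in the cycle case -- together with the Myhill--Nerode argument pinning down $p = 1$ under minimality; the remainder is routine orbit chasing.
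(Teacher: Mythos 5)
Your proposal cannot be measured against an internal argument, because the paper never proves Lemma~\ref{lem::unary_single_final}: it is imported from \cite{DBLP:conf/cai/Hoffmann19} and used as a black box. Judged on its own, your proof is correct and is the natural ``lasso'' argument: writing $s_n=\delta(q_0,a^n)$, minimality of $i+p$ makes $s_0,\dots,s_{i+p-1}$ pairwise distinct, the relation $s_i=s_{i+p}$ propagates to $s_{n+p}=s_n$ for all $n\ge i$, and the dichotomy follows from whether the unique final state lies on the tail (giving $L=\{a^m\}$ with $m<i$) or on the cycle (giving $L=a^{i+m}(a^p)^*$ with $0\le m<p$); your Myhill--Nerode argument for $p=1$ in the minimal singleton case is also sound. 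Two caveats, neither fatal. First, the unreachable-final-state case you flag yields $L=\emptyset$, which the lemma's conclusion does not actually cover; the statement tacitly assumes reachability, and the paper is equally silent about this at the points of use (Proposition~\ref{thm:bounded_characterization} and Proposition~\ref{prop:stricly_bounded_P}), where it is harmless because empty constituents of a union may simply be dropped. Second, the closing ``if and only if'' of the statement is only true in the direction you prove: lengths that are both $\ge i$ and congruent modulo $p$ lead to the same state, hence equal membership status. The converse direction fails in general---in the singleton case every word of length $\ge i$ lies outside $L$ regardless of its residue---so establishing just the forward implication, which is the only one the paper ever uses, is the right call.
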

 
We need an additional construction, which we call \emph{inflating}
a given automaton $\mathcal A = (\Sigma, Q, \delta, q_0, F)$

\begin{definition}[Aut. Inflation by a factor $N > 0$]
\label{def:inflate_aut}
 Let $\mathcal A = (\Sigma, Q, \delta, q_0, F)$
 be a given automaton and $N > 0$.
 Then, the 
 \emph{inflated automaton (of $\mathcal A$) by $N$ } 
 is $\mathcal A' = (\Sigma, Q', \delta', q_0, F)$, where
\[
 Q' = Q \cup \bigcup_{x\in \Sigma} ( Q_{1,x} \cup \ldots \cup Q_{N-1, x} )
\]
and the $Q_{i,x}$ are disjoint copies of $Q$.
The states $Q_{1,x}, \ldots, Q_{N-1,x}$
are called \emph{auxiliary states} in this context.
Then, for $q \in Q'$ and $x \in \Sigma$, set
\[
 \delta'(q, x) = \left\{
 \begin{array}{ll}
  q_{i+1,x}    & \mbox{if } q = q_{i,x} \land i \in \{1,\ldots, N-2\}; \\
  \delta(q, x) & \mbox{if } q = q_{N-1,x}; \\
  q_{1,x}      & \mbox{if } q \in Q; \\
  q            & \mbox{otherwise.}
 \end{array}
 \right.
\]
\end{definition}

Intuitively, a transition labelled by $x$ is replaced by a path labelled by $x^N$.
Note that, for $q,q' \in Q$
\[
 \delta(q, x) = q' \mbox{ in } \mathcal A
 \Leftrightarrow \delta'(q, x^N) = q' \mbox{ in } \mathcal A'
\]
and $\delta'(q, w) \in Q$ implies that $|w|$
is divisible by $N$.
\end{toappendix}

\section{Sparse and Bounded Regular Languages}
\label{sec:sparse}

Here, in Theorem~\ref{thm:sparse_in_NP}, we establish that for constraint languages from the class of sparse regular languages, which equals the class of the bounded regular languages~\cite{DBLP:journals/eik/LatteuxT84}, the constrained problem
is always in \NP.

A language $L \subseteq \Sigma^*$ is \emph{sparse},
if there exists $c \ge 0$
such that, for every $n \ge 0$, we have
$L \cap \Sigma^n \in O(n^c)$.
Sparse languages were introduced into computational complexity
theory by Berman \& Hartmanis~\cite{DBLP:journals/siamcomp/BermanH77}.
Later, it was established by Mahaney that if there exists
a sparse \NP-complete set (under polynomial-time many-one reductions),
then $\PTIME = \NP$~\cite{DBLP:journals/jcss/Mahaney82}.
For a survey on the relevance of sparse sets in computational complexity theory, see~\cite{DBLP:conf/mfcs/HartmanisM80}.

A language $L \subseteq \Sigma^*$ is called \emph{bounded},
if there exist $w_1, \ldots, w_k \in \Sigma^*$
such that $L \subseteq w_1^* \ldots w_k^*$. 
Bounded languages were introduced by Ginsburg \& Spanier~\cite{GinsburgSpanier64}.

We will need the following representation of the bounded regular languages.


\begin{theorem}[\cite{GinsburgSpanier66}]
\label{thm:bounded_regular_form}
 A language $L \subseteq w_1^* \cdots w_k^*$ is regular if and only if
 it is a finite union of languages of the form $L_1 \cdots L_k$, where each $L_i \subseteq w_i^*$ is regular.
\end{theorem}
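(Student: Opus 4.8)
The plan is to prove the two implications separately, the forward (``only if'') direction being the substantive one.

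For the easy ``if'' direction, suppose $L$ is a finite union of languages of the form $L_1 \cdots L_k$ with each $L_i \subseteq w_i^*$ regular. Since the class of regular languages is closed under concatenation and finite union, $L$ is regular, and nothing more is needed.

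For the ``only if'' direction I would fix a finite monoid $M$ recognizing $L$, i.e.\ a morphism $\varphi \colon \Sigma^* \to M$ together with a subset $P \subseteq M$ such that $L = \varphi^{-1}(P)$ (for instance, take the transition monoid of a DFA for $L$). Write $m_i = \varphi(w_i)$ for $i \in \{1,\ldots,k\}$. The key observation is that, because $M$ is finite, the sequence of powers $m_i^0, m_i^1, m_i^2, \ldots$ is ultimately periodic: there are a threshold $t_i \ge 0$ and a period $\ell_i \ge 1$ with $m_i^{\,n+\ell_i} = m_i^{\,n}$ for all $n \ge t_i$. Consequently, for every $s \in M$ the exponent set $E_{i,s} = \{\, n \in \mathbb{N}_0 \mid \varphi(w_i^n) = s \,\}$ is an ultimately periodic subset of $\mathbb{N}_0$, so $L_{i,s} := \{\, w_i^n \mid n \in E_{i,s} \,\}$ is regular — it is the image of the regular unary language $\{\, a^n \mid n \in E_{i,s} \,\}$ under the morphism $a \mapsto w_i$ — and clearly $L_{i,s} \subseteq w_i^*$.

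The heart of the argument is then the decomposition
\[
 L = \bigcup_{\substack{(s_1, \ldots, s_k) \in M^k \\ s_1 s_2 \cdots s_k \in P}} L_{1,s_1} \cdot L_{2,s_2} \cdots L_{k,s_k},
\]
which is a finite union of at most $|M|^k$ terms, each of the required shape $L_1 \cdots L_k$ with $L_i = L_{i,s_i} \subseteq w_i^*$ regular. To verify the equality I would use that every word of $L$ lies in $w_1^* \cdots w_k^*$ and hence equals $w_1^{n_1} \cdots w_k^{n_k}$ for some exponents; such a word belongs to $L = \varphi^{-1}(P)$ exactly when $\varphi(w_1^{n_1}) \cdots \varphi(w_k^{n_k}) = s_1 \cdots s_k \in P$ with $s_i = \varphi(w_i^{n_i})$, which places it in the term indexed by $(s_1,\ldots,s_k)$; conversely, every word in a term with $s_1 \cdots s_k \in P$ maps into $P$ and lies in $w_1^* \cdots w_k^*$, hence is in $L$. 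The main point requiring care is precisely this bookkeeping: membership of $w_1^{n_1} \cdots w_k^{n_k}$ in $L$ factors through the tuple of monoid values $(\varphi(w_i^{n_i}))_i$ rather than through the ambient word, and although the representation $w_1^{n_1} \cdots w_k^{n_k}$ need not be unique, this causes no problem for the set-level equality since we range over \emph{all} valid tuples. Edge cases such as $w_i = \varepsilon$ are harmless, as then $w_i^* = \{\varepsilon\}$, whose only regular subsets are $\emptyset$ and $\{\varepsilon\}$.
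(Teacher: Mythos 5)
Your proof is correct, but there is nothing in the paper to compare it against: the paper states this result as a black-box citation to Ginsburg and Spanier~\cite{GinsburgSpanier66} and proves it nowhere, neither in the main text nor in the appendix. Judged on its own merits, your argument is sound. The ``if'' direction is indeed just closure of regular languages under concatenation and finite union. For the ``only if'' direction, recognition by a finite monoid $M$ via a morphism $\varphi$ with $L=\varphi^{-1}(P)$, ultimate periodicity of the power sequence $m_i^0,m_i^1,m_i^2,\ldots$, and hence regularity of each $L_{i,s}=\{\,w_i^n \mid \varphi(w_i^n)=s\,\}$ (as a homomorphic image of an ultimately periodic unary language) are all correct, and both inclusions of your displayed decomposition check out: $\supseteq$ because $\varphi$ is a morphism, so every word of a term indexed by $(s_1,\ldots,s_k)$ with $s_1\cdots s_k\in P$ lands in $P$; $\subseteq$ because any factorization $w_1^{n_1}\cdots w_k^{n_k}$ of a word of $L$ yields a witnessing tuple, and the non-uniqueness of that factorization is harmless for a set-level equality, exactly as you note. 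Your route via monoid recognition is the clean modern one and has the side benefit of being effective---the decomposition can be computed from a DFA for $L$---which is what the paper implicitly relies on when it rewrites $L(\mathcal B)=\bigcup_{i=1}^n A_1^{(i)}\cdots A_k^{(i)}$ inside the proof of Theorem~\ref{thm:dichotomy}. Incidentally, your handling of the unary pieces $L_{i,s}$ (pulling a regular subset of $w_i^*$ back along $a\mapsto w_i$ to a unary regular language) closely parallels the technique the paper does spell out in its proof of Proposition~\ref{thm:bounded_characterization}, so your argument fits naturally into the toolkit the surrounding text already uses.
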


It is known that the class of sparse regular languages equals
the class of bounded regular languages~\cite{DBLP:journals/eik/LatteuxT84},
or see~\cite{Pin2020,DBLP:reference/hfl/Yu97}, where the bounded languages are not mentioned
but the equivalence is implied by their results and Theorem~\ref{thm:bounded_regular_form}.
The next results links this class to the polycylic PDFAs.

\begin{propositionrep}
\label{thm:bounded_characterization}
 Let $L \subseteq \Sigma^*$ be regular. Then, $L$ is sparse
 if and only if it is recognizable by a polycyclic PDFA.
\end{propositionrep}
\begin{proof}
 In~\cite{DBLP:journals/eik/LatteuxT84} is was shown that the context-free sparse languages are precisely the context-free bounded languages, which
 gives our first two equivalences.
 A result from~\cite[Lemma 2]{DBLP:journals/ijfcs/GawrychowskiKRS10}
 readily implies that if a language is recognized by a polycyclic PDFA, then
 it must be sparse.
 Lastly, we show that every bounded regular language
 is recognizable by a polycyclic automaton, which finishes the proof.
 
 \medskip 
 
 \noindent\underline{Claim:} For $w \in \Sigma^*$.
  Then, any regular $L \subseteq w^*$
  is recognizable by a polycyclic PDFA.
 \begin{quote}
     \emph{Proof of the Claim.}
      Let $w \in \Sigma^*$ and $L \subseteq w^*$ be a regular language.
 If $w = \varepsilon$, then $L = \{\varepsilon\}$, which is obviously recognizable
 by a polycyclic automaton. So, suppose $|w| > 0$.
%
%
 Let $a$ be an arbitrary symbol
 and define a homomorphism $\varphi : \{a\}^* \to \Sigma^*$
 by $\varphi(a^i) = w^i$, which is injective as $|w| > 0$ by assumption.
 Then, the unary language $\varphi^{-1}(L) = \{ a^i \mid w^i \in L\}$ is regular, as inverse
 homomorphisms preserve regularity.
 Hence, we can write it as a union of languages recognizable by automata
 with a single final state, which, by Lemma~\ref{lem::unary_single_final},
 have the form $\{ a^i \}$ for some $i \ge 0$
 or $\{ a^{i + jp} \mid j \ge 0 \}$ for some $i \ge 0, p > 0$.
 As the application of functions preserves union, and $L = \varphi(\varphi^{-1}(L))$ here,
 the language 
 $L$ is the union of the images of these languages.
 We have $\varphi(\{a^i \}) = \{ w^i \}$, and this singleton language
 is obviously recognizable by a polycyclic automaton,
 and we have $\varphi(\{ a^{i + jp} \mid j \ge 0 \}) = \{ w^{i+pj} \mid j \ge 0 \}$,
 and this language is also recognizable by an automaton that
 has an initial tail labelled by $w^i$ and a cycle labelled by $w^p$.
 So, as the polycyclic languages are closed under union~\cite[Proposition 6]{DBLP:conf/ictcs/Hoffmann20},
 we have shown that the language $L$ 
 is recognizable by some polycyclic automaton.     \emph{[End, Proof of the Claim]}
 \end{quote}
 Finally, as the languages recognizable by polycyclic automata
 are closed under concatenation and union~\cite[Proposition 5 and Proposition 6]{DBLP:conf/ictcs/Hoffmann20},
 by Theorem~\ref{thm:bounded_regular_form} every bounded regular language is recognizable by a polycyclic automaton.
\end{proof}

In~\cite[Theorem 2]{DBLP:conf/ictcs/Hoffmann20} it was shown that for polycyclic
constraint languages, the constrained problem is always in $\NP$.
So, we can deduce the next result.

\begin{theorem}
\label{thm:sparse_in_NP}
 If $L \subseteq \Sigma^*$ is sparse and regular, then $L\textsc{-Constr-Sync} \in \NP$.
\end{theorem}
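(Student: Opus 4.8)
The plan is to derive the statement directly from the two preceding results, so the argument is essentially a two-line combination. First I would invoke Theorem~\ref{thm:bounded_characterization}: since $L$ is regular and sparse, it is recognizable by some polycyclic PDFA, i.e.\ we may fix a constraint automaton $\mathcal B$ for $L$ that is polycyclic. Because the decision problem $L\textsc{-Constr-Sync}$ depends only on the language $L = L(\mathcal B)$ and not on the particular automaton recognizing it (the yes/no answer for an input DCSA $\mathcal A$ is determined solely by whether $\mathcal A$ admits a synchronizing word lying in $L$), this polycyclic representation is a legitimate choice. Then I would apply the cited result \cite[Theorem 2]{DBLP:conf/ictcs/Hoffmann20}, which guarantees that for a polycyclic constraint automaton the constrained synchronization problem lies in $\NP$. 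Combining these two facts yields $L\textsc{-Constr-Sync} \in \NP$.

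The single point that deserves a sentence of care is that the complexity claim is a property of the language $L$, whereas the cited $\NP$-membership is phrased for a fixed polycyclic automaton; the bridge is precisely Theorem~\ref{thm:bounded_characterization}, which produces such an automaton from the hypothesis that $L$ is sparse and regular. To make the deduction self-contained at the level of intuition, one could recall \emph{why} $\NP$-membership holds: the polycyclic structure lets one describe any word of $L$ compactly by the tuple of cycle-exponents along an accepting run, so a synchronizing word in $L$ is witnessed by a polynomial-size certificate whose validity (both membership in $L$ and the synchronization condition) is checkable in polynomial time. I would not reprove this, however, since it is exactly the content of \cite[Theorem 2]{DBLP:conf/ictcs/Hoffmann20}.

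The main obstacle: there is essentially none in this final deduction. All the work has been front-loaded into Theorem~\ref{thm:bounded_characterization}, which identifies sparse regular languages with the polycyclic ones, and into the prior $\NP$-membership result for polycyclic constraint automata; once both are in hand, the inference is immediate.
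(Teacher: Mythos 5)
Your proposal is correct and matches the paper's own argument exactly: the paper deduces Theorem~\ref{thm:sparse_in_NP} by combining Proposition~\ref{thm:bounded_characterization} (sparse regular languages are precisely those recognizable by polycyclic PDFAs) with the cited $\NP$-membership result for polycyclic constraint automata \cite[Theorem 2]{DBLP:conf/ictcs/Hoffmann20}. Your added remark that the complexity of $L\textsc{-Constr-Sync}$ depends only on the language and not on the chosen recognizing automaton is a sensible point of care, implicitly taken for granted in the paper.
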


We will need the following closure property stated in~\cite[Theorem 3.8]{DBLP:reference/hfl/Yu97}
of the sparse regular languages.

\begin{proposition}
 The class of sparse regular languages is closed under homomorphisms.
\end{proposition}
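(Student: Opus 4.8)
The plan is to sidestep a direct density estimate and instead route the whole argument through the bounded-language characterization established in Theorem~\ref{thm:bounded_characterization}, together with the stated equivalence that a regular language is sparse if and only if it is bounded. The reason for this detour is that a homomorphism can erase letters (mapping them to $\varepsilon$) or stretch them to arbitrarily long words, so there is no transparent relationship between $|\varphi(L) \cap \Gamma^n|$ and the density of $L$; the $\Sigma^n$-slicing in the definition of sparseness does not interact cleanly with $\varphi$. Boundedness, by contrast, is manifestly preserved.

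Concretely, let $L \subseteq \Sigma^*$ be sparse and regular and let $\varphi \colon \Sigma^* \to \Gamma^*$ be a homomorphism. First I would note that regularity of $\varphi(L)$ is immediate, since the class of regular languages is closed under homomorphic images (a standard result, see~\cite{HopUll79}); so the entire task reduces to showing $\varphi(L)$ is sparse, equivalently bounded. To this end, I would apply the characterization to the input: because $L$ is sparse and regular, it is bounded, so there exist $w_1, \ldots, w_k \in \Sigma^*$ with $L \subseteq w_1^* \cdots w_k^*$.

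The key step is then a routine computation exploiting that $\varphi$ is a homomorphism: one has $\varphi(w_i^*) = \varphi(w_i)^*$ for each $i$, and $\varphi$ distributes over concatenation, so
\[
\varphi(L) \subseteq \varphi(w_1^* \cdots w_k^*) = \varphi(w_1)^* \cdots \varphi(w_k)^*.
\]
Thus $\varphi(L)$ is bounded, with bounding words $\varphi(w_1), \ldots, \varphi(w_k) \in \Gamma^*$. Combining this with the regularity of $\varphi(L)$, the equivalence between bounded regular and sparse regular languages yields that $\varphi(L)$ is sparse, completing the argument.

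I do not expect any genuine obstacle here; the only point worth flagging is the deliberate choice to argue via boundedness rather than by a direct word-count, precisely because homomorphisms need not preserve length. One small thing to verify in writing is the degenerate cases (e.g.\ some $w_i = \varepsilon$ or $\varphi(w_i) = \varepsilon$), but these cause no trouble since $\varepsilon^* = \{\varepsilon\}$ and the containment above remains valid regardless.
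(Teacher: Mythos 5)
Your proof is correct, but it takes a genuinely different route from the paper, because the paper does not prove this proposition at all: it is stated as a known closure property with a citation to \cite[Theorem 3.8]{DBLP:reference/hfl/Yu97}. Your argument is a self-contained derivation from facts already available in the paper: sparse regular equals bounded regular~\cite{DBLP:journals/eik/LatteuxT84}, homomorphisms preserve regularity~\cite{HopUll79}, and boundedness is transparently preserved since $\varphi(w_1^* \cdots w_k^*) = \varphi(w_1)^* \cdots \varphi(w_k)^*$, so $\varphi(L)$ is bounded and regular, hence sparse. Your insistence on routing through boundedness rather than a direct density count is not just convenient but essentially forced: sparse languages in general are \emph{not} closed under homomorphic mappings (the paper remarks exactly this in an appendix note, citing~\cite{Pin2020}), precisely because erasing or length-distorting homomorphisms wreck the $\Sigma^n$-slicing; the regularity of $\varphi(L)$ is the indispensable ingredient that lets you convert boundedness back into sparseness. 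Indeed, the paper's own appendix observation that ``the bounded languages are closed under homomorphic mappings'' is the same key step you use, so your proof is fully aligned with the paper's perspective while supplying the argument the paper outsources to the handbook; the degenerate cases you flag ($w_i = \varepsilon$ or $\varphi(w_i) = \varepsilon$) are, as you say, harmless.
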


\begin{toappendix}
Note that sparse languages in general are not closed
under homomorphic mappings~\cite{Pin2020}.
As it is easy to see that the bounded languages are closed
under homomorphic mappings, this also implies that, in general,
the bounded languages do not equal the sparse languages.
\end{toappendix}

Note that the connection of the polycyclic languages to the sparse or bounded languages
was not noted in~\cite{DBLP:conf/ictcs/Hoffmann20}. However, a condition
characterizing the sparse regular languages
in terms of forbidden patterns was given in~\cite{Pin2020}, and
it was remarked that ``a minimal deterministic automaton recognises a sparse language if and only if it
does not contain two cycles reachable from one another''.
This is quite close to our characterization.

\section{Letter-Bounded Constraint Languages}
\label{sec:strictly_bounded_case}






 Fix a constraint automaton $\mathcal B = (\Sigma, P, \mu, p_0, F)$.
 Let $a_1, \ldots, a_k \in \Sigma$ be a sequence of (not necessarily distinct)
 letters.
 In this section, we assume $L(\mathcal B) \subseteq a_1^* \cdots a_k^*$.
 A language which fulfills the above condition
 is called \emph{letter-bounded}.
 Note that the language $ab^*a$ given in the introduction as an example
 is letter-bounded. In fact, it is the language with the smallest
 recognizing automaton yielding an \NP-complete constrained problem~\cite{DBLP:conf/mfcs/FernauGHHVW19}.
 
 A language such that the $a_i$ are pairwise distinct, i.e., $a_i \ne a_j$
 for $i \ne j$, is called \emph{strictly bounded}.
 The class of strictly bounded languages has been extensively studied~\cite{DBLP:journals/mst/BlattnerC77,DBLP:journals/dam/DassowP99,Ginsburg66,GinsburgSpanier64,GinsburgSpanier66,HerrmannKMW17},
 where in~\cite{Ginsburg66,GinsburgSpanier64,GinsburgSpanier66} no name was introduced for them
 and in~\cite{HerrmannKMW17} they were called strongly bounded.
 The class of letter-bounded languages properly contains the strictly bounded languages.
 
 \begin{toappendix} 
 Note that the work~\cite{HerrmannKMW17}
 seems to deviate from the standard terminology, for example by calling bounded languages
 as introduced here word-bounded and refers to letter-bounded simply as bounded languages.
 \end{toappendix}

 \begin{remark}
 \label{rem:motivation_strictly_bounded}
Let $\Sigma = \{b_1, \ldots, b_r\}$
be an alphabet of size $r$.
Then, 
the mappings
\[
\Phi(L) = L \cap b_1^* \cdots b_r^* \mbox{ and }
\perm(L) = \{ w \in \Sigma^* \mid \exists u \in L\  \forall a \in \Sigma : |u|_a = |w|_a \}
\]
for $L \subseteq \Sigma^*$ are mutually inverse and inclusion preserving
between the languages in $b_1^* \cdots b_r^*$ and the commutative languages
in $\Sigma^*$, where a language $L \subseteq \Sigma^*$ is commutative 
if $\perm(L) = L$.
Furthermore, for strictly bounded languages of the form $B_1 \cdots B_r \subseteq b_1^* \cdots b_r^*$
with $B_j \subseteq \{b_j\}^*$, $j \in \{1,\ldots, r\}$, we have
$
 \perm(B_1 \cdots B_r) = B_1 \shuffle \cdots \shuffle B_r,
$
where
$U \shuffle V = \{ u_1 v_1 \cdots u_n v_n \mid u_i, v_i \in \Sigma^*, u_1 \cdots u_n \in U, v_1 \cdots v_n \in  V \}$ for $U, V \subseteq \Sigma^*$.
Hence, $\perm(L)$ is regularity-preserving
for strictly bounded languages.
More specifically, the above correspondence between
the two language classes is regularity-preserving in both directions.
For commutative constraint languages, a classification
of the complexity landscape has been achieved~\cite{DBLP:conf/cocoon/Hoffmann20}.
By the close relationship between commutative and certain strictly
bounded languages, it is natural to tackle
this language class next.
However, as shown in~\cite{DBLP:conf/cocoon/Hoffmann20},
for commutative constraint languages, we can realize $\PSPACE$-complete
problems, but, by Theorem~\ref{thm:sparse_in_NP},
for strictly bounded languages, the constrained problem is always in $\NP$.
However, by the above relations, Theorem~\ref{thm:bounded_regular_form} for languages in $b_1^* \cdots b_r^*$ is equivalent to~\cite[Theorem 5]{DBLP:conf/cocoon/Hoffmann20}, a representation result
for commutative regular languages.
\end{remark}

 Our first result says, intuitively,  
 that if in $A_1 \cdots A_k$ with $A_j$ unary and regular,
 if no infinite unary language $A_j$ over $\{a_j\}$ lies 
 between 
 non-empty unary languages
 over a distinct letter\footnote{Hence different from $\{\varepsilon\}$, as $\{\varepsilon\} \subseteq \{a\}^*$
 for $a \in \Sigma$.}  than $a_j$, 
 then $(A_1 \cdots A_k)$\textsc{-Constr-Sync} is in~$\PTIME$.
 
 
\begin{propositionrep}
\label{prop:stricly_bounded_P}
 Let $A_j \subseteq \{a_j\}^*$ be unary regular languages
 for $j \in \{1,\ldots, k\}$.
 Set $L = A_1 \cdot\ldots\cdot A_k$.
 If for all $j \in \{1,\ldots, k\}$, $A_j$ infinite implies that $A_i \subseteq \{a_j\}^*$
 for all $i < j$ or $A_i \subseteq \{a_j\}^*$ for all $i > j$ (or both), then $L\textsc{-Constr-Sync} \in \PTIME$. 
\end{propositionrep} 
\begin{proof}
 Let $L = A_1 \cdots A_k$ with $A_j \subseteq \{a_j\}^*$ fulfill the assumption.
 If $A_j$ is infinite and for all $i < j$ we have $A_i \subseteq \{a_j\}^*$,
 then $A_1 \cdots A_j \subseteq \{a_j\}^*$, and similarly if
 for all $i > j$ we have $A_i \subseteq \{a_j\}^*$.
 So, by considering $(A_1 \cdots A_j) A_{j+1} \cdots A_k$
 or $A_1 \cdots A_{j-1} (A_j \cdots A_k)$, with $j$ maximal in the former case and minimal in the latter,
 without loss of generality, we can assume $j = 1$
 or $j = k$, i.e., we only have the cases $A_1$
 is infinite, $A_k$ is infinite or both are infinite or none is infinite,
 and, by maximality or minimality of $j$,
 in all these cases the languages $A_2, \ldots, A_{k-1}$ are all finite.

 Then, by Lemma~\ref{lem:union_single_final_state}, we can write $A_1$ and $A_k$
 as a finite union of unary languages recognizable by automata with a single final state.
 As concatenation distributes over union, if we do this for
 $A_1$ and $A_k$ and rewrite the language using the mentioned distributivity,
 we get a finite union of languages of the form
 \[
  A_1' A_2 \cdots A_{k_1} A_k'
 \]
 where $A_1'$ and $A_k'$ are recognizable by unary automata with a single final state
 and are either finite or infinite. Hence, by Lemma~\ref{lem:union},
 if we can show that the problem is in $\PTIME$ for each such language,
 the result follows. 
 So, without loss of generality, we assume
 from the start that $A_1$ or $A_k$ are recognizable by automata
 with a single final state.

 If all $A_j$, $j \in \{1,\ldots,k\}$ are finite, then $L$ is finite, and $L\textsc{-Constr-Sync}\in \PTIME$
 by Lemma~\ref{lem:finite}.
 We handle the remaining cases separately.
 
 \begin{enumerate}
 \item[(i)] Only $A_1$ is infinite.
  
  By assumption, every $A_j \subseteq \{a_j\}^*$, $j \in \{1,\ldots,k\}$,
  is recognizable by a single state automaton. Hence, by Lemma~\ref{lem::unary_single_final}, we can write, as $A_1$ is infinite,
   $A_1 = a_1^i (a_1^p)^*$ with $i \ge 0$ and $p > 0$.
  Let $\mathcal A = (\Sigma, Q, \delta)$ be an input semi-automaton
  for $L\textsc{-Constr-Sync}$.
  As $\delta(Q, a_1) \subseteq Q$,
  we have, for any $n \ge 0$, $\delta(Q, a_1^{n+1}) \subseteq \delta(Q, a_1^n)$.
  So, as $Q$ is finite and the sequence of subsets
  cannot get arbitarily small, for some $0 \le n < |Q|$
  we have $|\delta(Q, a_1^{n+1})| = |\delta(Q, a_1^n)|$.
  But $|\delta(Q, a_1^{n+1})| = |\delta(Q, a_1^n)|$,
  as $\delta(Q, a_1^{n+1}) \subseteq \delta(Q, a_1^n)$,
  implies $\delta(Q, a_1^{n+1}) = \delta(Q, a_1^n)$.
  Then, the symbol $a_1$
  permutes the set $\delta(Q, a_1^n)$.
  Hence, $\delta(Q, a_1^{n+m}) = \delta(Q, a_1^n)$ for any $m \ge 0$.
  So, combining these observations,
  \begin{equation}\label{eqn:case_one_P}
   \{ \delta(Q, a_1^n) \mid n \ge 0 \} = \{ \delta(Q, a_1^n) \mid n \in \{0,\ldots, |Q|-1\} \}
  \end{equation}
  and $\delta(Q, a_1^{|Q| - 1 + m}) = \delta(Q, a_1^{|Q|-1})$
  for any $m \ge 0$. 
  Now, note that the language $A_2 \cdots A_k$
  is finite. So, to find out if we have any
  $a_1^{i + lp} u$ with $u \in A_2 \cdots A_k$
  that synchronizes the input semi-automaton,
  we only have to test if any of the words
  $
   a_i^{i + lp} u,
  $
  with $u \in A_2 \cdots A_k$
  and $l$ such that $i + lp \le \max\{|Q|-1 + p, i\}$,
  synchronizes $\mathcal A$.
  The number (and the length) of these words is linear bounded in $|Q|$ 
  and each could be checked in polynomial time by 
  feeding it into the input semi-automaton for each state and checking
  if a unique state results.
  Hence the problem is solvable in polynomial time.
   
 \item[(ii)] Only $A_k$ is infinite.
 
  Let $u \in A_1 \cdots A_{k-1}$. By assumption, there are only finitely many such
  words $u$. Set $S = \delta(Q, u)$ and $T = \delta(Q, a_k^{|Q|-1})$.
  As in case (i), $a_k$ permutes the states in $T$
  and as $S \subseteq Q$, we have  $\delta(S, a_k^{|Q|- 1}) \subseteq T$.
  So, as $a_k$ permutes $T$, it acts injective on the
  subset $\delta(S, a_k^{|Q|- 1})$.
  This gives $|\delta(S, a_k^{|Q|- 1 + n})| = |\delta(S, a_k^{|Q|- 1})|$
  for any $n \ge 0$. Together with $|\delta(S, a_k^{n + 1})| \le |\delta(S, a_k^{n})|$,
  we have
  \begin{equation}\label{eqn:case_two_P}
   \exists n \in \mathbb N_0 : |\delta(S, a_k^{n})| = 1 \Leftrightarrow |\delta(S, a_k^{|Q|- 1})| = 1.
  \end{equation}
  Choose any fixed $N \ge |Q| - 1$ with $a_k^N \in A_k$.
  Then, with the above considerations, we only have to test the finite
  number of words
  \[
   u\cdot a_k^{N}, \quad u \in A_1 \cdots A_{k-1}.
  \]
  The length of these words is linear bounded in $|Q|$ and 
  as each test, i.e., feeding the word into the input semi-automaton
  for each state and testing if a unique state results,
  could be performed in polynomial time, the problem is solvable in polynomial time.
  
 \item[(iii)] Both $A_1$ and $A_k$ are infinite.
  
  This is essentially a combination of the arguments of case (i) and (ii).
  Let $\mathcal A = (\Sigma, Q, \delta)$ be an input semi-automaton
  and $\mathcal B = (\Sigma, P, \mu, p_0, F)$
  be a constraint automaton with $L = L(\mathcal B)$.
  First, consider only the language $A_1 \cdots A_{k-1}$.
  Then, as in case (i), see Equation~\eqref{eqn:case_one_P},
  \[
   \{ \delta(Q, a_1^n) \mid a_1^n \in A_1  \}
    = \{ \delta(Q, a_1^n) \mid 0 \le n < |Q| - 1 + |P|\mbox{ and } a_1^n \in A_1 \}.
  \]
  Note that we have written $0 \le n < |Q| - 1 + |P|$
  and not merely $\le |Q| - 1$ as an upper bound.
  The reason is that otherwise, if $a_1^{|Q|-1} \notin A_1$,
  we might miss the set $\delta(Q, a_1^{|Q|-1})$,
  but as $\delta(Q, a_1^{|Q|-1+m}) = \delta(Q, a_1^{|Q|-1})$
  for any $m \ge 0$ and $A_1$ is infinite, $\delta(Q, a_1^{|Q|-1}) \in \{ \delta(Q, a_1^n) \mid a_1^n \in A_1  \}$.
  However, if $a_1^n \in A_1$ for some $n \ge |Q| - 1 + |P|$,
  then, with $s = \mu(p_0, a_1^{|Q| - 1})$,
  by finiteness of $P$, among
  the states $s, \mu(s,a_1), \ldots, \mu(s, a_1^{n - |Q| + 1})$
  we find $0 \le m \le |P| - 1$ and $0 < r \le |P|$ with $m + r \le |P|$
  such that $\mu(s, a_1^{m+r}) = \mu(s, a_1^m)$.
  Then we have found a cycle and we can skip it, i.e.,
  \begin{align*}
   \mu(p_0, a_1^n) & = \mu(s, a_1^{n - |Q| + 1}) 
                     = \mu(\mu(s, a_1^{m+r}), a_1^{n - |Q| + 1 - (m+r)}) \\
                   & = \mu(\mu(s, a_1^m), a_1^{n - |Q| + 1 - (m+r)}) \\
                   & = \mu(s, a_1^{m + n - |Q| + 1 - m - r}) \\
                   & = \mu(p_0, a_1^{n-r}).
  \end{align*}
  But, as then $\mu(s, a_1^{n - r}) = \mu(s, a_1^n) \in F$
  we find $a_1^{n-r} \in A_1$. 
  Repeating this procedure, if $n - r \ge |Q| - 1 + |P|$,
  we ultimately find $|Q| - 1 \le m < |Q| - 1 + |P|$
  such that $a_1^m \in A_1$
  and $\delta(Q, a_1^{|Q| - 1}) = \delta(Q, a_1^m)$.
  Note that the language $A_2 \cdots A_{k-1}$
  is finite.
  Then, as in case (i), 
  we only have to consider the  words,
  whose length and number is linear bounded in $|Q|$,
  \[
   a_1^n \cdot u,\quad  0 \le n < |Q| - 1 + |P|, a_1^n \in A_1, u \in A_2 \cdots A_{k-1}
  \]
  and the corresponding sets
  \[
   S = \delta(Q, a_1^n \cdot u),
  \]
  and these are all possible sets in $\{ \delta(Q, a_1^n u) \mid a_1^n \in A_1, u \in A_2 \cdots A_{k-1} \}$.
  Fix any such subset $S$.
  Then, as in case (ii) and Equation~\eqref{eqn:case_two_P}, 
  choose any $N \ge |Q| - 1$ with $a_k^N \in A_k$ and
  we only have to compute $\delta(S, a_k^N)$
  and test if it is a singleton set.
  So, in total, we only have to test the words
  \[
   a_1^n \cdot u a_k^N, 0 \le n < |Q| - 1 + |P|, a_1^n \in A_1, u \in A_2 \cdots A_{k-1}.
  \]
  Their length and number is linear bounded in $|Q|$
  and computing the reachable state from each state of the input automaton,
  and testing if a unique state results, could be performed in polynomial
  time. Hence, the overall procedure could be performed in polynomial time.
 \end{enumerate}
 So, we have handled every case and the proof is complete.\qed
\end{proof}

Now, we state a sufficient condition for \NP-hardness over binary alphabets. 
This condition, together with Proposition~\ref{prop:hom_lower_bound_complexity},
allows us to handle the general case in Theorem~\ref{thm:dichotomy}.
Its application together with Proposition~\ref{prop:hom_lower_bound_complexity} shows, in some respect, that the language $ab^*a$
is the prototypical language giving \NP-hardness.
We give a proof sketch of Lemma~\ref{lem:np_hardness} at the end of this section.

\begin{toappendix}

In the proof of Lemma~\ref{lem:np_hardness}
we will need the following two lemmata.
For $n > 0$, set
\[
 L_n = (\Sigma^* a \Sigma^* b^{|P|} \Sigma^*)^n.
\]
Recall that $\mathcal B = (\Sigma, P, \mu, p_0, F)$.

\begin{lemma}
\label{lem:number_of_b_blocks}
 Let $\Sigma = \{a,b\}$ and $L(\mathcal B) \subseteq a_1^* \cdots a_k^*$
 with $a_i \in \Sigma $ and $n > 0$.
 Then, the following are equivalent:
 \begin{enumerate} 
 \item $L(\mathcal B) \cap L_n \ne \emptyset$,
 
 \item there exist $u_0, \ldots, u_n \in \Sigma^* a \Sigma^*$
  and $p_1, \ldots, p_n \ge |P|$
  such that \[
  u_0 b^{p_n} u_1 \cdots u_{n-1} b^{p_n} u_n \in L(\mathcal B),
  \]
 \item there exist $u_0, \ldots, u_n \in \Sigma^* a \Sigma^*$
  and $p_1, \ldots, p_n > 0$
  such that 
  \[ 
  u_0 (b^{p_1})^* u_1 \cdots u_{n-1} (b^{p_n})^* u_n \subseteq L(\mathcal B).
  \]
 \end{enumerate}
\end{lemma}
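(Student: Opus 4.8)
The plan is to establish the three-way equivalence by proving the cyclic chain $(1)\Rightarrow(2)\Rightarrow(3)\Rightarrow(1)$. The single engine behind all three implications is a pigeonhole argument on the constraint automaton $\mathcal B$: whenever a run of $\mathcal B$ reads a factor $b^{p}$ with $p \ge |P|$, it visits at least $|P|+1$ states while scanning that block, so two of them coincide and we obtain a $b$-labelled cycle of some period $q$ with $0 < q \le |P|$. This dictionary between \emph{a block of at least $|P|$ consecutive $b$'s} and \emph{a pumpable $b$-cycle} is exactly what converts the static membership condition $(1)$ into the dynamic star form $(3)$ and back, with $(2)$ serving as the intermediate ``already pumped once'' shape.

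For $(1)\Rightarrow(2)$ I would start from a word $w \in L(\mathcal B)\cap L_n$ and use its membership in $L_n = (\Sigma^* a \Sigma^* b^{|P|}\Sigma^*)^n$ to locate, in left-to-right order, $n$ occurrences of the pattern ``an $a$ followed later by a factor $b^{|P|}$''. Writing the located blocks as $B_1,\dots,B_n$ (each exactly $|P|$ consecutive $b$'s) and their guaranteed preceding $a$-occurrences as $a_1,\dots,a_n$, the factorization induced by $L_n$ orders the markers as $a_1 < B_1 < a_2 < B_2 < \cdots < a_n < B_n$ by position. Taking each $b^{p_j}$ to be exactly $B_j$ forces $p_j = |P| \ge |P|$, and letting the segments $u_i$ absorb the material lying between successive $B$-blocks makes each interior segment inherit one of the $a$'s, so that $w = u_0 b^{p_1} u_1 \cdots u_{n-1} b^{p_n} u_n$ lands in $L(\mathcal B)$ in the required shape.

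For $(2)\Rightarrow(3)$ I would run $\mathcal B$ on $w = u_0 b^{p_1} u_1 \cdots u_{n-1} b^{p_n} u_n$ and apply the pigeonhole observation to each block $b^{p_j}$ separately, extracting a cycle of period $q_j \in \{1,\dots,|P|\}$ inside it. Because $\mathcal B$ is deterministic and each such cycle returns to the state from which it was entered, inserting any number of copies of $b^{q_j}$ inside block $j$ leaves the state reached at the end of that block—and hence the entire remainder of the run—unchanged; the cycles at different blocks can therefore be pumped independently of one another. This gives $u_0 b^{p_1}(b^{q_1})^* u_1 \cdots u_{n-1} b^{p_n}(b^{q_n})^* u_n \subseteq L(\mathcal B)$, and regrouping each fixed factor $b^{p_j}$ into the preceding segment $u_{j-1}$ (which still carries its $a$) rewrites this set into the exact star shape of $(3)$, with periods $q_1,\dots,q_n$ playing the role of $p_1,\dots,p_n$.

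Finally, $(3)\Rightarrow(1)$ is the short direction: given $u_0 (b^{p_1})^* u_1 \cdots u_{n-1}(b^{p_n})^* u_n \subseteq L(\mathcal B)$, choose in each star a multiple $m_j p_j \ge |P|$, which produces the single word $u_0 b^{m_1 p_1} u_1 \cdots u_{n-1} b^{m_n p_n} u_n \in L(\mathcal B)$; grouping it as $(u_0 b^{m_1 p_1})\cdots(u_{n-2}b^{m_{n-1}p_{n-1}})(u_{n-1}b^{m_n p_n}u_n)$ exhibits each group as an element of $\Sigma^* a \Sigma^* b^{|P|}\Sigma^*$, so the word lies in $L_n$ as well. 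I expect the genuine work to sit in $(1)\Rightarrow(2)$—the bookkeeping that aligns the $n$ pattern occurrences with a clean factorization and correctly accounts for the leftover boundary segment—together with the argument in $(2)\Rightarrow(3)$ that the several pumped cycles are truly independent; the standing letter-bounded hypothesis $L(\mathcal B)\subseteq a_1^*\cdots a_k^*$ bounds the number of alternations between $a$-runs and $b$-runs and keeps these factorizations well behaved.
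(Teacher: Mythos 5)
Your proposal follows the same route as the paper's own (very terse) proof: the cyclic chain $(1)\Rightarrow(2)\Rightarrow(3)\Rightarrow(1)$, with $(2)\Rightarrow(3)$ obtained by a pigeonhole argument extracting a $b$-labelled cycle of period at most $|P|$ inside each block (the paper just says the blocks ``have to induce a loop in $\mathcal B$''), and $(3)\Rightarrow(1)$ obtained by instantiating each star with a multiple of $p_j$ of length at least $|P|$ and regrouping (the paper takes the specific multiple $|P|\cdot p_j$). These two implications are fine in your write-up; in particular, your observation that determinism makes the pumped cycles independent across blocks --- each pumping fixes the state at the end of its block, hence the remainder of the run --- is a correct filling-in of what the paper leaves implicit, as is the regrouping $u_{j-1}' = u_{j-1}b^{p_j}$ that brings the pumped word into the exact shape of $(3)$.

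There is, however, a concrete gap in your $(1)\Rightarrow(2)$, and it is worth naming because it cannot be patched: your factorization places an $a$ into $u_0$ and into each \emph{interior} segment $u_1,\dots,u_{n-1}$, but statement $(2)$ also demands $u_n \in \Sigma^* a \Sigma^*$, i.e., an occurrence of $a$ \emph{after} the last $b$-block --- and membership in $L_n = (\Sigma^* a \Sigma^* b^{|P|}\Sigma^*)^n$ provides no such occurrence, since the trailing $\Sigma^*$ may be $a$-free. In fact the implication is false as stated: for $L(\mathcal B) = ab^*$ (which is letter-bounded, with $|P| = 2$ for the minimal PDFA) we have $ab^{|P|} \in L(\mathcal B) \cap L_1$, yet any word $u_0 b^{p_1} u_1$ with $u_0, u_1 \in \Sigma^* a \Sigma^*$ contains at least two occurrences of $a$ and so cannot lie in $ab^*$; thus $(1)$ holds while $(2)$ and $(3)$ fail. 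To be fair, the paper's proof dismisses $(1)\Rightarrow(2)$ as obvious and therefore has the identical hole; the defect really sits in the statement itself, since the definition of $L_n$ lacks the trailing-$a$ requirement that clauses $(2)$ and $(3)$ impose (a correct version would redefine $L_n$ as, e.g., $(\Sigma^* a \Sigma^* b^{|P|})^n\,\Sigma^* a \Sigma^*$, matching the hypothesis $L(\mathcal B)\cap \Sigma^* a b^{|P|}b^* a \Sigma^* \ne \emptyset$ under which the lemma is later applied). So: same approach as the paper, and the same blind spot at the same step.
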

\begin{proof}
 That (1) implies (2) is obvious.
 As $p_i \ge |P|$, when reading these factors they have to induce a loop in $\mathcal B$,
 which implies (3).
 Lastly, if (3) holds true, as
 \[
  u_0 b^{|P|\cdot p_1} u_1 \cdots u_{n-1} b^{|P| \cdot p_n} u_n \in L(\mathcal B)
 \]
 and $u_i \in \Sigma^* a \Sigma^*$,
 we also find $u_0 b^{|P|\cdot p_1} u_1 \cdots u_{n-1} b^{|P| \cdot p_n} u_n \in L_n$
 and (1) follows.\qed
\end{proof}

\begin{lemma}
\label{lem:maximal_n}
 Let $\Sigma = \{a,b\}$ and $L(\mathcal B) \subseteq a_1^* \cdots a_k^*$
 with $a_i \in \Sigma $.
 Then, there exists a maximal $n$
 such that $L(\mathcal B) \cap L_n \ne \emptyset$
 and for this maximal $n$,
 we can assume that $u_i \notin \Sigma^* b^{|P|} \Sigma^*$
 for the $u_i$, $i \in \{0,\ldots,n\}$, as in the previous lemma
 and $n \le |P|$. 
\end{lemma}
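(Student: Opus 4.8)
The plan is to prove the two assertions separately: first the bound $n\le|P|$ (which, since $L_{m+1}\subseteq L_m$, also shows that $\{\,m: L(\mathcal B)\cap L_m\neq\emptyset\,\}$ is a finite initial segment of $\mathbb N$ and hence, when non-empty, has a maximum), and then the normalisation $u_i\notin\Sigma^* b^{|P|}\Sigma^*$ for this maximal $n$. The engine of both parts is that a block $b^{p}$ with $p\ge|P|$, when read in $\mathcal B$, must traverse a repeated state and thus forces a $b$-loop, while membership in $a_1^*\cdots a_k^*$ limits the number of letter changes in any word of $L(\mathcal B)$ to at most $k-1$.

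For the bound I would start from a witness $w=u_0 b^{p_1}u_1\cdots u_{n-1}b^{p_n}u_n\in L(\mathcal B)$ as provided by Lemma~\ref{lem:number_of_b_blocks}, with $u_i\in\Sigma^* a\Sigma^*$ and $p_i\ge|P|$, and read $w$ in $\mathcal B$. By the pigeonhole principle the run of states read during each block $b^{p_i}$ repeats a state, so I may fix a state $q_i$ on that run with $\mu(q_i,b^{s_i})=q_i$ for some $1\le s_i\le|P|$. The key claim is that $q_1,\dots,q_n$ are pairwise distinct, which gives $n\le|P|$ at once. If instead $q_i=q_j=:q$ with $i<j$, let $x$ be the prefix read up to the chosen occurrence of $q$ in block $i$, let $z$ be the suffix read from the chosen occurrence of $q$ in block $j$, and let $y$ be the infix in between, so that $\mu(p_0,x)=q$, $\mu(q,z)\in F$ and $\mu(q,y)=q$. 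Since $y$ contains all of $u_i$ we have $|y|_a\ge1$, whereas $b^{s_i}\in b^+$ contains no $a$; hence for every $m\ge0$
\[
 \mu(p_0,\,x\,(y\,b^{s_i})^m\,z)=\mu(q,z)\in F ,
\]
so $x(y b^{s_i})^m z\in L(\mathcal B)\subseteq a_1^*\cdots a_k^*$. In $(y b^{s_i})^m$ each copy of $y$ contributes an $a$, and the $a$'s of consecutive copies are separated by the non-empty all-$b$ factor $b^{s_i}$, so this word has at least $m-1$ letter changes. But words in $a_1^*\cdots a_k^*$ have at most $k-1$ letter changes, a contradiction once $m>k$. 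This proves distinctness and the bound.

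For the normalisation I would take a factorization witnessing the maximal $n$ and remove long $b$-runs from the $u_i$ by operations that never destroy the $n$ blocks $b^{p_1},\dots,b^{p_n}$. First, \emph{boundary redrawing}: a leading $b$-run of $u_i$ with $i\ge1$ abuts $b^{p_i}$ and a trailing $b$-run of $u_i$ with $i\le n-1$ abuts $b^{p_{i+1}}$, so in either case I push those $b$'s into the adjacent block (its exponent stays $\ge|P|$) and shorten $u_i$, leaving $w$ unchanged. Second, \emph{loop pumping of $u_0$}: if $u_0$ begins with at least $|P|$ letters $b$, reading them forces a $b$-loop at a state reachable from $p_0$ by fewer than $|P|$ letters $b$; deleting one period shortens the leading run while keeping the word in $L(\mathcal B)$ and in $L_n$, and iterating brings that run below $|P|$. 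After these steps, any residual $b$-run of length $\ge|P|$ inside some $u_i$ has an $a$ of $u_i$ before it, and, unless $i=n$, also an $a$ after it; writing $u_i=u_i'b^{|P|}u_i''$ with $u_i'\in\Sigma^* a\Sigma^*$ then splits the factor $u_i b^{p_{i+1}}$ into $u_i'b^{|P|}$ and $u_i''b^{p_{i+1}}$, both in $\Sigma^* a\Sigma^* b^{|P|}\Sigma^*$ (for $i=n$ the trailing factor $u_i$ is itself in $\Sigma^* a\Sigma^* b^{|P|}\Sigma^*$), exhibiting $w\in L_{n+1}$ and contradicting maximality. Hence no such run survives and $u_i\notin\Sigma^* b^{|P|}\Sigma^*$ for all $i$.

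The main obstacle I expect is precisely this last case distinction. It is \emph{not} true that every occurrence of $b^{|P|}$ inside a $u_i$ immediately yields an $(n+1)$-st block: to split off a new block one needs an $a$ available on the correct side, which fails for a $b$-run sitting at the very start of $u_0$ or at a boundary adjacent to an existing block. That is why those runs must be handled constructively — by loop pumping and by redrawing the factorization boundaries — rather than by a direct appeal to maximality. Making rigorous that, after these two reductions, every surviving long $b$-run is genuinely flanked by $a$'s (so that maximality can be invoked) is the delicate part of the argument; the bound $n\le|P|$ itself is comparatively routine once the loop-state argument is set up.
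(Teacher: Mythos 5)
Your proposal is correct, but it takes a genuinely different route from the paper, and in one place it is more careful than the paper itself. For the bound $n \le |P|$, the paper argues structurally: it first shows that (after deleting non-coaccessible states) $\mathcal B$ must be polycyclic --- otherwise two distinct loops at a common state would yield exponentially many accepted words, contradicting sparseness of letter-bounded languages --- then that every cycle is labelled by a single letter, so that each letter switch in an accepted word must enter a previously unvisited state, which caps the number of blocks at $|P|$. You instead work directly with a single witness: pigeonhole places a $b$-loop state $q_i$ inside each block $b^{p_i}$, and if two of these states coincided, pumping the intervening factor (which contains an $a$) together with the $b$-loop would produce accepted words with arbitrarily many letter alternations, contradicting $L(\mathcal B) \subseteq a_1^* \cdots a_k^*$, whose words have at most $k-1$ alternations. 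This is more elementary --- no appeal to sparseness or to the polycyclic characterization --- and yields the same bound; the paper's route, in exchange, extracts a reusable structural fact about $\mathcal B$ (single-letter cycles) that it also leans on in the surrounding proofs. For the normalization $u_i \notin \Sigma^* b^{|P|} \Sigma^*$, the paper disposes of it in one sentence as ``implied by Lemma~\ref{lem:number_of_b_blocks} and the maximality of $n$,'' whereas you correctly identify that maximality alone cannot remove a long $b$-run at the very start of $u_0$ (there is no $a$ to its left from which to split off an $(n+1)$-st block), and you close this gap constructively by pumping that run down through a $b$-loop near $p_0$; your boundary-redrawing plus splitting argument then handles every remaining run, including the trailing run of $u_n$, for which the last factor needs no $b$-block of its own. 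So your treatment of the second assertion supplies exactly the detail the paper glosses over.
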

\begin{proof}
 Recall $\mathcal B = (\Sigma, P, \mu, p_0, F)$
 Note that $\mathcal B$ must necessarily be polycylic (this is a slightly stronger
 claim than Theorem~\ref{thm:bounded_characterization}, as this theorem
 only asserts existence of some polycyclic automaton)
 after removing all states that are not coaccessible, i.e., states from which no final state is reachable,
 which could obviously be done without altering $L(\mathcal B)$.
 For if $\mathcal B$ is then not polycyclic, then some strongly
 connected component does not consists of a single cycle only
 and we find two distinct words $u, v$ and a state $p \in P$
 such that $\mu(p, u) = \mu(p, v) = p$ (see also the forbidden
 pattern in~\cite[Theorem 4.29]{Pin2020}).
 But then, if we choose $x,y \in \Sigma^*$
 such that $\mu(p_0, x) = p$
 and $\mu(p, y) \in F$, we find $x(u+v)^*y \subseteq L(\mathcal B)$.
 Set $m = \max\{|u|, |v|\}$
 Then, for $i > 0$,
 \[
  \{ w \in x(u+v)^*y : |w| \le |x| + i \cdot m\}
 \]
 contains $x(u+v)^i$, and $|x(u+v)^i| = 2^i$.
 So, $L(\mathcal B) \cap \{ w \in \Sigma^* : |w| \le n \}$
 contains at least $2^{\lfloor n - (|x| - |y|) / m \rfloor}$
 many words, i.e., it not sparse.
 Furthermore, as $L \cap \Sigma^n \in O(n^c)$
 as a function of $n$ if and only if $L \cap \{ w \in \Sigma^* : |w| \le n \} \in O(n^{c'})$
 as a function of $n$ for some $c,c' \ge 0$,
 the claim follows.
 
 So, we can assume $\mathcal B$ is polycyclic and every state is coaccessible.
 Now, note that this implies that every loop in $\mathcal B$ (or strongly connected component
 in this case) must be labelled by a single letter, for if we
 have $\mu(p, u) = p$ with $|u|_a > 0$ and $|u|_b > 0$
 and choose again $x,y$ such that $\mu(p_0, x) = p$
 and $\mu(p, y) \in F$,
 we find $xu^ky \in L(\mathcal B)$, which contradict $L(\mathcal B) \subseteq a_1^* \cdots a_k^*$.
 
 But then, note that if, for example, $aba \in L(\mathcal B)$,
 we must have $|P| \ge 2$, as $\mu(p_0, ab) \notin \{ p_0, \mu(p_0,a) \}$.
 Similarly, if we have a word that switches letters, every time a letter-switch
 occurs the state we end up in $\mathcal B$ must be a new state not visited before,
 for otherwise we would have a loop whose transition are not exclusively
 labelled by a single letter.
 
 So, this implies that 
 if we have a word as written in Lemma~\ref{lem:number_of_b_blocks}
 in $L(\mathcal B)$, then $n \le |P|$
 which implies that we can find a maximal $n$.
 That $u_i \notin \Sigma^* b^{|P|} \Sigma^*$
 is also implied by Lemma~\ref{lem:number_of_b_blocks}
 and the maximality of $n$. \qed
\end{proof}
\end{toappendix}

\begin{lemmarep}
\label{lem:np_hardness}
 Suppose $\Sigma = \{a,b\}$. 
 Let $L(\mathcal B) \subseteq \Sigma^*$ be letter-bounded.
 Then, $L(\mathcal B)$\textsc{-Constr-Sync}
 is $\NP$-hard 
 if $L(\mathcal B) \cap \Sigma^* a  b^{|P|}b^*  a \Sigma^* \ne \emptyset$.
\end{lemmarep}
\begin{toappendix}
\begin{figure}[htb]
     \centering
     \hspace*{-2.5cm}
\includegraphics[width=17cm]{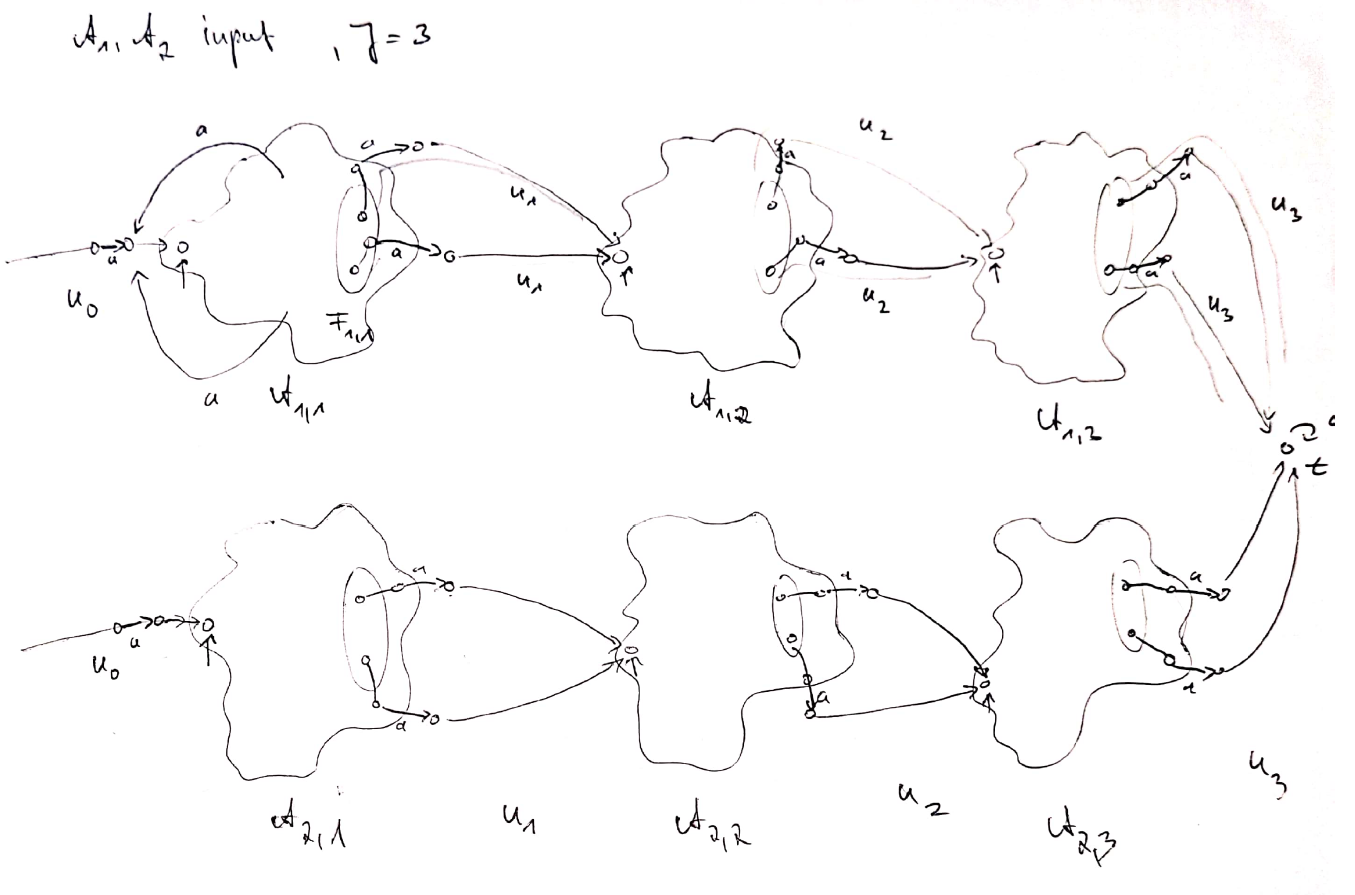}
  \caption{
   The reduction from the proof of Lemma~\ref{lem:np_hardness}
   in the special case $J = 3$ (see the proof for the definition of $J$)
   and two input automata $\mathcal A_1, \mathcal A_2$ over $\{b\}$. The automata
   $\mathcal A_{i,j}$ are inflated, according to Definition~\ref{def:inflate_aut},
   copies of $\mathcal A_i$
   for $i \in \{1,2\}$, $j \in \{1,2,3\}$. 
   The letter $a$ maps
   every state not associated with a path inside each $\mathcal A_{i,j}$ to the last innermost state that 
   is hit by an $a$ along the path leading into this automaton. This is only drawn for $\mathcal A_{1,1}$ but left
   out for the other automata, also, to give a more ``high-level'' drawing, the $b$-transitions
   are not drawn. On the right end is the sink state $t$. The paths stay inside the automata but leave
   as soon as an $a$ is read.}
  \label{fig:reduction}
\end{figure}

\begin{proof}[Proof of Lemma~\ref{lem:np_hardness}]
First, using Lemma~\ref{lem:maximal_n},
choose $J > 0$ maximal such that
\[
 L(\mathcal B) \cap (\Sigma^* a \Sigma^* b^{|P|} \Sigma^*)^J \ne \emptyset.
\]
As stated in the lemma, we have $J \le |P|$ (which implies the constrution to follow could
be carried out in polynomial time).
Then, by Lemma~\ref{lem:number_of_b_blocks}, there
exist $u_0, \ldots, u_J \in \Sigma^* a \Sigma^*$
and $p_1, \ldots, p_J > 0$ ($J > 0$) such that \todo{Anderer Bezeichner als $J$?}
\[
 u_0 (b^{p_1})^* u_1 \cdots u_{J-1} (b^{p_J})^* u_J \subseteq L(\mathcal B).
\]
 Let $N$ be $|P|$ times the least common multiple of the numbers $p_1, \ldots, p_J$.
 We give a reduction from the {\sc DFA-Intersection} for unary 
 input automata, which is \NP-complete in this case~\cite{stockmeyer1973word,fernau2017problems}.
 Let $\mathcal A_i = (\{b\}, Q_i, \delta_i, q_i, F_i)$
 for $i \in \{1,\ldots,k\}$ be unary input automata, and we want to know
 if they all accept a common word. The problem remains
 \NP-complete if we assume for no input automaton, a start state is also a final state.
 This is easily seen but could also be shown similar to~\cite[Proposition 1]{DBLP:conf/ictcs/Hoffmann20}.
 Also, we can assume $F_i \ne \emptyset$ for all $i \in \{1,\ldots,k\}$.

 We are going to construct a semi-automaton $\mathcal C = (\{a,b\}, Q, \delta)$. 

 Write $u_i = u_{i,1} \cdots u_{i,|u_i|}$ with $u_j \in \Sigma$.
 For each $i \in \{0,\ldots,J\}$, we construct a path labelled with $u_i$.
 Formally, let $P_i = \{ q_{i, 0}, \ldots, q_{i,|u_i|} \} \subseteq Q$
 be new states and set
 \[
  \delta(q_{i,j-1}, u_j) = q_{i,j}.
 \]
 Then, for each $\mathcal A_i$
 we construct $J$ (disjoint) 
 copies of $\mathcal A_i$ and inflate
 them according to Definition~\ref{def:inflate_aut} by $N$.
 Call the results $\mathcal A_{i, 1}, \mathcal A_{i,2}, \ldots, \mathcal A_{i,J}$
 with $\mathcal A_{i,j} = (\{b\}, Q_{i,j}, \delta_{i,j}, s_{i,j}, F_{i,j})$.
 Note these are unary automata over the letter $b$.
 Also, let $t \in Q$ be a new state, which will be a (global) sink state in $\mathcal C$, i.e.,
 we set $\delta(t, a) = \delta(t, b) = t$.
 Next, we describe how we interconnect these automata with the paths and with $t$.
 See also Figure~\ref{fig:reduction} for a sketch of the reduction in the special case $J = 3$
 and two input automata.

 \begin{enumerate}
 \item Let $j \in \{1,\ldots, J\}$. For each final state $q \in F_{i,j}$
  let $P_{i,q}$ be a disjoint copy of the path $P_i$ constructed above,
  except for one final state $q$ were we simply retain the path $P_i$, but also name it by $P_{i,q}$.
  By identifying states, we mean states that we have previously constructed are now merged
  to a single state in $Q$. We have to pay attention that this procedure does not introduces
  any non-determinism.
  We identify the state $q_{i,0}$ with $q$
  and continue to identify the states $q_{i,j}$ and $q' \in Q_{i,j}$
  if $q_{i,j-1}$ and $q'' \in Q_{i,j}$ were identified
  and $u_{i,j} = b$ and $q' = \delta_{i,j}(q'', b)$. As $u_i \in \Sigma^* a \Sigma^*$, this process has to come to a halt
  before we have identified $< J$ states.
  Note that the first state such that $q_{i,j-1}$ and $q''\in Q$
  were identified but not $q_{i,j}$ and $\delta_{i,j}(q'', b)$, i.e., were $u_{i,j} = a$,
  we have added an $a$-transition to $q_{i,j}$
  from $q'' = q_{i,j-1}$ in $\mathcal A_{i,j}$, i.e., this is the first
  $a$-transition we have added to $\mathcal A_{i,j}$ and it branches out of $\mathcal A_{i,j}$.
  
  Then, if $j \le J - 1$,
  identify the state $q_{i,J}$ with the start state $s_{i,j+1}$ of $\mathcal A_{i,j+1}$, i.e.,
  the path $P_{i,q}$ ends at this state.
  And if $j = J$, we identify the state $q_{i,J}$ with $t$.
  
 \item For the path $P_0$ identify its end state $q_{i,|u_0|}$
  with the start state $s_{i,1}$ of $\mathcal A_{i,1}$.
     
 \item Up to now, we still hav emissing transitions. In all the paths created, 
  every missing $b$-transition, i.e., were we have a state with an $a$-transition
  leading out but no $b$-transition, we add a self-loop labelled with $b$ to that state.
  For each path $P$ (including the copies constructed in the first step)
  let $p \in P$ be that state closest to the end state, but that does not equal
  the end state (by the identifications above, some end state might already have an $a$-transition
  that goes out of some automaton $\mathcal A_{i,j}$) and has an outgoing $a$-transition.
  Such a state exists as the $u_i \in \Sigma^* a \Sigma^*$.
  \todo{Hier uU statt auf den Zustand immer auf den letzten Zustand davor mit einer $a$-Transition mappen?}
  Then, for every state in $P$ that does not have an $a$-transition
  we add an $a$-transition going to $p$.
  Consider $\mathcal A_{i,j}$ and let $P$ some path (the specific choice does not matter)
  ending at the start state of $\mathcal A_{i,j}$.
  For each state $q \in Q_{i,j}$ that does not has an outgoing $a$-transition up to now,
  add an $a$-transition going to the state $p \in P$ described above in that path.
  This ensures later that, by reading an $a$, we end up in a well-defined situation.
 \end{enumerate}
 Then, put all the states created so far, i.e., those of the $\mathcal A_{i,j}$
 and those of the paths constructed, into $Q$ (note for each $i \in \{1,\ldots,k\}$
 we have constructed paths and automata, intuitively we have copied each $\mathcal A_i$, inflated
 the copies and interconnected them with the paths given by the $u_i$)
 and let $\delta$ be the transition as defined above or as given by $\mathcal A_{i,j}$
 on the state of these automata.

 We need the following property of $\mathcal C$. Suppose $i \in \{1,\ldots,k\}$, 
 $j \in \{1,\ldots,|J|\}$ and $w \in \{a,b\}^*$.
 
  \medskip 
 
\noindent\underline{Claim:}
  Let $q \in Q_{i,j} \setminus F_{i,j}$ with $\delta(q, w) = t$.
  Then, there exist 
  \[ 
  u_1, u_2 \cdots, u_{|J|-i+1} \subseteq \{b\}^*
  \]
  and $u,v\in \{a,b\}^*$
  such that $|u_i| \ge N$ and $|u_i|$ is divisible by $N$
  for all $i \in \{1,\ldots,|J|-i+1\}$
  and $v_1, \ldots, v_{|J|-i+1} \in \{a,b\}^*a\{a,b\}^*$
  so that 
  \[
   w = vu_1 v_1 u_2 v_2 \cdots u_{|J|-i+1} v_{|J|-i+1} u
  \]
  and $v \notin \Sigma^* b^N \Sigma^*$.
 \begin{quote} 
     \emph{Proof of the Claim.} First, the state $q \in Q_{i,j}$ has to be mapped to a final
     state, which could only be done by a word containing at least $N$
     times the letter $b$,
     as in the inflated construction we can only go from non-auxiliary states
     to non-auxiliary states by reading at least that number of letters.
     However, before that we might read some word $v \in \{a,b\}^*$ that moves states around, does
     not has a consecutive sequence of more than $N$ $b$'s and hence, every $a$
     goes back to the start state. But at some point, this has to come to an end and we have
     to read a sequence of more than $N$ consecutive $b$'s.
     Additionally, by the construction of the inflation, the word
     that moves from a non-auxiliary state to another non-auxiliary state
     must have a number of $b$'s that is divisible by $N$.
     Also, observe that such a word must consists entirely of $b$, because
     for non-final states in $\mathcal A_{i,j}$ every $a$ maps back to the start state.
     Then, by construction (recall $u_i \in \Sigma^* a \Sigma^*$ for the labels
     of the paths constructed above) to move between the automata $\mathcal A_{i,j}$
     inside of $\mathcal C$
     we have to traverse a path
     where, on some part, we can only move forward by reading the letter $a$.
     After this, when we are at the start state of $\mathcal A_{i,j+1}$,
     as by assumption the start state is not final, we again have to read at least $N$
     times the letter $b$ and so on, until we have reached a final 
     state in $\mathcal A_{i,|J|}$.
     Then, we have to read at least one $a$ to map the final state to $t$, from which
     on, as $t$ is a sink state, we can read any word. 
     \emph{[End, Proof of the Claim]}
 \end{quote}

 The automaton $\mathcal C$ has a synchronizing word in $L$
 if and only if all the $\mathcal A_i$, $i \in \{1,\ldots,k\}$,
 accept a common word.
 
 \begin{enumerate}
 \item Assume we have a word $b^n$ accepted by all $\mathcal A_i$ for $i \in \{1,\ldots,k\}$. 
 Then, for
 \[
  w = u_0 b^{N\cdot n} u_1 \cdots u_{J-1} b^{N\cdot n} u_J
 \] 
 we have $w \in L$ and $w$ synchronizes $\mathcal A$.
 Note that, after reading $u_{j-1}$,
 the automaton $\mathcal A_{i,j}$
 is either in its start state, or the final $a$ in $u_{j-1}$
 has mapped some state in $\mathcal A_{i,j}$ to a state outside of $Q_{i,j}$.
 So, when reading $b^{N\cdot n}$, 
 as $\mathcal A_{i,j}$ equals  the inflation of $\mathcal A_i$ by $N$,
 we end up in a final state $F_{i,j}$.
 Then, we read $u_j$ to map those final states to the start state
 of the next automaton $\mathcal A_{i,j+1}$ or to $t$ if $j = J$.
 Note that all states in-between are either mapped
 to a start state of some $\mathcal A_{i,j}$, moved inside of some
 $\mathcal A_{i,j}$, or, when an $a$ is read and they are not mapped
 back to a state that ultimately ends in a start state of some $\mathcal A_{i,j}$
 are moved toward the state $t$.
 As we always read enough $a$ to always make a step towards the sink state $t$
 the result follows.\todo{genauer}

 \item  Assume $\mathcal A$ has a synchronizing word $w \in L$.
  Then, as $t$ is a sink state, the word $w$ must map every state to $t$.
  Consider the start state of some $\mathcal A_{i,1} = (\{b\}, Q_{i,1}, \delta_{i,1}, q_{i,1}, F_{i,1})$.
  By the above claim,  
  there exist $u_1, u_2 \cdots, u_{J} \subseteq \{b\}^*$
  such that $|u_i| \ge N$ and $|u_i|$ is divisible by $N$
  for all $i \in \{1,\ldots,|J|\}$
  and $v_1, \ldots, v_{J} \in \{a,b\}^*a\{a,b\}^*$ and $v, u \in \{a,b\}^*$
  so that 
  \[ 
    w = vu_1 v_1 u_2 v_2 \cdots u_{J} v_{J} u.
  \]
  By the above claim, Lemma~\ref{lem:maximal_n} and the maximal choice of $J$,
  we have 
  \[ 
  \{ v, v_1, \ldots, v_J \} \cap \Sigma^* b^{|P|} \Sigma^* = \emptyset,
  \] 
  i.e.,
  these words does not contains a sequence of more than $|P|$, and so in particular not more than $N$,
  consecutive $b$'s.

  Now, let $b^n$ be a maximal non-empty factor whose length $n$ is divisible by $N$ of $vu_1 v_1$
  and using only the letter $b$.
  Note that, by construction of $\mathcal A_{i,1}$,
  if we write $vu_1 v_1 = x b^n y$,
  we have $\delta_{i,1}(q_{i,1}, x) = q_{i,1}$.
  Then, we claim that $b^{n / N}$ is accepted
  by every automaton $\mathcal A_i$. 
  Fix an index $i \in \{1,\ldots,k\}$.
  By the construction of the inflation, this is equivalent
  with the condition that $b^n$ drives every automaton
  $\mathcal A_{i,j}$ for $j \in \{1,\ldots,|J|\}$
  from the start state to some final state.
  Suppose this is not the case. As the automata $\mathcal A_{i,j}$
  are isomorphic, i.e., they are copies of each other, we can assume this is not the case for $\mathcal A_{i,1}$, i.e., 
  we have $\delta_{i,1}(q_{i,1}, b^n) \notin F_{i,1}$.
  Then, consider the following suffix of $w$ (recall $xb^n y = v u_1 v_1$, and $y$ has to start with an $a$) 
  \[
   y u_2 v_2 \cdots u_{|J|} v_{|J|} u.
  \] 
  Note that if we have in $u$ a consecutive sequence of $b$'s
  of length more than $N$, the rest of $u$ also must consist of $b$'s only, i.e.,
  we cannot read an $a$ anymore.
  For suppose this is not the case and $u \in \Sigma^* b^N \Sigma^* a \Sigma^*$.
  We have $\delta(q_{i,1}, xb^n) = \delta(q_{i,1}, b^n) \in Q_{i,1} \setminus F_{i,1}$.
  By assumption, $\delta(q_{i,1}, w) = t$,
  and so we must have $\delta(q_{i,1}, y u_2 v_2 \cdots u_{|J|} v_{|J|} u) = t$.
  Applying the above claim again,
  yields that we can factorize $y u_2 v_2 \cdots u_{|J|} v_{|J|} u$
  such that we have at least $|J|$ blocks of consecutive 
  $b$'s broken up by at least one occurrence of the letter $a$
  between each such block.
  However, then 
  then
  \[
   w = x b^n y u_2 v_2 \cdots u_{|J|} v_{|J|} u,
  \]
  as $y$ starts with an $a$, we would get a factorization
  of $w$ with $|J| + 1$ blocks of consecutive $b$'s separated by words
  with at least one $a$, which is not possible by the maximal choice
  of $J$ and Lemma~\ref{lem:number_of_b_blocks}.
  
 \end{enumerate}
 So, this shows that this is a valid reduction.\qed
\end{proof}
\end{toappendix}

So, finally, we can state our main theorem of this section.
Recall that by Theorem~\ref{thm:sparse_in_NP},
and as the class of bounded regular languages equals
the class of sparse regular languages~\cite{DBLP:journals/eik/LatteuxT84}, for bounded regular constraint
languages, the constrained problem is, in our case, in \NP.

\begin{theoremrep}[Dichotomy Theorem]
\label{thm:dichotomy}
 %
 %
 Let $a_1, \ldots, a_k \in \Sigma$ be a sequence of letters
 and $L \subseteq a_1^* \cdots a_k^*$ be regular.
 The problem $L\textsc{-Constr-Sync}$
 is \NP-complete if
 \[
  L \cap \left(\bigcup_{\substack{1 \le j_1 < j_2 < j_3 \le k \\ a_{j_2} \notin \{a_{j_1}, a_{j_3}\} }} L_{j_1,j_2,j_3} \right) \ne \emptyset
 \]
 with $L_{j_1, j_2, j_3} = \Sigma^* a_{j_1} \Sigma^* a_{j_2}^{|P|} \Sigma^* a_{j_3} \Sigma^*$
 for $1 \le j_1 < j_2 < j_3 \le k$ and solvable in polynomial time otherwise.
\end{theoremrep}
\begin{proof}
 Set $L = L(\mathcal B)$.
 Let $j_1, j_2, j_3 \in \{1,\ldots,k\}$
 be such that $a_{j_2}\notin\{a_{j_1},a_{j_3}\}$, $j_1 < j_2 < j_3$
 and
 \[
  L(\mathcal B) \cap L_{j_1, j_2, j_3} \ne \emptyset.
 \]
 Then, there exists a word $u_1 a_{j_1} u_2 a_{j_2}^{|P|} u_3 a_{j_3} u_4 \in L(\mathcal B)$
 with $u_1, u_2, u_3, u_4 \in \Sigma^*$.
 By the pigeonhole principle, when reading the factor $b^{|P|}$,
 at least one state has to be traversed twice 
 and we find $p > 0$ such that $u_1 a u_2 b^{|P| + i\cdot p} u_3 a u_4$
 for any $i \ge 0$.

 Define a homomorphism $\varphi : \Sigma^* \to \{a,b\}^*$
 by $\varphi(a_{j_1}) = \varphi(a_{j_3}) = a$,
 $\varphi(a_{j_2}) = b$
 and, for the remaining letters, $\varphi(a) = \varepsilon$,
 if $a \in \Sigma \setminus \{a_{j_1}, a_{j_2}, a_{j_3}\}$.
 Then, $\varphi(L) \subseteq \varphi(a_1)^* \cdots \varphi(a_k)^*$
 is letter-bounded. 
 Set $\Gamma = \{a,b\}$ and let $\mathcal B' = (\Gamma, P', \mu', p_0', F')$
 be a recognizing PDFA for $\varphi(L)$.
 We have
 \[
  \varphi(u_1) a \varphi(u_2) b^{|P| + i\cdot p} \varphi(u_3) a \varphi(u_4) \in \varphi(L) 
 \]
 for any $i \ge 0$. So,
 $
 \varphi(L) \cap \Gamma^* a^+ \Gamma^* b^{|P'|}b^* \Gamma^* a^+ \Gamma^* \ne \emptyset.
 $
 By Lemma~\ref{lem:np_hardness}, $\varphi(L)$\textsc{-Constr-Sync}
 is \NP-hard and so, by Proposition~\ref{prop:hom_lower_bound_complexity},
 also $L\textsc{-Constr-Sync}$ is \NP-hard, and so, with Theorem~\ref{thm:sparse_in_NP},
 \NP-complete.

 Now, suppose
  $
 L(\mathcal B) \cap \left(\bigcup_{\substack{1 \le j_1 < j_2 < j_3 \le k \\ a_{j_2} \notin \{a_{j_1}, a_{j_3}\} }} L_{j_1,j_2,j_3} \right) = \emptyset.
 $
  By Theorem~\ref{thm:bounded_regular_form}, we can
 write 
 $L(\mathcal B) = \bigcup_{i=1}^n A_1^{(i)} \cdots A_k^{(i)}$ 
 with unary regular languages $A_j^{(i)} \subseteq \{a_j\}^*$
 for $j \in \{1,\ldots,k\}$.
 Then, 
 \[ 
 ( A_1^{(i)} \cdots A_k^{(i)} ) \cap \left(\bigcup_{\substack{1 \le j_1 < j_2 < j_3 \le k \\ a_{j_2} \notin \{a_{j_1}, a_{j_3}\} }} L_{j_1,j_2,j_3} \right) = \emptyset
 \]
 for any $i \in \{1, \ldots, n\}$.
 However, this implies that for any $i \in \{1,\ldots,n\}$, if there exists $j \in \{1,\ldots, k\}$
 such that $A_j^{(i)}$ is infinite, 
 then for all $j' < j$, or for all $j' > j$, 
 we have $A_{j'} \subseteq \{a_j\}^*$ (recall that if $A_{j'} = \{\varepsilon\}$, then
 this is also fulfilled).
 Hence, by Proposition~\ref{prop:stricly_bounded_P},
 we have $(A_1^{(i)} \cdots A_k^{(i)})\textsc{-Constr-Sync} \in \PTIME$
 and then, by Lemma~\ref{lem:union},
 $L(\mathcal B)\textsc{-Constr-Sync} \in \PTIME$.\qed
\end{proof}

As the languages $L_{j_1, j_2, j_3}$ are regular, we
can devise a polynomial-time algorithm which checks the condition
mentioned in Theorem~\ref{thm:dichotomy}. 
 
\begin{corollary} 
 Given a PDFA $\mathcal B$ and a sequence of letters $a_1, \ldots, a_k$
 as input such that $L(\mathcal B) \subseteq a_1^* \cdots a_k^*$,
 the complexity of $L(\mathcal B)$\textsc{-Constr-Sync}
 is decidable in polynomial-time.
\end{corollary}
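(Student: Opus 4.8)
The plan is to read the Dichotomy Theorem (Theorem~\ref{thm:dichotomy}) as an effective criterion: it tells us that $L(\mathcal B)\textsc{-Constr-Sync}$ is \NP-complete precisely when
\[
 L(\mathcal B) \cap \Bigl(\bigcup_{\substack{1 \le j_1 < j_2 < j_3 \le k \\ a_{j_2} \notin \{a_{j_1}, a_{j_3}\}}} L_{j_1,j_2,j_3}\Bigr) \ne \emptyset,
\]
and in \PTIME\ otherwise, where $L_{j_1,j_2,j_3} = \Sigma^* a_{j_1} \Sigma^* a_{j_2}^{|P|} \Sigma^* a_{j_3} \Sigma^*$. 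Since these two cases are exhaustive and mutually exclusive, determining the complexity of the problem is the same as deciding this single emptiness question. Hence it suffices to exhibit a polynomial-time algorithm that tests whether the intersection above is nonempty.

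First I would enumerate all triples $(j_1,j_2,j_3)$ with $1 \le j_1 < j_2 < j_3 \le k$ and $a_{j_2}\notin\{a_{j_1},a_{j_3}\}$. There are at most $\binom{k}{3} = O(k^3)$ such triples, so instead of handling the whole union at once I can test each summand $L(\mathcal B)\cap L_{j_1,j_2,j_3}$ for nonemptiness separately, reporting \NP-completeness as soon as one of them turns out nonempty and \PTIME\ if all of them are empty.

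For a fixed triple I would build a PDFA $\mathcal C_{j_1,j_2,j_3}$ recognizing $L_{j_1,j_2,j_3}$. Its states only record the progress through the pattern: read one $a_{j_1}$, then a block of $|P|$ consecutive $a_{j_2}$'s, then one $a_{j_3}$. Counting the block of length $|P|$ requires a counter running from $0$ to $|P|$, but $|P|$ is the number of states of the input automaton $\mathcal B$, so this counter contributes only polynomially many states and $\mathcal C_{j_1,j_2,j_3}$ has size $O(|P|)$, constructible in polynomial time. I would then form the standard product automaton of $\mathcal B$ with $\mathcal C_{j_1,j_2,j_3}$, of size $O(|P|^2)$, and decide by graph reachability whether some accepting state is reachable from the start state; this tests emptiness of $L(\mathcal B)\cap L_{j_1,j_2,j_3}$ in polynomial time.

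Combining these ingredients yields $O(k^3)$ emptiness tests, each polynomial in $|P|$, so the whole decision procedure runs in polynomial time in the size of the input $(\mathcal B, a_1,\ldots,a_k)$. There is no genuine obstacle here, as the result is essentially an algorithmic reading of the dichotomy; the only point deserving attention is that the exponent $|P|$ in the definition of $L_{j_1,j_2,j_3}$ is bounded by the number of states of $\mathcal B$, so encoding the required counter in unary costs only polynomially many states and does not blow up the construction.
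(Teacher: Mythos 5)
Your proposal is correct and follows essentially the same route as the paper: build an automaton of size $O(|P|)$ for each $L_{j_1,j_2,j_3}$, use the product construction with $\mathcal B$ to test each intersection for nonemptiness in time $O(|P|^2)$, and repeat over the $O(k^3)$ triples. The extra details you supply (the unary counter for the block $a_{j_2}^{|P|}$, the enumeration bound on triples) are exactly what the paper's terser proof leaves implicit.
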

\begin{proof}
 An automaton for each $L_{j_1, j_2, j_3}$
 has size linear in~$|P|$. So, by the product automaton construction~\cite{HopUll79}, non-emptiness of
 $L(\mathcal B)$ with each $L_{j_1, j_2, j_3}$
 could be checked in time $O(|P|^2)$.
 Doing this for every $L_{j_1, j_2, j_3}$
 gives a polynomial-time algorithm
 to check non-emptiness of the language written
 in Theorem~\ref{thm:dichotomy}.~\qed
\end{proof}

\begin{example}
 For the following constraint languages CSP is \NP-complete: $ab^*a$,
 $aa(aaa)^*bbb^*d \cup a^*b \cup d^*$, $bbcc^*d^* \cup a$.
 
 For the following constraint languages CSP is in \PTIME: $a^5bd \cup cd^4$,
 $a^5bd \cup cd^*$, $aa^*bbbbcd^* \cup bbbdd^*d$.
\end{example}

\begin{proof}[Proof Sketch for Lemma~\ref{lem:np_hardness}]
 We construct a reduction from an instance
 of $\textsc{DisjointSetTransporter}$\footnote{Note that the problem $\textsc{DisjointSetTransporter}$ is over a unary alphabet, but for $L\textsc{-Constr-Sync}$
 we have $|\Sigma| > 1$. Indeed, we need the additional letters in $\Sigma$.}
 for unary input automata.
 
 To demonstrate the basic idea, we only do the proof
 in the case $L \subseteq a^* b^* a^*$.
 By assumption we can deduce $a^{r_1} b^{r_2} a^{r_3} \in L(\mathcal B)$
 with $p_2 \ge |P|$ and $r_1, r_3 \ge 1$.
 By the pigeonhole principle, in $\mathcal B$, 
 when reading the factor $b^{r_2}$, at least one state has to be traversed twice.
 Hence, we find $0 < p_2 \le |P|$ such that $a^{r_1} b^{r_2 + i\cdot p_2} a^{r_3}
 \subseteq L(\mathcal B)$ for each $i \ge 0$.

Let $\mathcal A = (\{c\}, Q, \delta)$ and $(\mathcal A, S, T)$
be an instance of \textsc{DisjointSetTransporter}.
We can assume $S$ and $T$ are non-empty, as for $S = \emptyset$
it is solvable, and if $T = \emptyset$ we have no solution.
Construct $\mathcal A' = (\Sigma, Q', \delta')$
by setting
$
 Q' = S_{r_2} \cup \ldots \cup S_{1} \cup Q \cup Q_1 \cup \ldots \cup Q_{p_2-1} \cup \{ t \},
$
where $t$ is a new state, $S_i = \{ s_i \mid s \in S \}$ for $i \in \{1,\ldots, r_2 \}$
are pairwise disjoint copies of $S$
and $Q_i = \{ q^i \mid q \in Q \}$ are\footnote{Observe
that by the indices a correspondence between the sets
is implied. The index
in $Q_i$ at the top to distinguish, for $s \in S$ and $i \in \{1,\ldots,\min\{r_2, p_2-1\}\}$, between
 $s_i \in S_i$ and $s^i \in Q_i$. Hence, for each $s \in S$ and $i \in \{1,\ldots, r_2\}$,
 the states $s$ and $s_i$ correspond to each other, and for $q \in Q$
 and $i \in \{1,\ldots, p_2-1\}$ the states $q$ and $q^i$.} 
 also pairwise disjoint 
copies of $Q$. Note that also $S_i \cap Q_j = \emptyset$
for $i \in \{1,\ldots, r_2 \}$ and $j \in \{1,\ldots, p_2-1\}$.
Set $S_0 = S$ 
as a shorthand.
Choose any $\hat s \in S_{r_2}$, then, for $q \in Q$ and $x \in \Sigma$, the transition function is given by
\[
 \delta'(q, x) = \left\{
 \begin{array}{ll}
  s_{i-1} & \mbox{if } x = b \mbox{ and } q = s_i \in S_i \mbox{ for some } i \in \{1,\ldots, r_2\}; \\ 
  \hat s & \mbox{if } x = a \mbox{ and } q \in (Q \cup Q_1 \cup \ldots \cup Q_{p_2-1}) \setminus S; \\
  s_{r_2} & \mbox{if } x= a \mbox{ and } q = s_i \in S_i \mbox{ for some } i \in \{0,\ldots,r_2\}; \\
  t       & \mbox{if } x = a \mbox{ and } q \in T; \\
  q^{p_2-1} & \mbox{if } x = b \mbox{ and } q \in Q; \\
  q^{i-1} & \mbox{if } x = b \mbox{ and } q = q^i \in Q_i \mbox{ for some } i \in \{2,\ldots,p_2-1\}; \\
  \delta(q, c) & \mbox{if } x = b \mbox{ and } q = q^1 \in Q_1; \\
  q       & \mbox{otherwise}.
 \end{array}
 \right.
\]

\newcommand{\automatacloudother}[2][.44]{%
	\begin{scope}[#2]
		\node [rectangle,draw,thick,text width=8.1cm,minimum height=7.6cm,
		text centered,rounded corners, fill=white, name = re] {};
\end{scope}}

\newcommand{\innerstateloop}{
\begin{scope}
  \node[state] (s1) at (0,0) {}; \node (s1label) at (0.5,-.1) {$\in Q$};
  \node[state] (s11) at (-0.5,0.6) {};  \node (s11label) at (0.32,0.6) {$\in Q_{p_2-1}$};
  \node (s12) at (-0.25,1.2) {};
  \node (s13) at ( 0.25,1.2) {};
  \path[->] (s1)  edge [bend left] node [left] {$b$} (s11);
   \path[->] (s11) edge [bend left] node [left] {$b$} (s12);
  \draw[dashed] (-0.25,1.2) -- (0.25,1.2);
\end{scope}
}

\begin{figure}[htb]
     \centering
    \scalebox{.65}{    
 \begin{tikzpicture}
 \tikzset{every state/.style={minimum size=1pt},>=stealth'}
 \node (cloud) at (0,0) {\tikz \automatacloudother{fill=gray!0,thick};};
 
  \node (reset1) at (0,3) {};
  \node (reset2) at (-10,2.5) {};
  \path[->] (reset1) edge [bend right] node [above] {$a$} (reset2);
  
  \node (reset3) at (0,-3) {};
  \node (reset4) at (-10,-2.5) {};
  \path[->] (reset3) edge [bend left] node [below] {$a$} (reset4);
  
  \node (reset5) at (-5.4,2.8) {};
  \node (reset6) at (-10,2.5) {};
  \path[->] (reset5) edge [bend right] node [above,pos=.3] {$a$} (reset6);
  
  \node (reset7) at (-5.4,-2.8) {};
  \node (reset8) at (-10,-2.5) {};
  \path[->] (reset7) edge [bend left] node [below,pos=.3] {$a$} (reset8);

  \draw[rounded corners] (-4.1,-3) rectangle (-1.7, 3) {};
  \draw[rounded corners] (-6.3,-3) rectangle (-4.8, 3) {};
  \draw[rounded corners] (-10.5,-3) rectangle (-9, 3) {};
    
  \draw[rounded corners] (1.7,-3) rectangle (4.1, 3) {};
  
  \node[state] (t) at (7,0) {$t$};
  
  \node at (3,3.4) {{\LARGE $T$}};
  \node at (-3,3.4) {{\LARGE $S$}};
  \node at (-10,3.5) {{\LARGE $S_{r_2}$}};
  \node at (-5.5,3.5) {{\LARGE $S_{1}$}};
  \node at (0.1,4.1) {{\LARGE Original $\mathcal A$ (altered)}};
  
  \node (s1) at (-.5,2) {\tikz \innerstateloop;};
  \node (s2) at (.5,-1.5) {\tikz \innerstateloop;};
  
  \node (sT1) at (3,1.9) {\tikz \innerstateloop;}; \node (sT1copy) at (3,1.5) {};
  \node (sT2) at (2.8,0.4) {\tikz \innerstateloop;}; \node (sT2copy) at (2.8,0) {};
  \node (sT3) at (3.1,-1.5) {\tikz \innerstateloop;}; \node (sT3copy) at (3.15,-2) {};
  
  \node (sS1) at (-3,1.7) {\tikz \innerstateloop;}; \node (sS1copy) at (-2.95,1.15) {};
  \node (sS2) at (-2.7,0) {\tikz \innerstateloop;}; \node (sS2copy) at (-2.65,-.55) {};
  \node (sS3) at (-3,-1.8) {\tikz \innerstateloop;};\node (sS3copy) at (-3,-2.3) {};
   
  \node[state] (sS11) at (-5.5,1.7) {};
  \node[state] (sS12) at (-5.2,0) {};
  \node[state] (sS13) at (-5.7,-2) {};
  
  \node[state] (sSr1) at (-10,1.7) {};
  \node[state] (sSr2) at (-9.7,0) {};
  \node[state] (sSr3) at (-10.1,-2) {};
  
  \path[->] (t) edge [loop right] node {$\Sigma$} (t);
  
  \path[->] (sT1copy) edge [bend left=35] node [above] {$a$} (t)
            (sT2copy) edge [bend left=10] node [above] {$a$} (t)
            (sT3copy) edge node [above] {$a$} (t);

  \node (sSr1a) at (-8.5,1.7) {};
  \node (sSr2a) at (-8.2,0) {};
  \node (sSr3a) at (-8.7,-2) {};
  
  \node (sSr1b) at (-7.2,1.7) {};
  \node (sSr2b) at (-6.9,0) {};
  \node (sSr3b) at (-7.4,-2) {};
  
  \path[->] (sSr1) edge node [above,pos=.3] {$b$} (sSr1a);
  \path[->] (sSr2) edge node [above,pos=.27] {$b$} (sSr2a);
  \path[->] (sSr3) edge node [above] {$b$} (sSr3a);
 
  \path[->] (sSr1b) edge node [above,pos=.35] {$b$} (sS11);
  \path[->] (sSr2b) edge node [above] {$b$} (sS12);
  \path[->] (sSr3b) edge node [above] {$b$} (sS13);
  
  \path[->] (sS11) edge [bend right=10] node [above,pos=.38] {$b$} (sS1copy);
  \path[->] (sS12) edge [bend right=10] node [above,pos=.2] {$b$} (sS2copy);
  \path[->] (sS13) edge [bend right=10] node [above,pos=.4] {$b$} (sS3copy);
  
  \draw[dashed] (-8.5,1.7) -- (-7.2,1.7);
  \draw[dashed] (-8.2,0) -- (-6.9,0);
  \draw[dashed] (-8.7,-2) -- (-7.4,-2);
 \end{tikzpicture}}
  \caption{
   The reduction from the proof sketch sketch of Lemma~\ref{lem:np_hardness}.
   The letter $a$ transfers everything surjectively onto $S_{r_2}$,
   indicated by four large arrows at the top and bottom and labelled 
   by $a$.
   The auxiliary states $Q_1, \ldots, Q_{p_2-1}$, which are meant
   to interpret a sequence $b^{p_2}$ like a single symbol in the original
   automaton, are also only indicated inside of $\mathcal A$, but not fully written out.}
  \label{fig:reduction_np_hard}
\end{figure}

Please see Figure~\ref{fig:reduction_np_hard} for a sketch
of the reduction.
For the constructed automaton $\mathcal A'$, the following could be shown:
$\exists m \ge 0 : \delta(S, c^m) \subseteq T$
if and only if $\mathcal A'$ has a synchronizing word in $ab^{r_2}(b^{p_2})^*a$
if and only if $\mathcal A'$ has a synchronizing word in $ab^*a$
if and only if $\mathcal A'$ has a synchronizing word in $a^*b^*a^*$.

\begin{toappendix}

Next, we supply the proof of the claim made in the proof sketch of Lemma~\ref{lem:np_hardness}
from the main text.

\medskip

\noindent\underline{Claim:} 
 For the constructed automaton $\mathcal A'$ from
 the proof sketch of Lemma~\ref{lem:np_hardness} in the main text, we have:
\begin{align*}
    \exists m \ge 0 : \delta(S, c^m) \subseteq T 
                               & \Leftrightarrow \mathcal A'\mbox{ has a synchronizing word in $ab^{r_2}(b^{p_2})^*a$.} \\
                               & \Leftrightarrow \mathcal A'\mbox{ has a synchronizing word in $ab^*a$.} \\
                               & \Leftrightarrow \mathcal A'\mbox{ has a synchronizing word in $a^*b^*a^*$}
\end{align*} 
 \emph{Proof of the Claim.}
 First, suppose $\delta(S, c^m) \subseteq T$.
 By construction of $\mathcal A'$,
 for any $q, q' \in Q$,
 \begin{equation}\label{eqn:transition_Astar}
  \delta(q, c) = q'  \mbox{ in $\mathcal A$}  \Leftrightarrow \delta'(q, b^{p_2}) = q'  \mbox{ in $\mathcal A'$} .
 \end{equation}
 Also, $\delta'(Q'\setminus\{t\},a) = S_{r_2}$
 and $\delta'(S_{r_2}, b^{r_2}) = S$.
 Combining these facts, we find
 \[
  \delta'(Q', ab^{r_2}b^{p_2m}) \subseteq T \cup \{t\}. 
 \]
 A final application of $a$ then maps
 all states in $T$ to the single sink 
 state~$t$.
 
 Clearly, as $ab^{r_2}(b^{p_2})^*a \subseteq a b^* a$
 and $a b^* a \subseteq a^* b^* a^*$, the next two implications are shown.
 Finally, to complete the argument, let $u = a^{p} b^q a^r$ be a synchronizing word, $p,q,r \ge 0$.
 Then, as $t$ is a sink state, $\delta'(Q', u) = \{t\}$.
 The only way to enter $t$ from the outside is to read $a$ at least once, and 
 as $t$ is a sink state, we have $\delta'(Q', a^p b^q a^r) = \{t\}$.
 Also, as for $q \notin T$, we have $\delta'(q, a) \notin T$,
 we must have $\delta'(Q', a^p b^q) \subseteq T \cup \{t\}$,
  or more specifically, $\delta'(Q' \setminus \{t\}, a^p b^q) \subseteq T$.
  We distinguish two cases for~$p$.
 
 \begin{enumerate}
 \item If $p = 0$, then, in particular, $\delta'(S, b^q)  \subseteq T$.
     By construction of $\mathcal A'$, for any $q \in Q$,
 \[
  \delta'(q, b^n) \in Q
 \]
 if and only if $n \equiv 0\pmod{p_2}$.
 So, $q = p_2 m$ for some $m \ge 0$.
 Hence, by Equation~\eqref{eqn:transition_Astar} above from the first case, in $\mathcal A$,
 we find $\delta(S, c^m) \subseteq T$.
 
 \item  If $p > 0$, then $\delta'(Q' \setminus\{t\}, a^p) = S_{r_2}$.
 The only way to leave any state in $S_{r_2}$
 is to read $b$, which transfers $S_{r_2}$ to $S_{r_2-1}$.
 Reasoning similarly, we find that we have to read in $b$
 at least $r_2$ many times, which finally maps $S_{r_2}$
 onto $S_0 = S$. So, $q \ge r_2$. By construction of $\mathcal A'$, for any $q \in Q$,
 \[
  \delta'(q, b^n) \in Q
 \]
 if and only if $n \equiv 0\pmod{p_2}$.
 So, as $\delta'(S, b^{q - r_2}) \subseteq T$, $q - r_2 = p_2 m$ for some $m \ge 0$.
 Hence, by Equation~\eqref{eqn:transition_Astar} above, in $\mathcal A$,
 we find $\delta(S, c^m) \subseteq T$.
 \end{enumerate}
This ends the proof of the claim. \emph{[End, proof of the Claim.]}
\end{toappendix}

 Now, suppose $\delta(s, c^m) \subseteq T$ for some $m \ge 0$.
 By the above, $\mathcal A'$ 
 has a synchronizing word $u$ in $ab^{r_2}(b^{p_2})^*a$.
 Then, $a^{r_1 - 1}u a^{r_3-1} \in L(\mathcal B)$ also synchronizes~$\mathcal A'$.

 Conversely, suppose we have a synchronizing word $w \in L$
 for $\mathcal A'$.
 As $L \subseteq a^* b^* a^*$
 by the above equivalences,
 $\delta(S, c^m) \subseteq T$
 for some $m \ge 0$. \qed
\end{proof}

\section{Constraints from Strongly Self-Synchronizing Codes}
\label{sec:strongly_self_sync}
%

Here, we introduce strongly self-synchronizing codes and investigate $L$\textsc{-Constr-Sync}
for bounded constraint languages $L \subseteq w_1^* \cdots w_k^*$
where $\{ w_1, \ldots, w_k \}$ is such a code. 

Let $C \subseteq \Sigma^+$ be non-empty.
Then, $C$ is called a \emph{self-synchronizing code}~\cite{zbMATH03943051,DBLP:books/daglib/0025093,Hsieh1989SomeAP},
if $C^2 \cap \Sigma^+ C \Sigma^+ = \emptyset$. If, additionally, $C \subseteq \Sigma^n$
for some $n > 0$, then it is called\footnote{In~\cite{Hsieh1989SomeAP}
 this distinction is not made and self-synchronizing codes are also called comma-free codes.}
a \emph{comma-free code}~\cite{golomb_gordon_welch_1958}.
Every self-synchronizing code is an infix code, i.e., no proper factor of a word from $C$ is in $C$~\cite{Hsieh1989SomeAP}.
A \emph{strongly self-synchronizing code}
is a self-synchronizing code $C \subseteq \Sigma^+$ \todo{nicht bloss $\cap C\Sigma^+$ möglich? für beweis ausreichend?}
such that, additionally, $(\pref(C) \setminus C)C \cap \Sigma^*C \Sigma^+ = \emptyset$.

To give some intuition for the strongly self-synchronizing
codes, we also present an alternative characterization, a few examples and a way to construct such codes.

\begin{propositionrep}
A non-empty $C \subseteq \Sigma^+$
is a strongly self-synchronizing code
if and only if, for
all $u \in \pref(C)$ and $v \in C$, 
if we write $uv = x_1 \cdots x_n$
with $x_i \in \Sigma$ for $i \in \{1,\ldots, n\}$,
then, for all $j \in \{1,\ldots,n\}$ and $k \ge 1$ where $j + k - 1 \le n$,
we have that $x_j x_{j + 1} \cdots x_{j+k-1} \in C$
implies $j = |u| + 1$ and $k = |v|$
or $j = 1$ and $k = |u|$.
Intuitively, in $uv$ only the last $|v|$ symbols form a factor in $C$
and possibly the first $|u|$ symbols.
\end{propositionrep}
\begin{proof}
 Let $C \subseteq \Sigma^+$ be a strongly self-synchronizing code.
 Suppose $u \in \pref(C)$ and $v \in C$.
 If $u \notin C$, then we must have $uv \notin \Sigma^*C\Sigma^+$,
 so that, if $uv = x_1 \cdots x_n$ as in the statement,
 we have $x_j \cdots x_{j+k-1}$ if and only if $j = |u| + 1$
 and $k = |u| + |v|$.
 If $u \in C$, then, as $uv \notin \Sigma^+ C \Sigma^+$,
 we find that we have only the possibilities
 $j = 1$ and $k = |u|$ or $j = |u| + 1$ and $k = |u| + |v|$.
 
 Conversely, suppose $C \subseteq \Sigma^+$ is non-empty
 and fulfills the condition mentioned in the statement.
 If $u, v \in C$ and $uv \in \Sigma^+ C \Sigma^+$,
 then we can write $uv = x_1 \cdots x_n$ with $x_i \in \Sigma$ for $i \in \{1,\ldots,n\}$
 and find $2 \le i \le j \le n - 1$
 such that $x_i \cdots x_j \in C$, which contradicts
 the condition in the statement.
 Similarly, if $u \in \pref(C) \setminus C$
 and $v \in C$ with $uv \in \Sigma^*C\Sigma^+$,
 then we can write $uv = x_1 \cdots x_n$ with $x_i \in \Sigma$ for $i \in \{1,\ldots,n\}$
 and find $1 \le i \le j \le n - 1$
 such that $x_i \cdots x_j \in C$, which would contradict
 the condition too. So, we must
 have $C^2 \cap \Sigma^+ C \Sigma^+ = \emptyset$
 and $(\pref(C) \setminus C) C \cap \Sigma^* C \Sigma^+ = \emptyset$.\qed 
\end{proof}


 When passing from letters to words by applying a homomorphism, in the reductions,
 we have to introduce additional states. The definition of the strongly synchronizing
 codes was motivated by the demand that these states also have to be synchronized, which turns out to be difficult in general.

\begin{example}\label{ex:strongly_self_sync}
 The code $\{aacc,bbc,bac\}$
 is strongly self-synchronizing.
 The code $\{ aab, bccc, abc \}$ is self-synchronizing, but
 not strongly self-synchronizing as, for example, $(a)(abc)$ 
 contains $aab$ or $(aa)(bccc)$ contains $abc$. 
\end{example}

\begin{toappendix}
 To give a proof of the claim made in Example~\ref{ex:strongly_self_sync}.
\begin{proposition}
 The code $\{aacc,bbc,bac\}$ is strongly self-synchronizing.
\end{proposition}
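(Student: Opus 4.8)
The plan is to verify directly that $C = \{aacc, bbc, bac\}$ satisfies the two defining conditions of a strongly self-synchronizing code, namely $C^2 \cap \Sigma^+ C \Sigma^+ = \emptyset$ and $(\pref(C) \setminus C)C \cap \Sigma^* C \Sigma^+ = \emptyset$. Since $C$ is finite (three code words over $\Sigma = \{a,b,c\}$, of lengths $4$, $3$, $3$), this is in principle a finite check, and the main work is to organize the case analysis so that it is exhaustive without being unwieldy. First I would record the relevant prefix set: $\pref(C) = \{\varepsilon, a, aa, aac, aacc, b, bb, bbc, ba, bac\}$, so that $\pref(C) \setminus C = \{\varepsilon, a, aa, aac, b, bb, ba\}$.

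The key observation I would exploit to prune the search is that every word in $C$ ends in the letter $c$ and no word in $C$ contains $c$ except in terminal position of a length-$1$ block; more usefully, each code word has exactly one $c$ at its end except $aacc$ which ends in $cc$. I would use a letter-occurrence argument: an occurrence of a code word inside a concatenation $uv$ must end at some position, and since every code word ends in $c$, an interior occurrence of a code word must end at a position holding the letter $c$. For the condition $C^2 \cap \Sigma^+ C \Sigma^+ = \emptyset$, I would enumerate the nine products $xy$ with $x, y \in C$ and check that no code word occurs as a factor starting strictly after position $1$ and ending strictly before the last position. Because the code words are short, I can list, for each concatenation, the positions of the letter $c$ (the only possible right endpoints of an interior code-word occurrence) and verify that the length-$3$ or length-$4$ factor ending there is not in $C$ unless it is the trailing copy of $y$ itself. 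The step I expect to be the main obstacle, and the reason the code is carefully designed, is ruling out a "straddling" occurrence that overlaps the boundary between $u$ and $v$: for instance in $(bac)(aacc)$ one must check the factor $caac$, $aacc$, etc., do not spuriously produce $bbc$ or $bac$ straddling the seam. I would handle this by checking, for each of the nine seams, every factor of length $3$ and $4$ that crosses the boundary.

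For the second condition I would iterate over the seven elements $u \in \pref(C) \setminus C$ and the three elements $v \in C$, forming $uv$ and checking that $C$ does not occur as a factor ending before the final position, i.e. that the only occurrence of a code word in $uv$ is the trailing $v$. Again the $c$-endpoint observation limits the candidate right endpoints, and the short lengths make each of the twenty-one products a quick inspection; the empty-prefix case $u = \varepsilon$ reduces to checking that each $v \in C$ is itself infix-minimal, which follows since no code word is a proper factor of another (the three words pairwise share no length-$3$-or-more common factor). Once both conditions are confirmed on all these finitely many concatenations, the defining properties hold and the proposition follows; I would close by remarking that the $c$-terminal design of every code word is exactly what forces interior occurrences to align at block boundaries, which is the structural reason the verification goes through.
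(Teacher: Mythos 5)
Your proposal is correct and takes essentially the same approach as the paper: both reduce the statement to a finite exhaustive check of the concatenations $uv$ with $u \in \pref(C)$ and $v \in C$ (the paper organizes this as one list over non-empty prefixes, you split it into the nine products for $C^2 \cap \Sigma^+ C \Sigma^+ = \emptyset$ and the remaining products for $(\pref(C)\setminus C)C \cap \Sigma^* C \Sigma^+ = \emptyset$), with the empty-prefix case dispatched in both proofs by the infix-code observation. Your remark that every code word ends in $c$, so any occurrence must end at a $c$-position, is just a convenient pruning of the same case analysis.
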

\begin{proof}
 By checking all cases to combine prefixes with code words:
 \[ 
 \begin{array}{llll}
     \mbox{Non-empty prefixes of $aacc$:} & (a)aacc & (a)bbc & (a)bac \\ 
      & (aa)aacc  & (aa)bbc  & (aa)bac \\ 
      & (aac)aacc & (aac)bbc & (aac)bac \\ 
      & (aacc)aacc & (aacc)bbc & (aacc)bac \\ 
      \\
     \mbox{Non-empty prefixes of $bbc$:} & (b)aacc   & (b)bbc   & (b)bac \\    
     & (bb)aacc  & (bb)bbc  & (bb)bac \\ 
     & (bbc)aacc & (bbc)bbc & (bbc)bac \\ 
     \\
     \mbox{Non-empty prefixes of $bac$:} & (b)aacc   & (b)bbc   & (b)bac \\    
     & (ba)aacc  & (ba)bbc  & (ba)bac \\ 
     & (bac)aacc & (bac)bbc & (bac)bac. 
 \end{array}
 \]
 So, we see that the defining conditions are satisfied. Note that $C \cap \Sigma^*C\Sigma^+ = \emptyset$
 is always satisfied for self-synchronizing codes, as they are infix codes. \qed
\end{proof}
\end{toappendix}

\begin{remark}[Construction] 
\label{rem:code_construction}
 Take any non-empty finite language $X \subseteq \Sigma^n$, $n > 0$,
 and a symbol $c \in \Sigma$ such that $\{c\}\Sigma^* \cap X = \emptyset$.
 Let $k=\max\{\,\ell\geq0\mid \exists u,v\in\Sigma^*:uc^\ell v\in X\,\}$.
 Then, $Y = c^{k+1}X$
 is a strongly self-synchronizing code.
\end{remark}

\begin{example}\label{ex:strongly_self_sync_construction}
Let $\Sigma = \{a,b,c\}$
and $C = \{ ab,ba, aa\}$.
Then, $\{ cab, cba, caa \}$ or $\{ bbab, bbaa \}$
are strongly self-synchronizing codes by Remark~\ref{rem:code_construction}.
\end{example}

Our next result, which holds in general, states conditions on a homomorphism
such that we not only have a reduction from the problem
for the homomorphic image to our original problem, as stated in Proposition~\ref{prop:hom_lower_bound_complexity},
but also a reduction in the other direction.

\begin{theoremrep}
\label{thm:constr_sync_hom_strongly_self_sync}
 Let $\varphi : \Sigma^* \to \Gamma^*$
 be a homomorphism such that $\varphi(\Sigma)$
 is a strongly self-synchronizing code and $|\varphi(\Sigma)| = |\Sigma|$.
 Then, for each regular $L \subseteq \Sigma^*$ we have
 $
  L\textsc{-Constr-Sync} \equiv_m^{\log} \varphi(L)\textsc{-Constr-Sync}.
 $
\end{theoremrep}
\begin{proof}
 By Proposition~\ref{prop:hom_lower_bound_complexity}, we have 
 $\varphi(L)\textsc{-Constr-Sync} \le_m^{\log} L\textsc{-Constr-Sync}$.

 Next, we give a reduction 
 from $L\textsc{-Constr-Sync}$ 
 to $\varphi(L)\textsc{-Constr-Sync}$.
 %
 %
 %
 
 %
 %
 %
 %
 Write $\Sigma = \{a_1, \ldots, a_n\}$ with $n = |\Sigma|$
 and $u_i = \varphi(a_i)$ for $i \in \{1,\ldots,n\}$.
 Let $\mathcal A = (\Sigma, Q, \delta)$
 be an input semi-automaton for $L\textsc{-Constr-Sync}$.
 
  We construct a semi-automaton $\mathcal A' = (\Gamma, Q', \delta')$.
  The state set will be
  \[
   Q' = \{ q_x \mid q \in Q, x \in \pref(\{ u_1,\ldots,u_n \}) \setminus \{ u_1,\ldots,u_n \} \}. 
  \]
  By identifying $q_{\varepsilon}$ with the state $q \in Q$,
  we can assume $Q \subseteq Q'$.
  Then, for $q_x \in Q'$ and $y \in \Sigma$, 
  let $z$ be the longest suffix of $xy$
  such that $z \in \pref(\{ u_1,\ldots,u_n\})$ and set\footnote{Note
  the implicit correspondence between the states $q$
  and $q_z$ for $z \in \pref(\varphi(\Sigma)) \setminus \varphi(\Sigma)$.}
  \begin{equation}\label{eqn:def_delta_bar}
   \delta'(q_x, y) = \left\{
   \begin{array}{ll} 
    q_{z}          & \mbox{if } z \in \pref(\{ u_1,\ldots,u_n\}) \setminus \{ u_1, \ldots, u_n \}; \\ 
    \delta(q, a_i)  & \mbox{if } \exists i \in \{ 1, \ldots, n \} : z = u_i.
   \end{array}
   \right.
  \end{equation}
  As $|\varphi(\Sigma)| = |\Sigma|$ and $\{ u_1, \ldots, u_n \}$ is a prefix code\footnote{A code
  is a prefix code, if no code word is the proper prefix of another code word.}, the transition function 
  is well-defined.
  By construction, for any $u \in \Sigma^*$ and $q \in Q$,
  we have  
  \begin{equation}\label{eqn:comma_free_reduction}
      \delta(q, u) = \delta'(q, \varphi(u)).
  \end{equation}

  Let $x \in \pref(\varphi(\Sigma)) \setminus \varphi(\Sigma)$
  and $u_i \in \varphi(\Sigma)$, $i \in \{1,\ldots,n\}$.
  Then, as $\varphi(\Sigma)$
  is a strongly self-synchronizing code, the word $xu_i$ does not contain
  a word from $\varphi(\Sigma)$ , except the suffix $u_i$,
  as a factor. Next, we will argue that, for the unique $a_i \in \Sigma$
  with $\varphi(a_i) = u_i$, the following equations 
  holds true:
  \begin{equation}\label{eqn:strongly_self_sync_transition}
   \delta'(q_x, u_i) = \delta'(q, u_i) = \delta(q, a_i).
  \end{equation}
  For if $v \in \pref(\{ u_i \}) \cap \Sigma$, then the longest suffix of $xv$
  in $\pref(\varphi(\Sigma))$ must be~$v$.
  First, it is a suffix
  from this set.
  Second, if there exists longer one, say $w$,
  then write $ww' \in \varphi(\Sigma)$ for some $w' \in \Sigma^*$.
  In that case, with $xv = x'w$ ($|x'| < |x|$), we have $x'ww' \in xu_i\Sigma^*$
  or $xu_i \in x'ww'\Sigma^+$.
  In the first case, $ww'$ contains the proper factor $u_i \in \varphi(\Sigma)$,
  which is not possible as $\varphi(\Sigma)$ is, in particular, an infix code.
  In the second case, $\{ x'u_i \} \cap \Sigma^* \varphi(\Sigma) \Sigma^+ \ne \emptyset$,
  which is excluded by the property of $\varphi(\Sigma)$ being strongly self-synchronizing.
  So, by the defining equation of $\delta'$, Equation~\eqref{eqn:def_delta_bar},
  if $v \notin \varphi(\Sigma)$, we have
  \[
   \delta'(q_x, v) = q_v,
  \]
  and if $v \in \varphi(\Sigma)$, then $v = u_i$, as $\varphi(\Sigma)$ is a prefix code,
  and
  \[
   \delta'(q_x, v) = \delta'(q_x, u_i) = \delta'(q, u_i) = \delta(q, a_i)
  \]
  with the unique $a_i \in \Sigma$ as above.
  So, in the latter case Equation~\eqref{eqn:strongly_self_sync_transition}
  was established. In the former case,
  if $u_i = vv'v''$, then $\delta'(q_v, v') = q_{vv'}$
  which is easily seen as we always read in a word giving a prefix from $\varphi(\Sigma)$, hence
  this word itself is the longest suffix from $\varphi(\Sigma)$.
  So, after reading the entire word $u_i$, by Equation~\eqref{eqn:def_delta_bar},
  Equation~\eqref{eqn:strongly_self_sync_transition}
  is implied.

  Lastly, we show that this gives a valid reduction.
  
  \begin{myclaiminproof}
   The automaton $\mathcal A = (\Sigma, Q, \delta)$
   has a synchronizing word in $L$
   if and only if $\mathcal A' = (\Gamma, Q', \delta')$
   has a synchronizing word in $\varphi(L)$.
  \end{myclaiminproof}
  \begin{myclaimproof}
   First, suppose there exists $u \in L$ such that $|\delta(Q, u)| = 1$.
    If $|Q| = 1$ every word is synchronizing and the statement is obviously true.
    So, we can assume $|Q| > 1$, which implies $|u| > 0$.
    Write $u = av$ with $a \in \Sigma$.
    By Equation~\eqref{eqn:strongly_self_sync_transition}, then, for any $x \in \pref(\varphi(\Sigma))\setminus\varphi(\Sigma)$,
    \[
     \delta'(q_x, \varphi(a)) = \delta(q, a).
    \]
    Hence, $\delta'(Q', \varphi(a)) = \delta(Q, a)$.
    As $\delta'(Q', \varphi(a)) \subseteq Q$, by Equation~\eqref{eqn:strongly_self_sync_transition},
    or its formulation for the special case of states in $Q$, Equation~\eqref{eqn:comma_free_reduction},
    we find
    \[
     \delta'(\delta(Q, a), \varphi(v))) = \delta(\delta(Q, a), v) = \delta(Q, u).
    \]
    The last set is, by assumption, a singleton set. Hence, the word $\varphi(u)$
    synchronizes~$\mathcal A'$. 
   
    \medskip 
    
    Now, suppose there exists $u \in \varphi(L)$ such that $|\delta'(Q', u)| = 1$.
     Let $v \in \Sigma^*$ be such that $\varphi(v) = u$.
     By Equation~\eqref{eqn:strongly_self_sync_transition} (or Equation~\eqref{eqn:comma_free_reduction}),
     we have
     \[
      \delta(Q, v) = \delta'(Q, \varphi(v)).
     \]
     By assumption, the set on the right side is a singleton set. 
     Hence, $v$ synchronizes $\mathcal A$.
  \end{myclaimproof}
  So, we find $L\textsc{-Constr-Sync} \le_m^{\log} \varphi(L)\textsc{-Constr-Sync}$
  and the proof is done.
\end{proof}

Finally, we apply Theorem~\ref{thm:constr_sync_hom_strongly_self_sync} to bounded languages
such that $\{ w_1, \ldots, w_k\}$ forms a strongly self-synchronizing code.

\begin{theoremrep}
 Let $L \subseteq w_1^* \cdots w_k^*$
 be regular such that $\{ w_1, \ldots, w_k \}$
 is a strongly self-synchronizing code.
 Then, $L\textsc{-Constr-Sync}$
 is either $\NP$-complete or in $\PTIME$. 
\end{theoremrep}
\begin{proof}
 Let $\Gamma = \{ a_1, \ldots, a_n \}$
 be a new alphabet and let $\varphi : \Gamma^* \to \Sigma^*$
 be the homomorphism given by
 $\varphi(a_i) = w_i$ for $i \in \{1,\ldots, n\}$.
 Let $U = \varphi^{-1}(L)$. 
 As every word in $L$
 is a concatentation of words from $\{ w_1, \ldots, w_n \}$,
 we have $L \subseteq \varphi(\Gamma^*)$.
 So, we find $\varphi(U) = L$.

 By Theorem~\ref{thm:constr_sync_hom_strongly_self_sync},
 the languages $U$ and $L$
 have the same computational complexity.
 Also, as is easy to check, we have $U \subseteq a_1^* \cdots a_n^*$
 and $U$ is regular.
 So, by Theorem~\ref{thm:dichotomy}
 the constrained synchronization problem for $L$
 is either $\NP$-complete or in $\PTIME$.
\end{proof}

\begin{example}
 (1) $((aacc)(bbc)^*(bac))$\textsc{-Constr-Sync} is \NP-complete. \\
 (2) $((bbc)(aacc)(bac)^* \cup (bbc)^*)$\textsc{-Constr-Sync} is in \PTIME.
\end{example}

\section{Conclusion and Discussion}

We have looked at the constrained synchronization problem (Problem~\ref{def:problem_L-constr_Sync}) -- CSP for short -- for letter-bounded regular constraint languages and bounded languages induced by strongly self-synchronizing codes, thereby
continuing the investigation started in~\cite{DBLP:conf/mfcs/FernauGHHVW19}.
The complexity landscape in these cases is completely understood.
Only the complexity classes $\PTIME$ and $\NP$-complete arise.
In~\cite{DBLP:conf/ictcs/Hoffmann20} the question was raised if we can find sparse constraint languages
that give constrained problems complete for some candidate $\NP$-intermediate complexity class. At least for the
language classes investigated here
this is not the case. 
For general sparse regular languages, it is still open if a corresponding
dichotomy theorem holds, or candidate $\NP$-intermediate problems arise. By the results obtained so far and the methods
of proofs, we conjecture that in fact a dichotomy
result holds true.

%


Let us relate our results to the previous work~\cite{DBLP:conf/ictcs/Hoffmann20}, where
partial results for \NP-hardness and containment in \PTIME\  were given.
Namely, by setting $\factor(L) = \{ v \in \Sigma^* \mid \exists u,w \in \Sigma^* : uvw \in L \}$
and $\mathcal B_{p,E} = (\Sigma, P, \mu, q, E)$
for $\mathcal B = (\Sigma, P, \mu, p_0, F)$ with $E \subseteq P$ and $q \in P$,
the following was stated.

\begin{proposition}[\cite{DBLP:conf/ictcs/Hoffmann20}]
\label{prop:NPc}
 Suppose we find $u, v \in \Sigma^*$ 
 such that we can write
$
 L = u v^* U
$
 for some non-empty language $U \subseteq \Sigma^*$
 with 
 $
  u \notin \factor(v^*), 
  v \notin \factor(U) \mbox{ and } 
  \pref(v^*) \cap U = \emptyset.
 $
 Then $L\textsc{-Constr-Sync}$ is $\NP$-hard.
\end{proposition}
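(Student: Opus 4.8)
The plan is to give a logarithmic-space reduction from \textsc{DisjointSetTransporter} for unary input semi-automata, which is $\NP$-complete by Theorem~\ref{prop:set_transporter_np_complete}. Given a unary semi-automaton $\mathcal A=(\{c\},Q,\delta)$ together with disjoint sets $S,T\subseteq Q$, I would construct an input semi-automaton $\mathcal A'$ over $\Sigma$, with a single global sink state $t$, whose synchronizing words inside $L=uv^*U$ correspond exactly to the numbers $m$ with $\delta(S,c^m)\subseteq T$. The design mirrors the $ab^*a$-gadget and the construction sketched for Lemma~\ref{lem:np_hardness}, the three symbols $a,b,a$ now being played by the words $u$, $v$ and a pattern-break within $U$: the prefix $u$ performs the initial reset, each block $v$ simulates one application of the unary letter $c$, and the moment a word of $U$ forces the run off the $v$-pattern triggers the final test collapsing everything onto $t$.

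Concretely, $\mathcal A'$ would consist of a copy of $Q$, a disjoint copy of $S$ serving as the canonical start configuration, the sink $t$, and, writing $v=v_1\cdots v_\ell$, between each $q$ and $\delta(q,c)$ a chain of auxiliary states spelling out $v$, so that reading the whole block $v$ transports $q$ to $\delta(q,c)$; this is the inflation of Definition~\ref{def:inflate_aut} carried out for a word rather than a single letter. Reading $u$ is arranged to send the entire state set surjectively onto the start copy of $S$. Finally, at each auxiliary state reached after a proper prefix $v_1\cdots v_j$ of $v$, every letter $x\neq v_{j+1}$ that \emph{breaks} the $v$-pattern is sent to $t$ when the underlying state lies in $T$ and back to the start copy of $S$ otherwise; since $t$ is the only sink, synchronization can succeed only by routing every state to $t$.

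Correctness then splits as usual. If $\delta(S,c^m)\subseteq T$, then $u\,v^{m}\,u'$ with any $u'\in U$ lies in $L$: it resets onto the start copy of $S$, applies $c$ exactly $m$ times reaching a subset of $T$, and is driven onto $t$ as soon as $u'$ leaves the $v$-pattern, so it synchronizes. Conversely, a synchronizing $w\in L$ factors as $w=u\,v^{m}\,u'$ with $u'\in U$, and here the three hypotheses are exactly what keeps the run faithful: $u\notin\factor(v^*)$ ensures that the reset encoded by $u$ is never re-triggered while the $v$-blocks are read, so the automaton genuinely computes $\delta(S,c^{m})$; $v\notin\factor(U)$ ensures that reading $u'$ completes no further $v$-block, so no spurious $c$-step occurs and every pattern-break tests the set $\delta(S,c^{m})$ against $T$; and $\pref(v^*)\cap U=\emptyset$ ensures that $u'$ does leave the $v$-pattern, so such a test must occur, and it sends everything to $t$ only if $\delta(S,c^{m})\subseteq T$.

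The step I expect to be the main obstacle is the bookkeeping of the auxiliary states created when passing from the single letters of the $ab^*a$-gadget to the words $u$ and $v$: each such state must receive a transition on every letter of $\Sigma$ (the semi-automaton is complete), and one must verify that no proper prefix of $v$, no re-entry into the $u$-reset, and no pattern-break at the wrong moment can manufacture an unintended synchronization while the honest word $u\,v^{m}\,u'$ still goes through. It is precisely to keep this interaction under control that the three separation conditions on $u$, $v$ and $U$ are imposed; checking that they suffice, and that the copy of $Q$ and the $v$-chains have size polynomial in the input so that the construction is logarithmic-space computable, is the technical heart of the argument.
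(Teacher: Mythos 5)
First, a point of orientation: the paper does not prove Proposition~\ref{prop:NPc} at all --- it is quoted from~\cite{DBLP:conf/ictcs/Hoffmann20} --- so the only in-paper yardstick is the closely related reduction in the proof sketch of Lemma~\ref{lem:np_hardness} (Figure~\ref{fig:reduction_np_hard}) and its appendix claim. Your plan belongs to exactly that family: reduction from unary \textsc{DisjointSetTransporter} (Theorem~\ref{prop:set_transporter_np_complete}), a global sink $t$, a reset performed by $u$, simulation of the unary letter $c$ by $v$-chains in the spirit of Definition~\ref{def:inflate_aut}, and a final test triggered when a word of $U$ leaves the $v$-pattern; your reading of the roles of the three hypotheses is also the right one.

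Nevertheless there is a genuine gap, and it sits precisely at the step you flag and then defer: the reset by the \emph{word} $u$ cannot simply be ``arranged'', and with the rules you do specify it fails. Take $v=ab$, $u=aab$, $U=\{b\}$ (all three hypotheses hold, since $aa$ is not a factor of $(ab)^*$, $ab$ is not a factor of $b$, and $b\notin\pref((ab)^*)$). In your automaton, an auxiliary state $q^{(1)}$ sitting after the prefix $a$ of a $v$-chain reads $u=aab$ as follows: the first $a$ breaks the $v$-pattern and sends it into the start copy of $S$, say to a state over $s\in S$; the remaining tail $ab$ of $u$ is then read from there --- but $ab=v$, so it completes a full $v$-block and deposits the state at the base state over $\delta(s,c)$, out of phase with the honest cohort. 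Reading the intended word $uv^m u'$ further, this stray state reaches $\delta(s,c^{m+1})$ and is tested against $T$ at the wrong time; if $\delta(s,c^{m+1})\notin T$ it is reset and, since $v\notin\factor(U)$, can never reach $t$ afterwards. Hence $uv^mu'$ need not synchronize even when $\delta(S,c^m)\subseteq T$: the forward direction of your reduction is broken, and one can build instances where \emph{no} word of $L$ synchronizes although the \textsc{DisjointSetTransporter} instance is positive. The underlying phenomenon is that your two stipulations (``$u$ resets everything onto the start copy'' and ``pattern-breaks send non-$T$ states back to the start copy'') are mutually inconsistent: during the reading of $u$ itself different states break at different positions and then consume the rest of $u$ out of phase, and the hypothesis $u\notin\factor(v^*)$ does \emph{not} prevent $v$ from occurring inside $u$ or prefixes of $v$ from matching letters of $u$. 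Repairing this is the actual content of the proposition: one needs the phase-alignment machinery of the paper's Lemma~\ref{lem:np_hardness} construction --- a chain of identity-preserving copies of $S$ (there $S_{r_2},\dots,S_1,S_0$) consumed by the tail of the reset word, a collapse on its first part under which the copies are stable, tests performed only at the correct phase --- together with a case analysis, using all three hypotheses, that every state of the completed automaton lands in $S\cup\{t\}$ after $u$, stays faithful during $v^*$, and is tested exactly once during $U$. Since none of this bookkeeping is carried out, what you have is a correct proof plan with an accurate diagnosis of the difficulty, but not yet a proof.
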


\begin{proposition}[\cite{DBLP:conf/ictcs/Hoffmann20}]
\label{prop:NP_in_P}
  Let $\mathcal{B} = (\Sigma, P, \mu, p_0, F)$ be a polycyclic PDFA.
  If for every reachable $p \in P$ with $L(\mathcal B_{p, \{p\}}) \ne \{\varepsilon\}$ 
  we have $L(\mathcal B_{p_0, \{p\}}) \subseteq \suff(L(\mathcal B_{p, \{p\}}))$,
  then the problem $L(\mathcal B)\textsc{-Constr-Sync}$ is solvable
  in polynomial time.
\end{proposition}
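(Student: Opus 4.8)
The plan is to decide $L(\mathcal B)\textsc{-Constr-Sync}$ by a reachability computation in the ``product'' of the input semi-automaton $\mathcal A = (\Sigma, Q, \delta)$ with $\mathcal B$, tracking for a word $w \in \pref(L(\mathcal B))$ the pair $(\delta(Q,w), \mu(p_0,w))$. Since the synchronizing words form a two-sided ideal (Lemma~\ref{lem:append_sync}) and $\mathcal A$ is complete, a singleton image stays a singleton; hence $\mathcal A$ is synchronizing with respect to $\mathcal B$ if and only if some reachable pair $(S,p)$ has $|S|=1$ with $p$ coaccessible in $\mathcal B$ (i.e.\ a final state is reachable from $p$), since the reaching word can then be completed to a word of $L(\mathcal B)$ without losing synchronization. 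The obstacle is that in general there are exponentially many images $\delta(Q,w)$; the entire content of the proof is to show that the suffix hypothesis forces the reachable images to be only polynomially many. Throughout I use that, as $\mathcal B$ is polycyclic, each strongly connected component is a single cycle, so for a state $p$ lying on a nontrivial cycle we have $L(\mathcal B_{p,\{p\}}) = u_p^*$ for the single cycle word $u_p$ read once around from $p$, with $u_p \ne \varepsilon$.

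The crux is a stabilisation lemma that exploits the fact that synchronisation starts from the \emph{full} set $Q$. Fix a reachable cycle state $p$ and a path word $x$ with $\mu(p_0,x)=p$; by hypothesis $x \in \suff(u_p^*)$, so $x = s\,u_p^{t}$ with $s$ a suffix of $u_p$ and $t \ge 0$. Writing $u_p = rs$, one checks the identity $u_p^{t+k+1} = r\,(s\,u_p^{t+k})$, whence $\delta(Q, u_p^{t+k+1}) \subseteq \delta(Q, s\,u_p^{t+k}) = \delta(Q, x\,u_p^{k})$. Conversely $\delta(Q, x\,u_p^{k}) = \delta(\delta(Q,s), u_p^{t+k}) \subseteq \delta(Q, u_p^{t+k})$. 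Now the sets $\delta(Q, u_p^{j})$ form a decreasing chain (because $\delta(Q,u_p)\subseteq Q$), so they are constant, equal to a fixed set $T_p := \delta(Q, u_p^{j_0})$, for all $j \ge j_0$ with $j_0 \le |Q|$ (this set is setwise fixed by $u_p$). Combining the two inclusions for $t+k \ge j_0$ yields $\delta(Q, x\,u_p^{k}) = T_p$, a value \emph{independent of the path} $x$. Thus iterating any cycle at least $|Q|$ times drives the image to the fixed set $T_p$.

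The payoff is a strong ``forgetting'' property. Because $x$ is a suffix of the infinite word obtained by repeating $u_p$, the word $x$ — and hence the arrival image $\delta(Q,x)$ — depends only on the length $|x|$, and by the lemma this image already equals $T_p$ once $|x| \ge |u_p|\cdot j_0$. Hence each cycle state $p$ admits at most $|u_p|\,(j_0+1) \le |P|\,(|Q|+1)$ distinct arrival images, and — crucially — this bound does not compound along a chain of cycles, since the arrival image at $p$ is determined by the path length alone and not by how often earlier cycles were iterated. Propagating images across the at most $|P|$ transient (non-cycle) states, whose outgoing behaviour is forced, each such state likewise carries only polynomially many images. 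Therefore the set of reachable pairs $(S,p)$ has polynomial size and can be produced by a fixed-point/BFS that, at a cycle state with arrival image $S_{\mathrm{arr}}$, generates the images $\delta(S_{\mathrm{arr}},u_p^{k})$ for $0 \le k \le |Q|$ together with $T_p$, terminating by the stabilisation bound; testing each reachable pair for $|S|=1$ and coaccessibility of $p$ then decides the problem in polynomial time. I expect the main difficulty to be the honest verification of the stabilisation lemma — in particular the two inclusions and the claim that the arrival image depends only on $|x|$ — and the careful bookkeeping showing that the per-cycle polynomial bound genuinely does not multiply across the DAG of strongly connected components.
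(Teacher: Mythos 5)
The paper does not actually prove this proposition: it is imported from \cite{DBLP:conf/ictcs/Hoffmann20} and stated in the concluding discussion without proof, so there is no in-paper argument to compare against. Judged on its own, your proof is essentially correct. The reduction to reachability of pairs $(\delta(Q,w),\mu(p_0,w))$ with $|S|=1$ and $p$ coaccessible is sound (completeness of $\mathcal A$ plus the two-sided-ideal property of synchronizing words), and your stabilisation lemma checks out: with $x = s u_p^t$ and $u_p = rs$, the identities $u_p^{t+k+1} = r\,(x u_p^k)$ and $x u_p^k = s\,u_p^{t+k}$ give the sandwich $\delta(Q,u_p^{t+k+1}) \subseteq \delta(Q,x u_p^k) \subseteq \delta(Q,u_p^{t+k})$; since $\delta(Q,u_p^{j+1}) \subseteq \delta(Q,u_p^j)$ is a decreasing chain that is constant from some $j_0 \le |Q|$ on, $\delta(Q,x u_p^k) = T_p$ once $t+k \ge j_0$, independently of $x$. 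Your remark that the sandwich only works because the computation starts from the full set $Q$ (so the prefix $r$ is absorbed) is exactly where the hypothesis $L(\mathcal B_{p_0,\{p\}}) \subseteq \suff(u_p^*)$ is indispensable: without it the orbit $\delta(S,u_p^k)$ can have superpolynomial period, which is precisely what makes $ab^*a$ hard. The length-determinism is also right: two words in $\suff(u_p^*)$ of equal length coincide, so arrival images at cycle states are indexed by length alone and saturate to $T_p$, which kills compounding across the DAG of strongly connected components.

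One local repair is needed. Your justification at transient states --- ``whose outgoing behaviour is forced'' --- is false as stated: transient states may branch on several letters (it is states \emph{on} a cycle that have at most one letter continuing inside their SCC). The correct, and easy, reason the bound survives is that any path segment visiting only non-cycle states after its first state cannot repeat a state, hence has length less than $|P|$; since $\mathcal B$ is fixed there are only constantly many such segment words, and every pair at a transient state is the image of a pair at the last preceding cycle state (or of $Q$ at $p_0$) under one of them. With that fix, the reachable pair set is polynomial and your BFS decides the problem in polynomial time. A mildly simpler endgame avoiding the pair graph: by the same saturation, a synchronizing word in $L(\mathcal B)$ can be normalised so that each cycle is iterated at most $|Q|+1$ times, leaving $O(|Q|^{|P|})$ explicit candidate words to test, which is polynomial because $\mathcal B$ (hence $|P|$) is fixed.
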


Note that Proposition~\ref{prop:NPc} implies that $ab^*a$
gives an \NP-complete CSP. However, in the letter-bounded
case there exist constraint languages giving \NP-complete problems
for which this is not implied by Proposition~\ref{prop:NPc},
for example: $ab^*ba$, $ab^*ab$, $aa^*abb^*a$ or $ba^*b \cup a$.
Also, Proposition~\ref{prop:NP_in_P}
is weaker than our Proposition~\ref{prop:stricly_bounded_P}
in the case of letter-bounded constraints.
For example, it does not apply to $ab^*b$, every PDFA for this languages
has a loop exclusively labelled by the letter~$b$
and reachable after reading the letter $a$ from the start state, and
so words along this loop cannot have a word starting with $a$ as a suffix.

For general bounded languages, let us note the following implication of Propositions~\ref{prop:hom_lower_bound_complexity}
and~\ref{prop:stricly_bounded_P}.

\todo{Sind thin languages eigentlich die in $w^*$?}

\begin{propositionrep}
 Let $u,v \in \Sigma^*$. If $L \subseteq u^* v^*$ is regular, then $L$\textsc{-Constr-Sync} is solvable
 in polynomial time.
\end{propositionrep}
\begin{proof}
 Let $\Gamma = \{a,b\}$
 and $\varphi : \Gamma^* \to \Sigma^*$
 be the homomorphism given by $\varphi(a) = u$
 and $\varphi(b) = v$.
 Define $N = \{ (i,j) \mid u^i v^j \in L \}$
 and set $L' = \{ a^i b^j \mid (i,j) \in N \} \subseteq a^* b^*$.
 Then, $\varphi(L') = L$
 and by Proposition~\ref{prop:stricly_bounded_P}
 we have $L'\textsc{-Constr-Sync} \in \PTIME$.
 So, with Proposition~\ref{prop:hom_lower_bound_complexity}
 also $L\textsc{-Constr-Sync} \in \PTIME$.~\qed
\end{proof}

Next, in Proposition~\ref{prop:np_complete_case}, we give an example
of a bounded regular language yielding an $\NP$-complete synchronization problem,
 but for which this is
 not directly implied by the results we have so far.

 
 \begin{propositionrep}\label{prop:np_complete_case}
  The problem $((ab)(ba)^*(ab))$\textsc{-Constr-Sync} is $\NP$-complete.
 \end{propositionrep}
 \begin{proof}
 We give a reduction from $\textsc{DisjointSetTransporter}$
 for unary input semi-automata, which is $\NP$-complete
 by Theorem~\ref{prop:set_transporter_np_complete}.
 Let $\mathcal A = (\{c\}, Q, \delta)$
 with $S, T \subseteq Q$ being disjoint.
 Construct the automaton $\mathcal A' = (\{a,b\}, Q', \delta')$
 with
 \[
  Q' = Q \cup \{ q_a \mid q \in Q \} \cup \{ q_b \mid q \in Q \} \cup \{ t \}. 
 \]
 Fix some $\hat s \in S$.
 Then, for $q \in Q$, set
\begin{align*}
    \delta'(t,   x) & = t \mbox{ for } x \in \Sigma \mbox{ and }
    \delta'(q,   x) = q_x \mbox{ for } x \in \Sigma; \\
    \delta'(q_b, b) & = t   \mbox{ for } q_b \in Q'  \mbox{ and } 
    \delta'(q_b, a) = \delta(q, c); \\
    \delta'(q_a, a) & = q_a \mbox{ for } q_a \in Q'; \\
    \delta'(q_a, b) & = \left\{ 
    \begin{array}{ll}
     \hat s & \mbox{if } q \in Q \setminus (T \cup S); \\
     q      & \mbox{if } q \in S; \\
     t      & \mbox{if } q \in T.
    \end{array}
    \right.
\end{align*}
 Then, there exists $n \ge 0$
 with $\delta(S, c^n) \subseteq T$
 if and only if $\mathcal A'$ has a synchronizing word in $L$.
 
 First, suppose there exists $n \ge 0$
 such that $\delta(S, c^n) \subseteq T$.
 By construction, $S \subseteq \delta'(Q', ab) \subseteq S \cup \{t\} \cup Q_b$,
 or more precisely $\delta'(Q', ab) = S \cup \{t\} \cup \{ q_b \mid q \in \delta(Q, c) \}$.
 Note that, as $S$ and $T$ are disjoint,
 we must have $n > 0$.
 As, for any $q \in Q$, $\delta'(q, ba) = \delta(q, c)$
 and $\delta(q_b, b) = t$,
 we find $\delta'(\delta'(Q', ab), (ba)^n) \subseteq T \cup \{t\}$,
 where we needed $n > 0$ to map those states in $\{ q_b \mid q \in \delta(Q, c) \}$
 to $T$.
 Finally, $\delta(T \cup \{t\}, ab) = \{t\}$
 and so $\delta'(Q', ab(ba)^nab) = \{t\}$.

 Conversely, suppose there exists $n \ge 0$
 such that $\delta'(Q', ab(ba)^nab)$
 is a singleton set. So, as $t$ is a sink state,
 $\delta'(Q', ab(ba)^nab) = \{ t \}$.
 By construction, a state in $Q'$ is mapped to $t$
 by $ab$
 if and only if it is contained in $T \cup \{t\}$.
 Hence, $\delta'(Q', ab(ba)^n) \subseteq T \cup \{t\}$.
 As before, $\delta'(Q', ab) = S \cup \{t\} \cup \{ q_b : q \in \delta(Q, c) \}$.
 In particular, we must have $\delta'(S, (ba)^n) \subseteq T \cup \{t\}$.
 As, for any $q \in Q$, $\delta'(q, ba) = \delta(q, c)$,
 this implies that $\delta'(S, (ba)^n) \subseteq T$
 and that for $u = c^n$ we have $\delta(S, c^n) \subseteq T$.

 By Theorem~\ref{thm:sparse_in_NP}, $L\textsc{-Constr-Sync}\in \NP$
 and by the above reduction the problem is $\NP$-complete.
 \end{proof}

 By Proposition~\ref{prop:np_complete_case},
 for the homomorphism $\varphi : \{a,b\}^* \to \{a,b\}^*$
 given by $\varphi(a) = ab$ and $\varphi(b) = ba$
 both problems $ab^*a$ and $\varphi(ab^*a) = ab(ba)^*ab$
 are \NP-complete. So, this is a homomorphisms
 which preserves, in this concrete instance, the computational complexity.
 But its image $\{ab,ba\}$ is not even a self-synchronizing code.\todo{Ich glaube in dem fall schon. die reduktion sollte immer gehen...}
 However, I do not know if this homomorphism always preserves the complexity.
 Similary, I do not know
 if the condition from Theorem~\ref{thm:constr_sync_hom_strongly_self_sync}
 characterizes those homomorphisms which preserve the complexity.

 In the reduction used in Lemma~\ref{lem:np_hardness}
 the resulting automaton has a sink state. However, in general, for questions
 of synchronizability it makes a difference if we have a sink state
 or not, at least with respect to the \v{C}ern\'y conjecture~\cite{Cer64},
 as for automata with a sink state this conjecture holds true,
 even with the better bound\footnote{In~\cite{DBLP:journals/tcs/Rystsov97}
 erroneously the bound $n(n+1)/2$ was reported as being sharp, but the overall argument
 in fact works to yield the sharp bound $n(n-1)/2$.}
 $\frac{n(n-1)}{2}$~\cite{DBLP:journals/tcs/Rystsov97,DBLP:journals/tcs/Volkov09}. However,
 even in~\cite{DBLP:conf/mfcs/FernauGHHVW19}
 certain reductions establishing \PSPACE-completeness
 use only automata with a sink state. Hence, for hardness
 these automata are sufficient at least in certain instances.
 So, it might be interesting to know
 if in terms of computational complexity of the CSP,
 we can, without loss of generality, limit ourselves to input automata
 with a sink state. The methods of proof for the letter-bounded constraints
 show that in this case, we can actually do this, as these input automata
 are sufficient to establish all cases of intractability.

 Lastly, let us mention the following related problem\footnote{This was actually suggested
 by a reviewer of a previous version.} one could come up with.
 Fix a deterministic and complete semi-automaton~$\mathcal A$.
 Then, for input PDFAs~$\mathcal B$, what is the computational complexity to determine
 if $\mathcal A = (\Sigma, Q, \delta)$ has a synchronizing word in $L(\mathcal B)$?
 As the set of synchronizing words 
 $ \{ w \in \Sigma^* : |\delta(Q, w)| = 1 \} = \bigcup_{q \in Q} \bigcap_{q' \in Q} L((\Sigma, Q, \delta, q', \{q\})) $
 is a regular language, we have to test
 for non-emptiness of intersection of this fixed regular language 
 with $L(\mathcal B)$. This could be done in \NL, hence in \PTIME.

\smallskip \noindent {\footnotesize
\textbf{Acknowledgement.} I thank  anonymous reviewers
of a previous version for detailed feedback.
I also sincerely thank the reviewers of the current version (at least one reviewers saw both versions) for careful reading and giving valuable feedback to improve my scientific writing
and pointing to two instances were I overlooked, in retrospect, two simple conclusions.}

\bibliographystyle{splncs04}
\bibliography{ms} 

\begin{thebibliography}{HKMW17}

\bibitem[FK17]{fernau2017problems}
Henning Fernau and Andreas Krebs.
\newblock Problems on finite automata and the exponential time hypothesis.
\newblock {\em Algorithms}, 10(1):24, 2017.

\bibitem[GKRS10]{DBLP:journals/ijfcs/GawrychowskiKRS10}
Pawel Gawrychowski, Dalia Krieger, Narad Rampersad, and Jeffrey~O. Shallit.
\newblock Finding the growth rate of a regular or context-free language in
  polynomial time.
\newblock {\em Int. J. Found. Comput. Sci.}, 21(4):597--618, 2010.

\bibitem[HKMW17]{HerrmannKMW17}
A.~Herrmann, M.~Kutrib, A.~Malcher, and M.~Wendlandt.
\newblock Descriptional complexity of bounded regular languages.
\newblock {\em Journal of Automata, Languages and Combinatorics},
  22(1-3):93--121, 2017.

\bibitem[Hof19]{DBLP:conf/cai/Hoffmann19}
Stefan Hoffmann.
\newblock Commutative regular languages - properties and state complexity.
\newblock In Miroslav Ciric, Manfred Droste, and Jean{-}{\'{E}}ric Pin,
  editors, {\em Algebraic Informatics - 8th International Conference, {CAI}
  2019, Ni{\v{s}}, Serbia, June 30 - July 4, 2019, Proceedings}, volume 11545
  of {\em Lecture Notes in Computer Science}, pages 151--163. Springer, 2019.

\bibitem[Hof20a]{DBLP:conf/cocoon/Hoffmann20}
Stefan Hoffmann.
\newblock Computational complexity of synchronization under regular commutative
  constraints.
\newblock In Donghyun Kim, R.~N. Uma, Zhipeng Cai, and Dong~Hoon Lee, editors,
  {\em Computing and Combinatorics - 26th International Conference, {COCOON}
  2020, Atlanta, GA, USA, August 29-31, 2020, Proceedings}, volume 12273 of
  {\em Lecture Notes in Computer Science}, pages 460--471. Springer, 2020.

\bibitem[Hof20b]{DBLP:conf/ictcs/Hoffmann20}
Stefan Hoffmann.
\newblock On a class of constrained synchronization problems in {NP}.
\newblock In Gennaro Cordasco, Luisa Gargano, and Adele Rescigno, editors, {\em
  Proceedings of the 21th Italian Conference on Theoretical Computer Science,
  {ICTCS} 2020, Ischia, Italy}, {CEUR} Workshop Proceedings. CEUR-WS.org, 2020.

\bibitem[LT84]{DBLP:journals/eik/LatteuxT84}
Michel Latteux and Gabriel Thierrin.
\newblock On bounded context-free languages.
\newblock {\em Elektronische Informationsverarbeitung und Kybernetik (Journal
  of Information Processing and Cybernetics)}, 20(1):3--8, 1984.

\bibitem[Pin20]{Pin2020}
Jean{-}{\'{E}}ric Pin.
\newblock {\em Mathematical Foundations of Automata Theory}.
\newblock 2020.

\bibitem[SM73]{stockmeyer1973word}
Larry~J. Stockmeyer and Albert~R. Meyer.
\newblock Word problems requiring exponential time (preliminary report).
\newblock In {\em Proceedings of the fifth annual ACM Symposium on Theory of
  Computing, STOC}, pages 1--9. ACM, 1973.

\end{thebibliography}


\begin{thebibliography}{10}
\providecommand{\url}[1]{\texttt{#1}}
\providecommand{\urlprefix}{URL }
\providecommand{\doi}[1]{https://doi.org/#1}

\bibitem{adler1970similarity}
Adler, R., Weiss, B.: Similarity of Automorphisms of the Torus. American
  Mathematical Society: Memoirs of the American Mathematical Society, American
  Mathematical Society (1970)

\bibitem{Alves2020}
Alves, L.V., Pena, P.N.: Synchronism recovery of discrete event systems.
  IFAC-PapersOnLine  \textbf{53}(2),  10474--10479 (2020), 21th IFAC World
  Congress

\bibitem{DBLP:conf/icalp/AmarilliP18}
Amarilli, A., Paperman, C.: Topological sorting with regular constraints. In:
  Chatzigiannakis, I., Kaklamanis, C., Marx, D., Sannella, D. (eds.) 45th
  International Colloquium on Automata, Languages, and Programming, {ICALP}
  2018, July 9-13, 2018, Prague, Czech Republic. LIPIcs, vol.~107, pp.
  115:1--115:14. Schloss Dagstuhl - Leibniz-Zentrum f{\"{u}}r Informatik (2018)

\bibitem{Benenson2003}
Benenson, Y., Adar, R., Paz-Elizur, T., Livneh, Z., Shapiro, E.: {DNA} molecule
  provides a computing machine with both data and fuel. Proceedings of the
  National Academy of Sciences of the United States of America  \textbf{100},
  2191--2196 (2003)

\bibitem{Benenson2001}
Benenson, Y., Paz-Elizur, T., Adar, R., Keinan, E., Livneh, Z., Shapiro, E.:
  Programmable and autonomous computing machine made of biomolecules. Nature
  \textbf{414},  430--434 (2001)

\bibitem{DBLP:journals/siamcomp/BermanH77}
Berman, L., Hartmanis, J.: On isomorphisms and density of {NP} and other
  complete sets. {SIAM} J. Comput.  \textbf{6}(2),  305--322 (1977)

\bibitem{zbMATH03943051}
{Berstel}, J., {Perrin}, D.: {Theory of codes}. {Pure and Applied Mathematics,
  117. Orlando etc.: Academic Press, Inc. XIV, 433} (1985)

\bibitem{DBLP:books/daglib/0025093}
Berstel, J., Perrin, D., Reutenauer, C.: Codes and Automata, Encyclopedia of
  mathematics and its applications, vol.~129. Cambridge University Press (2010)

\bibitem{DBLP:journals/mst/BlattnerC77}
Blattner, M., Cremers, A.B.: Observations about bounded languages and
  developmental systems. Math. Syst. Theory  \textbf{10},  253--258 (1977)

\bibitem{DBLP:books/daglib/0034521}
Cassandras, C.G., Lafortune, S.: Introduction to Discrete Event Systems, Second
  Edition. Springer (2008)

\bibitem{Cer64}
{\v{C}}ern\'y, J.: Pozn\'amka k homog\'ennym experimentom s kone\v{c}n\'ymi
  automatmi. Matematicko-fyzik\'alny \v{c}asopis  \textbf{14}(3),  208--216
  (1964)

\bibitem{DBLP:journals/algorithmica/ChenI95}
Chen, Y., Ierardi, D.: The complexity of oblivious plans for orienting and
  distinguishing polygonal parts. Algorithmica  \textbf{14}(5),  367--397
  (1995)

\bibitem{DBLP:journals/et/ChoJSP93}
Cho, H., Jeong, S., Somenzi, F., Pixley, C.: Synchronizing sequences and
  symbolic traversal techniques in test generation. J. Electron. Test.
  \textbf{4}(1),  19--31 (1993)

\bibitem{DBLP:journals/dam/DassowP99}
Dassow, J., Paun, G.: On the regularity of languages generated by context-free
  evolutionary grammars. Discret. Appl. Math.  \textbf{92}(2-3),  205--209
  (1999)

\bibitem{Diekert98TR}
Diekert, V.: Makanin's algorithm for solving word equations with regular
  constraints. Report, Fakultät Informatik, Universität Stuttgart  (03 1998)

\bibitem{DBLP:journals/iandc/DiekertGH05}
Diekert, V., Guti{\'{e}}rrez, C., Hagenah, C.: The existential theory of
  equations with rational constraints in free groups is {PSPACE}-complete. Inf.
  Comput.  \textbf{202}(2),  105--140 (2005)

\bibitem{DBLP:journals/siamcomp/Eppstein90}
Eppstein, D.: Reset sequences for monotonic automata. {SIAM} J. Comput.
  \textbf{19}(3),  500--510 (1990)

\bibitem{DBLP:journals/trob/ErdmannM88}
Erdmann, M.A., Mason, M.T.: An exploration of sensorless manipulation. {IEEE}
  J. Robotics Autom.  \textbf{4}(4),  369--379 (1988)

\bibitem{DBLP:conf/mfcs/FernauGHHVW19}
Fernau, H., Gusev, V.V., Hoffmann, S., Holzer, M., Volkov, M.V., Wolf, P.:
  Computational complexity of synchronization under regular constraints. In:
  Rossmanith, P., Heggernes, P., Katoen, J. (eds.) 44th International Symposium
  on Mathematical Foundations of Computer Science, {MFCS} 2019, August 26-30,
  2019, Aachen, Germany. LIPIcs, vol.~138, pp. 63:1--63:14. Schloss Dagstuhl -
  Leibniz-Zentrum f{\"{u}}r Informatik (2019)

\bibitem{DBLP:conf/stacs/GanardiHKLM18}
Ganardi, M., Hucke, D., K{\"{o}}nig, D., Lohrey, M., Mamouras, K.: Automata
  theory on sliding windows. In: Niedermeier, R., Vall{\'{e}}e, B. (eds.) 35th
  Symposium on Theoretical Aspects of Computer Science, {STACS} 2018, February
  28 to March 3, 2018, Caen, France. LIPIcs, vol.~96, pp. 31:1--31:14. Schloss
  Dagstuhl - Leibniz-Zentrum f{\"{u}}r Informatik (2018)

\bibitem{DBLP:journals/ijfcs/GawrychowskiKRS10}
Gawrychowski, P., Krieger, D., Rampersad, N., Shallit, J.O.: Finding the growth
  rate of a regular or context-free language in polynomial time. Int. J. Found.
  Comput. Sci.  \textbf{21}(4),  597--618 (2010)

\bibitem{Ginsburg66}
Ginsburg, S.: The Mathematical Theory of Context-free Languages. McGraw-Hill
  (1966)

\bibitem{GinsburgSpanier64}
Ginsburg, S., Spanier, E.H.: Bounded {ALGOL}-like languages. Transactions of
  the American Mathematical Society  \textbf{113}(2),  333--368 (1964)

\bibitem{GinsburgSpanier66}
Ginsburg, S., Spanier, E.H.: Bounded regular sets. Proceedings of the American
  Mathematical Society  \textbf{17}(5),  1043--1049 (1966)

\bibitem{DBLP:journals/algorithmica/Goldberg93}
Goldberg, K.Y.: Orienting polygonal parts without sensors. Algorithmica
  \textbf{10}(2-4),  210--225 (1993)

\bibitem{golomb_gordon_welch_1958}
Golomb, S.W., Gordon, B., Welch, L.R.: Comma-free codes. Canadian Journal of
  Mathematics  \textbf{10},  202–209 (1958)

\bibitem{Gusev2012}
Gusev, V.V.: Synchronizing automata of bounded rank. In: Moreira, N., Reis, R.
  (eds.) Implementation and Application of Automata - 17th International
  Conference, CIAA. LNCS, vol.~7381, pp. 171--179. Springer (2012)

\bibitem{DBLP:conf/mfcs/HartmanisM80}
Hartmanis, J., Mahaney, S.R.: An essay about research on sparse {NP} complete
  sets. In: Dembinski, P. (ed.) Mathematical Foundations of Computer Science
  1980 (MFCS'80), Proceedings of the 9th Symposium, Rydzyna, Poland, September
  1-5, 1980. Lecture Notes in Computer Science, vol.~88, pp. 40--57. Springer
  (1980)

\bibitem{HerrmannKMW17}
Herrmann, A., Kutrib, M., Malcher, A., Wendlandt, M.: Descriptional complexity
  of bounded regular languages. Journal of Automata, Languages and
  Combinatorics  \textbf{22}(1-3),  93--121 (2017)

\bibitem{DBLP:conf/cocoon/Hoffmann20}
Hoffmann, S.: Computational complexity of synchronization under regular
  commutative constraints. In: Kim, D., Uma, R.N., Cai, Z., Lee, D.H. (eds.)
  Computing and Combinatorics - 26th International Conference, {COCOON} 2020,
  Atlanta, GA, USA, August 29-31, 2020, Proceedings. Lecture Notes in Computer
  Science, vol. 12273, pp. 460--471. Springer (2020)

\bibitem{DBLP:conf/ictcs/Hoffmann20}
Hoffmann, S.: On a class of constrained synchronization problems in {NP}. In:
  Cordasco, G., Gargano, L., Rescigno, A. (eds.) Proceedings of the 21th
  Italian Conference on Theoretical Computer Science, {ICTCS} 2020, Ischia,
  Italy. {CEUR} Workshop Proceedings, CEUR-WS.org (2020)

\bibitem{HopUll79}
Hopcroft, J.E., Ullman, J.D.: Introduction to Automata Theory, Languages, and
  Computation. Addison-Wesley Publishing Company (1979)

\bibitem{Hsieh1989SomeAP}
Hsieh, C., Hsu, S., Shyr, H.J.: Some algebraic properties of comma-free codes.
  Tech. rep., Kyoto University Research Information Repository (KURENAI) (1989)

\bibitem{DBLP:journals/eik/LatteuxT84}
Latteux, M., Thierrin, G.: On bounded context-free languages. Elektronische
  Informationsverarbeitung und Kybernetik (Journal of Information Processing
  and Cybernetics)  \textbf{20}(1), ~3--8 (1984)

\bibitem{Lecoutre09}
Lecoutre, C.: Constraint Networks: Techniques and Algorithms. John Wiley \&
  Sons, Ltd (2009)

\bibitem{DBLP:journals/jcss/Mahaney82}
Mahaney, S.R.: Sparse complete sets of {NP:} solution of a conjecture of
  {B}erman and {H}artmanis. J. Comput. Syst. Sci.  \textbf{25}(2),  130--143
  (1982)

\bibitem{DBLP:conf/focs/Natarajan86}
Natarajan, B.K.: An algorithmic approach to the automated design of parts
  orienters. In: 27th Annual Symposium on Foundations of Computer Science,
  Toronto, Canada, 27-29 October 1986. pp. 132--142. {IEEE} Computer Society
  (1986)

\bibitem{DBLP:journals/ijrr/Natarajan89}
Natarajan, B.K.: Some paradigms for the automated design of parts feeders. Int.
  J. Robotics Res.  \textbf{8}(6),  98--109 (1989)

\bibitem{DBLP:conf/cp/Pesant04}
Pesant, G.: A regular language membership constraint for finite sequences of
  variables. In: Wallace, M. (ed.) Principles and Practice of Constraint
  Programming - {CP} 2004, 10th International Conference, {CP} 2004, Toronto,
  Canada, September 27 - October 1, 2004, Proceedings. Lecture Notes in
  Computer Science, vol.~3258, pp. 482--495. Springer (2004)

\bibitem{Pin2020}
Pin, J.: Mathematical Foundations of Automata Theory (2020),
  \url{https://www.irif.fr/~jep/PDF/MPRI/MPRI.pdf}

\bibitem{piziak99}
Piziak, R., Odell, P.L.: Full rank factorization of matrices. Mathematics
  Magazine  \textbf{72}(3),  193--201 (1999)

\bibitem{RamadgeWonham87}
Ramadge, P.J., Wonham, W.M.: Supervisory control of a class of discrete event
  processes. SIAM Journal on Control and Optimization  \textbf{25},  206--230
  (1987)

\bibitem{DBLP:journals/ipl/Romeuf88}
Romeuf, J.: Shortest path under rational constraint. Inf. Process. Lett.
  \textbf{28}(5),  245--248 (1988)

\bibitem{DBLP:journals/tcs/Rystsov97}
Rystsov, I.: Reset words for commutative and solvable automata. Theor. Comput.
  Sci.  \textbf{172}(1-2),  273--279 (1997)

\bibitem{San2005}
Sandberg, S.: Homing and synchronizing sequences. In: Broy, M., Jonsson, B.,
  Katoen, J.P., Leucker, M., Pretschner, A. (eds.) Model-Based Testing of
  Reactive Systems. LNCS, vol.~3472, pp. 5--33. Springer (2005)

\bibitem{Trahtman09}
Trahtman, A.N.: The road coloring problem. Israel Journal of Mathematics
  \textbf{172},  51--60 (2009)

\bibitem{Vol2008}
Volkov, M.V.: Synchronizing automata and the {{\v C}}ern\'y conjecture. In:
  Mart\'{\i}n-Vide, C., Otto, F., Fernau, H. (eds.) Language and Automata
  Theory and Applications, Second International Conference, LATA. LNCS,
  vol.~5196, pp. 11--27. Springer (2008)

\bibitem{DBLP:journals/tcs/Volkov09}
Volkov, M.V.: Synchronizing automata preserving a chain of partial orders.
  Theor. Comput. Sci.  \textbf{410}(37),  3513--3519 (2009)

\bibitem{DBLP:journals/jcss/VorelR19}
Vorel, V., Roman, A.: Complexity of road coloring with prescribed reset words.
  J. Comput. Syst. Sci.  \textbf{104},  342--358 (2019)

\bibitem{wonham2019}
Wonham, W.M., Cai, K.: Supervisory Control of Discrete-Event Systems. Springer
  (2019)

\bibitem{DBLP:reference/hfl/Yu97}
Yu, S.: Regular languages. In: Rozenberg, G., Salomaa, A. (eds.) Handbook of
  Formal Languages, Volume 1: Word, Language, Grammar, pp. 41--110. Springer
  (1997)

\end{thebibliography}
\end{document}